\theoremstyle{plain}
\theoremstyle{plain}
\newtheorem{thm}{Theorem}
  \theoremstyle{definition}
  \newtheorem{defn}[thm]{Definition}
  \theoremstyle{remark}
  \newtheorem*{rem*}{Remark}
  \theoremstyle{plain}
  \newtheorem{lem}[thm]{Lemma}
\begin{document}

\title{A Combinatorial Study of Linear Deterministic Relay Networks}

\date{S. M. Sadegh Tabatabaei Yazdi and Serap A. Savari%
\thanks{The authors are with the Department of Electrical and Computer Engineering,
Texas A\&M University, College Station, TX 77843 USA. Their e-mail
addresses are \protect\url{sadegh@neo.tamu.edu} and \protect\url{savari@ece.tamu.edu}.%
}}
\maketitle
\begin{abstract}
In the last few years the so--called {}``linear deterministic''
model of relay channels has gained popularity as a means of studying
the flow of information over wireless communication networks, and
this approach generalizes the model of wireline networks which is
standard in network optimization. There is recent work extending the
celebrated max--flow/min--cut theorem to the capacity of a unicast
session over a linear deterministic relay network which is modeled
by a layered directed graph. This result was first proved by a random
coding scheme over large blocks of transmitted signals. We demonstrate
the same result with a simple, deterministic, polynomial--time algorithm
which takes as input a single transmitted signal instead of a long
block of signals. Our capacity-achieving transmission scheme for a
two--layer network requires the extension of a one--dimensional Rado--Hall
transversal theorem on the independent subsets of rows of a row--partitioned
matrix into a two--dimensional variation for block matrices. To generalize
our approach to larger networks we use the submodularity of the capacity
of a cut for our model and show that our complete transmission scheme
can be obtained by solving a linear program over the intersection
of two polymatroids. We prove that our transmission scheme can achieve
the max-flow/min-cut capacity by applying a theorem of Edmonds about
such linear programs. We use standard submodular function minimization
techniques as part of our polynomial--time algorithm to construct
our capacity-achieving transmission scheme.
\end{abstract}

\section{Introduction}

Network information theory \cite[Ch. 15]{key-7} attempts to model
aspects of large communication networks such as interference, cooperation,
and noise that are often overlooked in network optimization theory.
Relay channels \cite[\S 15.7]{key-7} are an example of a network
information problem in which there is a source, a unique destination,
and at least one intermediary transmitter--receiver pair which is
instrumental to the communication between the source and the destination.
In this paper we focus on a simplified mathematical model for the
\emph{wireless} relay channel which we describe via a directed graph
$\mathcal{N}(\mathcal{V},\mathcal{E})$, where $\mathcal{V}$ denotes
the set consisting of the source node, the destination node, and all
relay nodes. Here the source node only sends signals, the destination
node only receives signals, and a relay node can both receive signals
and transmit any function of its incoming messages. The transmission
of signals in this network obeys two rules: 1-- any signal which is
sent by node $i$ is broadcast to every node $j$ such that $ij\in\mathcal{E},$
and 2-- the signal which is received by node $j$ is a linear combination
of all signals that are broadcast to it and an independent additive
noise signal which is typically modeled by a Gaussian random variable.
Finding the capacity of a wireless relay channel has long been a challenging
and important open problem. Avestimehr, Diggavi, and Tse \cite{key-1,key-2}
recently proposed a simplified \emph{linear deterministic relay network
}model in which the effects of broadcasting, interference, and noise
are captured by linear transformations of the transmitted signals.
One motivation for the study of this model is the result in \cite{key-1,key-2}
that the capacity of any wireless relay channel with Gaussian noise
is within a constant additive factor of the capacity of a corresponding
linear deterministic model. More recent work \cite{key-15,key-16}
has connected the linear deterministic model to the approximate capacity
of other relay channels and to the design of near--optimal coding
schemes for them. We next summarize the model and some of the results
of \cite{key-1,key-2}: 

The authors of \cite{key-1,key-2} focus on \emph{layered} directed
graphs $\mathcal{N}$ with set of nodes $\mathcal{V}=\mathcal{O}_{1}\cup\mathcal{O}_{2}\cup\cdots\cup\mathcal{O}_{M}.$
Here $\mathcal{O}_{1}=\left\{ \mathcal{O}_{1}(1)\right\} =\left\{ \mathcal{S}\right\} $
and $\mathcal{O}_{M}=\left\{ \mathcal{O}_{M}(1)\right\} =\left\{ \mathcal{D}\right\} $
respectively denote the source and destination nodes, and $\mathcal{O}_{i}=\left\{ \mathcal{O}_{i}(1),\cdots,\mathcal{O}_{i}(m_{i})\right\} $
denotes the set of relay nodes in the $i$th layer for $i\in\left\{ 2,\cdots,M-1\right\} $.
Each edge in the graph is from some node in $\mathcal{O}_{i},i\in\left\{ 1,\cdots,M-1\right\} $,
to a node in $\mathcal{O}_{i+1}$. Observe that the study of an arbitrary
directed network can also be placed into this framework if one instead
studies its time--expanded representation \cite{key-1,key-2}. In
this case $\mathcal{O}_{i}$ corresponds to the possible behaviors
of the network from symbol time $(i-1)\tau$ to symbol time $i\tau-1$
for some positive integer $\tau$. 

In the layered network, every node $\mathcal{O}_{i}(j),i\in\left\{ 2,\cdots,M-1\right\} ,$
receives a column vector $\mathbf{y}_{i}^{t}(j)$ at time $t$ and
transmits a column vector $\mathbf{x}_{i}^{t}(j)$ at time $t.$ Node
$\mathcal{S}$ transmits vector $\mathbf{x}_{\mathcal{S}}^{t}=\mathbf{x}_{1}^{t}$
and node $\mathcal{D}$ receives vector $\mathbf{y}_{\mathcal{D}}^{t}=\mathbf{y}_{M}^{t}.$
The elements of all transmitted and received vectors belong to some
fixed finite field $\mathbb{F}_{q}.$ Every edge $\mathcal{O}_{i}(j)\mathcal{O}_{i+1}(k)\in\mathcal{E}$
represents a communication channel which is described by a matrix
$G_{i}(j,k)$ with entries from $\mathbb{F}_{q}$ called the \emph{transfer
function} from $\mathbf{x}_{i}^{t}(k)$ to vector $\mathbf{y}_{i+1}^{t}(j)$.
Define the block matrix $G_{i}$ by $G_{i}=\left[G_{i}(j,k)\right]$
for $j\in\left\{ 1,\cdots,m_{i+1}\right\} =\left[m_{i+1}\right]$
and $k\in\left[m_{i}\right].$ Let $\mathbf{x}_{i}^{t}$ be the transmitted
vector from layer $i$ and $\mathbf{y}_{i+1}^{t}$ be the received
vector at layer $i+1$; i.e., \[
\mathbf{x}_{i}^{t}=\left[\begin{array}{c}
\mathbf{x}_{i}^{t}(1)\\
\vdots\\
\mathbf{x}_{i}^{t}(m_{i})\end{array}\right]\mbox{and}\quad\mathbf{y}_{i+1}^{t}=\left[\begin{array}{c}
\mathbf{y}_{i+1}^{t}(1)\\
\vdots\\
\mathbf{y}_{i+1}^{t}(m_{i+1})\end{array}\right].\]
 The communication channel from layer $i$ to layer $i+1$ is characterized
by the following relationship:\begin{equation}
\mathbf{y}_{i+1}^{t}=G_{i}\cdot\mathbf{x}_{i}^{t}.\label{eq:linear1}\end{equation}
 Assume the communication session begins at time 1 and ends at time
$\tau$ and has desired rate of transmission $R;$ i.e., node $\mathcal{S}$
wishes to send message $\omega$ which is chosen randomly from a set
$\left\{ 1,\cdots,q^{\tau R}\right\} $ of messages to node $\mathcal{D}$
at the end of the session. The communication protocol proceeds from
layer to layer. Node $\mathcal{O}_{i}(j)$ transmits vectors $\mathbf{x}_{i}^{1}(j),\cdots,\mathbf{x}_{i}^{\tau}(j)$
to nodes in layer $\mathcal{O}_{i+1},$ and node $\mathcal{O}_{i+1}(k)$
transmits signals to the nodes in next layer after receiving vectors
$\mathbf{y}_{i+1}^{1}(k),\cdots,\mathbf{y}_{i+1}^{\tau}(k).$ At every
time instant $t$, vector $\mathbf{x}_{\mathcal{S}}^{t}$ is some
function of $\omega$ and $\mathbf{x}_{i}^{t}(j)$ is some function
of $\mathbf{y}_{i}^{1}(j),\cdots,\mathbf{y}_{i}^{\tau}(j).$ There
are two natural questions about this model: First, what is the \emph{capacity}
or maximum rate of information in this network? Second, among capacity--achieving
schemes, how can one optimize the duration $\tau$ and the complexity
of the relay functions used? 

The first of these questions was initially addressed in \cite{key-1,key-2}.
To study the capacity $\mathcal{C}$ of the network, we first define
a \emph{cut} $\Omega$ as a subset of the nodes $\mathcal{V}.$ A
cut separates $\mathcal{S}$ from $\mathcal{D}$ if $\mathcal{S}\in\Omega$
and $\mathcal{D}\in\bar{\Omega}=\mathcal{V}\backslash\Omega.$ The
\emph{transfer function} of the cut $\Omega$, $G(\Omega),$ is defined
as a block diagonal matrix with $(M-1)\times(M-1)$ blocks. The $i^{th}$
diagonal block, $G_{i}(\Omega)$ is the submatrix of $G_{i}$ consisting
of the transfer functions from the transmitted vectors of the nodes
in $\Omega\cap\mathcal{O}_{i}$ to the received vectors of the nodes
in $\bar{\Omega}\cap\mathcal{O}_{i+1}$ for $i\in\left[M-1\right],$
and each off--diagonal block is an all--zero matrix. Avestimehr, Diggavi
and Tse used a min--cut upper bound on the rate of transmissions \cite[Thm. 15.10.1]{key-7}
and a random coding argument to show 
\begin{thm}
\label{thm:channel-capacity} For any cut $\Omega$, $\mathcal{C}(\Omega)=\mbox{rank}(G(\Omega))=\sum_{i=1}^{M-1}\mbox{rank}(G_{i}(\Omega)).$
Furthermore, the capacity of network $\mathcal{N}$ as defined above
is $\mathcal{C}=\min_{\Omega\mbox{ separates }\mathcal{S}\mbox{ and }\mathcal{D}}\mathcal{C}(\Omega).$
\end{thm}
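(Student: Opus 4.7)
The plan is to split the theorem into three parts: the block-diagonal rank identity, the converse $\mathcal{C}\leq\min_{\Omega}\mathcal{C}(\Omega)$, and the achievability $\mathcal{C}\geq\min_{\Omega}\mathcal{C}(\Omega)$. The first two parts should be essentially immediate, while the third carries the substantive content and is where the paper's combinatorial machinery must do its work.

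First, the identity $\mbox{rank}(G(\Omega))=\sum_{i=1}^{M-1}\mbox{rank}(G_{i}(\Omega))$ is an elementary consequence of the block-diagonal structure of $G(\Omega)$: its column space decomposes as the direct sum of the column spaces of the blocks $G_{i}(\Omega)$, and dimensions add across direct sums. Second, for the converse, I would invoke the standard cut-set bound from network information theory. For any cut $\Omega$ separating $\mathcal{S}$ from $\mathcal{D}$, the linear relation \eqref{eq:linear1} forces the portion of $\mathbf{y}_{i+1}^{t}$ reaching $\bar{\Omega}\cap\mathcal{O}_{i+1}$ that depends on transmissions from $\Omega\cap\mathcal{O}_{i}$ to lie in the column space of $G_{i}(\Omega)$, a subspace of dimension $\mbox{rank}(G_{i}(\Omega))$. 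Summing across layers and applying Fano's inequality over $\tau$ channel uses yields $\tau R\leq\tau\,\mathcal{C}(\Omega)+o(\tau)$; letting $\tau\to\infty$ and minimizing over cuts completes this direction.

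Third and hardest is the achievability. Here I would construct an explicit deterministic transmission scheme of the kind announced in the abstract, rather than reprove the random-coding argument of \cite{key-1,key-2}. For a two-layer network (single intermediate layer), the problem is to choose at each relay a retransmission rule---a set of linear combinations of its received coordinates---so that the composed operator $\mathcal{S}\to\mathcal{D}$ has rank equal to $\min_{\Omega}\mathcal{C}(\Omega)$. A two-dimensional extension of the Rado--Hall transversal theorem applied to the block matrices $G_{1}$ and $G_{2}$ should produce such a selection by simultaneously exhibiting independent rows across the output partition of one layer and the input partition of the next. For general $M$, I would exploit the submodularity of $\Omega\mapsto\mathcal{C}(\Omega)$ to phrase the construction as a linear program over the intersection of two polymatroids and invoke Edmonds' polymatroid intersection theorem to conclude that an integral optimum of value $\min_{\Omega}\mathcal{C}(\Omega)$ exists.

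The main obstacle will be the two-dimensional Rado--Hall step, since the classical transversal matroid machinery selects independent rows from a single row-partitioned matrix whereas here one must coordinate selections on both the output partition of layer $i$ and the input partition of layer $i+1$, and must do so in a way that is globally tight across all cuts. Once this two-layer building block is established, the inductive extension to $M$ layers should follow from the polymatroid intersection framework, with standard submodular function minimization providing the polynomial-time subroutine needed to identify a tight cut.
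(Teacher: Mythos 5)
Your proposal is correct and follows essentially the same route as the paper: the rank identity and the converse are exactly what the paper takes from \cite{key-1,key-2} (the block-diagonal decomposition and the standard cut-set bound), and your achievability plan is precisely the paper's own deterministic program, namely the two-dimensional Rado--Hall extension (Theorem \ref{thm:matrix-transversal}) combined with the polymatroid-intersection argument (Theorems \ref{thm:Network flow} and \ref{thm:extension}) and the explicit relay-and-forward transmission scheme. The only minor misalignment is one of bookkeeping: the Rado--Hall extension by itself handles a single block matrix $G_{i}$ (rows and columns of one layer pair), while coordinating the intermediate-layer flow vector between $G_{1}$ and $G_{2}$ already requires the submodular/polymatroid induction, so even the three-layer case needs both tools---but since you invoke both, this does not affect the soundness of the outline.
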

The achievability argument in \cite{key-1,key-2} is based on a linear
scheme in which source node $\mathcal{S}$ initially encodes the message
$\omega$ as a vector in $\mathbb{F}_{q}^{\tau R}$ denoted by $\mathbf{y}_{\mathcal{S}}(\omega)$.
Node $\mathcal{S}$ and relay node $\mathcal{O}_{i}(j)$ respectively
generate transmitted signals by the linear transformations\[
\left[\begin{array}{c}
\mathbf{x}_{\mathcal{S}}^{1}\\
\vdots\\
\mathbf{x}_{\mathcal{S}}^{\tau}\end{array}\right]=F_{\mathcal{S}}\cdot\mathbf{y}_{\mathcal{S}}(\omega)\quad\mbox{and}\quad\left[\begin{array}{c}
\mathbf{x}_{i}^{1}(j)\\
\vdots\\
\mathbf{x}_{i}^{\tau}(j)\end{array}\right]=F_{i}(j)\cdot\left[\begin{array}{c}
\mathbf{y}_{i}^{1}(j)\\
\vdots\\
\mathbf{y}_{i}^{\tau}(j)\end{array}\right].\]
 It is shown in \cite{key-1,key-2} that if the encoding matrices
$F_{\mathcal{S}}$ and $F_{i}(j)$ are chosen randomly with a uniform
distribution over the space of all matrices over the field $\mathbb{F}_{q}$,
if $\tau$ is sufficiently large, and if $R\leq\mathcal{C}$, then
the destination node $\mathcal{D}$ will, with probability approaching
1, receive $\tau R$ linearly independent linear combinations of the
message vector $\mathbf{y}_{\mathcal{S}}(\omega)$ from which it will
be able to decode message $\omega.$ 

Since the complexity of the transmission scheme in \cite{key-1,key-2}
is large and increases with $\tau$, we seek a deterministic, low--complexity
transmission scheme that is capacity--achieving and processes only
one signal $\mathbf{x}_{i}^{t}(j)$ at a time for each $i$ and $j$,
i.e., $\tau=1.$ We remark that \cite{key-3} considered similar issues
for transmissions over a binary field. We will discuss the approach
of \cite{key-3} in Section 1.2.

\subsection{Our Results and Techniques}

Our algorithm has two main steps. First we propose an algorithm to
transmit signals from layer $\mathcal{O}_{i}$, $i\in\left\{ 1,\cdots,M-1\right\} ,$
to layer $\mathcal{O}_{i+1}$ in an optimal way. In the second step
we extend our algorithm to the full network and prove that it is capacity--achieving.
Since our transmission scheme manipulates only one signal $\mathbf{x}_{i}^{t}(j)$
at a time for each $i$ and $j$, we hereafter drop the time superscript. 

We define a \emph{flow} of the block matrix $G_{i}$ as follows:
\begin{defn}
Let $\mathbf{d}_{i}=\left(\ell_{i}(1),\cdots,\ell_{i}(m_{i});\ell_{i+1}(1),\cdots,\ell_{i+1}(m_{i+1})\right)$
be a vector of non--negative integers that satisfies $\sum_{j=1}^{m_{i}}\ell_{i}(j)=\sum_{j=1}^{m_{i+1}}\ell_{i+1}(j)\doteq R_{\mathbf{d}_{i}}.$
We say that matrix $G_{i}$ \emph{supports} flow $\mathbf{d}_{i}$
if there exists a full rank $R_{\mathbf{d}_{i}}\times R_{\mathbf{d}_{i}}$
submatrix $G_{\mathbf{d}_{i}}$ of $G_{i}$ such that $G_{\mathbf{d}_{i}}$
is an intersection of $\ell_{i}(j)$ columns of the $j$th column
block of $G_{i},$ $j\in\left[m_{i}\right],$ with $\ell_{i+1}(k)$
rows of the $k$th row block of $G_{i}$, $k\in\left[m_{i+1}\right]$.
(See Figure \ref{fig:example}.(a).) We further say that such a submatrix
$G_{\mathbf{d}_{i}}$ is a \emph{solution} for flow $\mathbf{d}_{i}$.
\end{defn}
For the physical interpretation of flow, suppose matrix $G_{i}$ supports
flow $\mathbf{d}_{i}.$ Consider the subvector $\mathbf{x}_{\mathbf{d}_{i}}$
of $\mathbf{x}_{i}$ and the subvector $\mathbf{y}_{\mathbf{d}_{i+1}}$
of $\mathbf{y}_{i+1}$ which correspond to the transfer matrix $G_{\mathbf{d}_{i}}$.
Furthermore, let $\mathbf{x}_{\mathbf{d}_{i}}(j)$ and $\mathbf{y}_{\mathbf{d}_{i+1}}(k)$
respectively denote the parts of these subvectors that belong to vector
$\mathbf{x}_{i}(j)$ and $\mathbf{y}_{i+1}(k)$. If the entries of
$\mathbf{x}_{i}$ which are not part of $\mathbf{x}_{\mathbf{d}_{i}}$
are set to zero, then $\mathbf{y}_{\mathbf{d}_{i+1}}$ will uniquely
determine $\mathbf{x}_{\mathbf{d}_{i}}$ since $G_{\mathbf{d}_{i}}$
is a full rank matrix. Hence $R_{\mathbf{d}_{i}}$ units of information
flow from the nodes in $\mathcal{O}_{i}$ to the nodes in $\mathcal{O}_{i+1}$
during a transmission. We next introduce the notion of the flow from
$\mathcal{O}_{1}$ to $\mathcal{O}_{M}$ supported by network $\mathcal{N}.$
For convenience we will consider a more general network $\mathcal{N}$
with an arbitrary number of nodes in the first and last layers as
opposed to the single node each at the first and last layers of relay
channel models.
\begin{defn}
\label{def:matrix-flow} Suppose non--negative integers $\ell_{1}(1),\cdots,\ell_{1}(m_{1}),$
and $\ell_{M}(1),\cdots,\ell_{M}(m_{M})$ satisfy $\sum_{j=1}^{m_{1}}\ell_{1}(j)=\sum_{j=1}^{m_{M}}\ell_{M}(j)\doteq R$.
We say that vector $\mathbf{d}=\left(\ell_{1}(1),\cdots,\ell_{1}(m_{1});\ell_{M}(1),\cdots,\ell_{M}(m_{M})\right)$
is a rate--$R$ flow supported by network $\mathcal{N}$ if for every
$i\in\left\{ 2,\cdots,M-1\right\} $ there exists non--negative integers
$\ell_{i}(1),\cdots,\ell_{i}(m_{i})$ such that vector $\mathbf{d}_{j}=\left(\ell_{j}(1),\cdots,\ell_{j}(m_{j});\ell_{j+1}(1),\cdots,\ell_{j+1}(m_{j+1})\right)$,
$j\in\left[M-1\right],$ is a rate--$R$ flow supported by matrix
$G_{j}$. (See Figure \ref{fig:example}.(b).)
\end{defn}
Every flow for network $\mathcal{N}$ is determined by the submatrices
$G_{\mathbf{d}_{i}}$ and the corresponding row and column indices
of $G_{i}$. Let us return to the case where $\mathcal{N}$ has a
single node each in the first and last layers. Suppose that network
$\mathcal{N}$ supports a rate--$R$ flow $\mathbf{d}=\left(R;R\right).$
Then given $G_{\mathbf{d}_{i}},$ $i\in\left[M-1\right],$ a simple
coding scheme that achieves rate $R$ can be defined as follows:%
\begin{figure}
\centering \label{Flo:example}\includegraphics[bb=0bp 0bp 595bp 220bp,clip,scale=0.7]{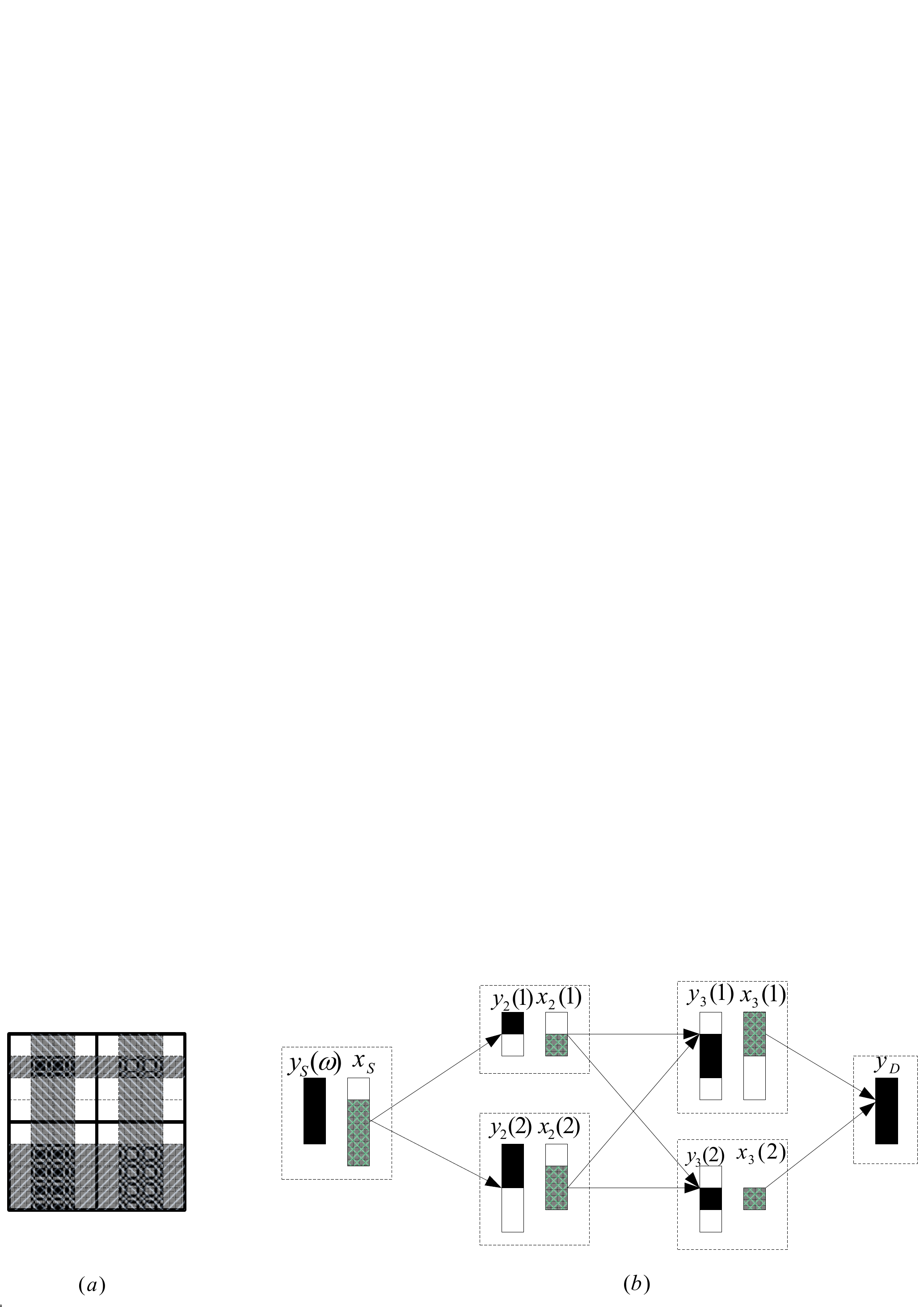}

\caption{\label{fig:example}(a) An example of a matrix flow for vector $\mathbf{d}_{i}=(2,2;1,3)$
in a matrix $G_{i}$ with four blocks. Each small square is an entry
of $G_{i}$ and matrix $G_{\mathbf{d}_{i}}$ is the intersection of
dashed rows with dashed columns. (b) An example of network flow $\mathbf{d}=(3;3)$
in a network with four layers. The solid part of vector $\mathbf{y}_{i}(j)$
denotes the entries of $\mathbf{y}_{\mathbf{d}_{i}}(j)$ and the dashed
part of vector $\mathbf{x}_{i}(j)$ denotes the entries of $\mathbf{x}_{\mathbf{d}_{i}}(j)$.
The flow vectors are $\mathbf{d}_{1}=(3;1,2)$, $\mathbf{d}_{2}=(1,2;2,1)$,
and $\mathbf{d}_{3}=(2,1;3).$}

\end{figure}

\subsubsection*{Transmission Scheme: }

Given the length--$R$ encoded vector $\mathbf{y}_{\mathcal{S}}(\omega)$,
node $\mathcal{S}$ generates vector $\mathbf{x}_{\mathcal{S}}=\mathbf{x}_{1}$
by setting $\mathbf{x}_{\mathbf{d}_{1}}$ to the vector $\mathbf{y}_{\mathcal{S}}(\omega)$
and by setting the other entries of $\mathbf{x}_{\mathcal{S}}$ to
zero. The transformation at every relay node $\mathcal{O}_{i}(j)$
is similar: after receiving vector $\mathbf{y}_{i}(j)$, node $\mathcal{O}_{i}(j)$
extracts the subvector $\mathbf{y}_{\mathbf{d}_{i}}(j)$ with length
$\ell_{i}(j)$ and sets $\mathbf{x}_{\mathbf{d}_{i}}(j)=\mathbf{y}_{\mathbf{d}_{i}}(j)$.
The remaining entries of $\mathbf{x}_{i}(j)$ are set to zero. Finally
node $\mathcal{D}$ first decodes subvector $\mathbf{y}_{\mathbf{d}_{M}}$
from the received vector $\mathbf{y}_{\mathcal{D}}=\mathbf{y}_{M}$
and then extracts the encoded message $\mathbf{y}_{\mathcal{S}}(\omega)$.
Observe that for every $i$, $\mathbf{x}_{\mathbf{d}_{i}}=G_{\mathbf{d}_{i}}^{-1}\cdot\mathbf{y}_{\mathbf{d}_{i+1}}$
and $\mathbf{x}_{\mathbf{d}_{i}}=\mathbf{y}_{\mathbf{d}_{i}}.$ These
imply that $\mathbf{y}_{\mathcal{S}}(\omega)=G_{\mathbf{d}_{1}}^{-1}G_{\mathbf{d}_{2}}^{-1}\cdots G_{\mathbf{d}_{M-1}}^{-1}\mathbf{y}_{\mathbf{d}_{M}}.$
Since the matrices $G_{\mathbf{d}_{i}}$ are nonsingular, the decoding
operation is well defined. 

Our main technical result in Section 2 is the following theorem providing
the necessary and sufficient conditions for matrix $G_{i}$ to support
a flow $\mathbf{d}_{i}$: 
\begin{thm}
\label{thm:matrix-transversal} For any subsets $U\subseteq\left[m_{i}\right]$
and $W\subseteq\left[m_{i+1}\right]$ define the block matrix $G_{i}(W,U)$
as the intersection of the row blocks of $G_{i}$ having indices in
$W$ with the column blocks of $G_{i}$ having indices in $U.$ Matrix
$G_{i}$ supports a flow $\mathbf{d}_{i}$ if and only if for all
$U\subseteq\left[m_{i}\right]$ and for all $W\subseteq\left[m_{i+1}\right]$ 

\begin{equation}
\mbox{rank}(G_{i}(W,U))\geq\sum_{j\in U}\ell_{i}(j)+\sum_{k\in W}\ell_{i+1}(k)-R_{\mathbf{d}_{i}}.\label{eq:transversal-condition}\end{equation}

\end{thm}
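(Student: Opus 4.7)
The plan is to prove the two implications of the biconditional separately.

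For necessity, suppose $G_i$ supports $\mathbf{d}_i$ via a full rank $R_{\mathbf{d}_i}\times R_{\mathbf{d}_i}$ submatrix $G_{\mathbf{d}_i}$. Fix $U\subseteq[m_i]$ and $W\subseteq[m_{i+1}]$, and restrict $G_{\mathbf{d}_i}$ to its $\sum_{j\in U}\ell_i(j)$ columns lying in column blocks in $U$ and its $\sum_{k\in W}\ell_{i+1}(k)$ rows lying in row blocks in $W$. This restriction is a submatrix of $G_i(W,U)$, and it is obtained from the full rank matrix $G_{\mathbf{d}_i}$ by deleting $R_{\mathbf{d}_i}-\sum_{j\in U}\ell_i(j)$ columns and $R_{\mathbf{d}_i}-\sum_{k\in W}\ell_{i+1}(k)$ rows. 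Since each individual deletion drops the rank by at most one, its rank, and hence $\mbox{rank}(G_i(W,U))$, is at least $\sum_{j\in U}\ell_i(j)+\sum_{k\in W}\ell_{i+1}(k)-R_{\mathbf{d}_i}$.

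For sufficiency, assume (\ref{eq:transversal-condition}) and construct a solution $G_{\mathbf{d}_i}$ by induction on $R_{\mathbf{d}_i}$. The base case $R_{\mathbf{d}_i}=0$ is the empty submatrix. For the inductive step, specializing (\ref{eq:transversal-condition}) to $U=\{j:\ell_i(j)>0\}$ and $W=\{k:\ell_{i+1}(k)>0\}$ forces $\mbox{rank}(G_i(W,U))\geq R_{\mathbf{d}_i}>0$, so there exist indices $j^{*}$ and $k^{*}$ with $\ell_i(j^{*}),\ell_{i+1}(k^{*})>0$ such that the block $G_i(k^{*},j^{*})$ contains a nonzero entry $(r,c)$. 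The plan is to commit $(r,c)$ to the solution, pivot it away from the rest of $G_i$, and apply the inductive hypothesis to the resulting smaller matrix $\tilde{G}_i$ (row $r$ and column $c$ deleted) with the decremented flow $\tilde{\mathbf{d}}_i$ in which $\ell_i(j^{*})$ and $\ell_{i+1}(k^{*})$ are each reduced by one. The Schur complement identity $\mbox{rank}(\tilde{G}_i[W',U']) = \mbox{rank}(G_i[W'\cup\{r\},U'\cup\{c\}])-1$ (with the obvious adjustment when $r$ already belongs to $W'$ or $c$ to $U'$) converts the rank hypothesis on $\tilde{G}_i$ needed to invoke induction into one on $G_i$, which a short case analysis on whether $j^{*}\in U$ and $k^{*}\in W$ shows to be implied, with slack, by (\ref{eq:transversal-condition}).

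The main obstacle, and the genuinely two-dimensional content of the theorem, is to show that the pivot entry $(r,c)$ in $G_i(k^{*},j^{*})$ can be chosen so that the reduced $\tilde{G}_i$ truly satisfies the rank condition for $\tilde{\mathbf{d}}_i$. I would argue this by contradiction: if no such pivot exists then for every candidate nonzero $(r,c)$ in $G_i(k^{*},j^{*})$ there is a witness pair $(U_{r,c},W_{r,c})$ at which the pivoted version of (\ref{eq:transversal-condition}) fails, and using the submodularity of the rank function in both its row and column arguments one aggregates these witnesses---by taking suitable intersections and unions of the $U_{r,c}$ and $W_{r,c}$---into a single pair $(U^{*},W^{*})$ that already violates (\ref{eq:transversal-condition}) on $G_i$, contradicting the hypothesis. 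Carrying out this aggregation cleanly is the place where the one-dimensional Rado--Hall theorem no longer suffices and where I expect the bulk of the technical work to lie.
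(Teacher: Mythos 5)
Your necessity argument is correct and is essentially the paper's (the "each deleted row or column drops the rank by at most one" bound is equivalent to the submodularity-plus-monotonicity chain used there). The problem is the sufficiency half, where the proposal has a genuine gap at exactly the step you defer. Your Schur-complement bookkeeping only turns the reduced condition for $\tilde{G}_{i}$ into an instance of (\ref{eq:transversal-condition}) in the case $k^{*}\in W'$ and $j^{*}\in U'$ (and there it holds with equality, not slack). In the other cases the requirement becomes a lower bound on $\mbox{rank}\left(G_{i}\left[\mbox{rows}(W')\cup\{r\},\,\mbox{cols}(U')\cup\{c\}\right]\right)$, i.e.\ on whole blocks augmented by the \emph{specific} pivot row and column; when $k^{*}\notin W'$ and $j^{*}\notin U'$ the needed value exceeds what (\ref{eq:transversal-condition}) gives for $(W',U')$ by two, so the claim that a short case analysis shows the reduced conditions are "implied, with slack" is false as stated. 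Whether the induction goes through therefore rests entirely on choosing a good pivot $(r,c)$, and your argument for its existence is only a plan: you posit witnesses $(U_{r,c},W_{r,c})$ for every bad pivot and hope to aggregate them by unions and intersections into a single violation of (\ref{eq:transversal-condition}), but this aggregation is never carried out. It is also the step most likely to resist a routine submodularity argument: as the paper emphasizes, the relevant rank/cut function here is submodular but \emph{not} monotone, and monotonicity is what powers the standard tight-set aggregation in one-dimensional Rado--Hall-type proofs; moreover your witnesses constrain individual rows and columns, while submodularity is available only at the block level, so it is not clear the witnesses can be merged into a block-level violation at all.

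For comparison, the paper avoids pivoting and rank-one updates altogether. It inducts on the number of blocks (with the Rado--Hall Theorem as the base case when $m_{i}=1$ or $m_{i+1}=1$), locates a \emph{tight} submatrix where (\ref{eq:transversal-condition}) holds with equality (Lemma \ref{lem:tight-submatrix}), upgrades it to a \emph{proper} tight submatrix $P$ (Lemma \ref{lem:proper-submatrix}), splits $G_{i}$ into the overlapping matrices $G_{A}=\left[\begin{array}{cc}P & A\end{array}\right]$ and $G_{B}=\left[\begin{array}{c}P_{c}\\ B\end{array}\right]$ with induced flows $\mathbf{d}_{A}$, $\mathbf{d}_{B}$ in which $P$ is treated as a single block of size $\mbox{rank}(P)$, and then glues the two solutions (Lemmas \ref{lem:3}--\ref{lem:F_i flow}). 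If you want to salvage your augmenting/pivoting approach, the missing piece is a proved pivot-existence lemma; until then the sufficiency direction is not established.
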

This combinatorial property of matrices is, to our knowledge, the
first two--dimensional result of this type and may be of independent
interest in the theory of matrices. Theorem \ref{thm:matrix-transversal}
holds for matrices with entries from an arbitrary field and is therefore
more general than its application for this relay problem. We prove
the necessity of Theorem \ref{thm:matrix-transversal} by examining
the relationships of the ranks of various submatrices of block matrix
$G_{\mathbf{d}_{i}}$.

Our sufficiency argument is more technical and involves a divide--and--conquer
procedure and an inductive argument to prove the existence of $G_{\mathbf{d}_{i}}$.
The basis of our inductive argument will be the Rado--Hall transversal
theorem \cite[Ch. 7]{key-5}, which states the necessary and sufficient
conditions for the existence of independent structures in a collection
of subsets of elements of a matroid. In the special case of matrices
we have:
\begin{thm}
\label{thm:(Rado's-Theorem)}(The Rado--Hall Theorem) Let $H$ be
a matrix with column blocks labeled $1,\cdots,m$. For $U\subseteq\left[m\right]$
let $H(U)$ denote the submatrix of $H$ which is formed by the column
blocks with indices in $U.$ Given non--negative integers $\ell_{1},\cdots,\ell_{m}$,
there are $\ell_{1}+\cdots+\ell_{m}$ linearly independent columns
of $H$ with exactly $\ell_{i}$ of the columns from $H(\left\{ i\right\} ),$
$i\in\left[m\right],$ if and only if for every $U$ we have $\mbox{rank}(H(U))\geq\sum_{i\in U}\ell_{i}.$
\end{thm}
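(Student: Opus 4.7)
My plan is to handle necessity by direct verification and sufficiency by induction on the target total $R = \ell_1 + \cdots + \ell_m$. For necessity, if $I$ is a linearly independent collection of $R$ columns of $H$ containing exactly $\ell_i$ columns from $H(\{i\})$, then for any $U \subseteq [m]$ the subset $I \cap H(U)$ is linearly independent of size $\sum_{i\in U}\ell_i$, which forces $\mathrm{rank}(H(U)) \geq \sum_{i\in U}\ell_i$. The base case $R=0$ of the sufficiency direction is vacuous, so I assume $R \geq 1$ and, after relabeling the blocks, that $\ell_m \geq 1$. The inductive step will pick one column $c$ from block $m$, pass to the quotient $V/\mathrm{span}(c)$, replace $\ell_m$ by $\ell_m - 1$, and invoke the hypothesis on $R-1$; the hypothesis survives for every $U \ni m$ because both sides of the inequality drop by one, and for $U \not\ni m$ the quotient rank drops by one precisely when $c \in \mathrm{span}(H(U))$, which is dangerous only at a \emph{tight} $U$, meaning one with $\mathrm{rank}(H(U)) = \sum_{i\in U}\ell_i$. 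Thus the crux is to exhibit $c \in H(\{m\})$ that avoids $\mathrm{span}(H(U))$ for every tight $U \subseteq [m]\setminus\{m\}$.

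To produce such a $c$, I would first prove that the family $\mathcal{T}$ of tight subsets of $[m]\setminus\{m\}$ is closed under union. For $U_1,U_2 \in \mathcal{T}$, submodularity of $\mathrm{rank}$ gives $\mathrm{rank}(H(U_1\cup U_2)) + \mathrm{rank}(H(U_1\cap U_2)) \leq \sum_{i \in U_1}\ell_i + \sum_{i \in U_2}\ell_i = \sum_{i \in U_1\cup U_2}\ell_i + \sum_{i \in U_1\cap U_2}\ell_i$, and the rank hypothesis applied to $U_1\cup U_2$ and $U_1\cap U_2$ forces equality throughout, so $U_1 \cup U_2 \in \mathcal{T}$. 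Let $U^*$ be the resulting maximum element of $\mathcal{T}$. Applying the rank hypothesis to $U^* \cup \{m\}$ gives $\mathrm{rank}(H(U^*\cup\{m\})) \geq \sum_{i \in U^*}\ell_i + \ell_m = \mathrm{rank}(H(U^*)) + \ell_m \geq \mathrm{rank}(H(U^*)) + 1$, so $\mathrm{span}(H(\{m\}))$ is not contained in $\mathrm{span}(H(U^*))$, and some $c \in H(\{m\}) \setminus \mathrm{span}(H(U^*))$ exists. Because every tight $U$ is contained in $U^*$, this single $c$ lies outside $\mathrm{span}(H(U))$ for every $U \in \mathcal{T}$, as required.

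With $c$ in hand, the reduced rank condition in the quotient holds in every case analyzed above, so the inductive hypothesis produces an independent set $I'$ of $R-1$ columns of the projected matrix $H'$ with the prescribed per-block counts (in particular $\ell_m - 1$ columns from block $m$, all distinct from $c$ because $c$ projects to zero). Pulling $I'$ back to the corresponding columns $\tilde{I}$ of $H$ and adjoining $c$, a routine check using $c \neq 0$ shows that $\tilde{I} \cup \{c\}$ is linearly independent in $V$ and contains exactly $\ell_i$ columns of block $i$ for every $i$, completing the induction. The main obstacle is the selection of $c$; everything else is bookkeeping. That selection rests entirely on the union-closure of tight sets, which is the one place where the rank hypothesis and the submodularity of $\mathrm{rank}$ must be combined.
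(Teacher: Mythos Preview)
Your proof is correct. The necessity argument is immediate, and the sufficiency argument by induction on $R$ is sound: the key observation that tight subsets of $[m]\setminus\{m\}$ are closed under union (via submodularity of rank together with the hypothesis applied to both $U_1\cup U_2$ and $U_1\cap U_2$) is exactly what is needed to produce a single column $c$ avoiding the span of every tight block, and the quotient bookkeeping is handled correctly. One small point worth making explicit is that the empty set is always tight, so $U^*$ is well defined even when no nontrivial tight set exists; your argument already covers this case since $\mathrm{span}(H(\emptyset))=\{0\}$ and the hypothesis on $\{m\}$ then guarantees a nonzero $c$.

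As for comparison with the paper: the paper does not prove Theorem~\ref{thm:(Rado's-Theorem)} at all. It is quoted as a classical result from Welsh's \emph{Matroid Theory} and serves only as the base case ($m_{i+1}=1$ or $m_i=1$) for the paper's induction establishing Theorem~\ref{thm:matrix-transversal}. Your argument is a standard self-contained proof of the Rado--Hall theorem in the matrix setting; it is in the same spirit as the paper's proof of Theorem~\ref{thm:matrix-transversal}, which also pivots on identifying a tight substructure and reducing, though the paper's two-dimensional version requires a more elaborate divide-and-conquer on proper tight block submatrices rather than a simple one-column quotient.
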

In the study of matrices Theorem \ref{thm:matrix-transversal} can
be viewed as an extension of the Rado--Hall Theorem. The Rado--Hall
Theorem has many variations for matroids with different constraints
on the cardinalities of the independent sets \cite{key-6}. However,
the rank function of a matroid is, by definition, both submodular
and nondecreasing. The columns of a matrix form a ground set for a
matroid, and the rank function of a subset of columns is defined as
the dimension of the subspace spanned by these columns; this is also
true if the columns are replaced by the rows of the matrix. We next
define a natural extension of the rank function on the union of all
rows and columns of a matrix which is motivated by the cuts of a network.
Let $\mathcal{P}$ and $\mathcal{Q}$ respectively denote the set
of rows and columns of the matrix. Define the rank function of a subset
$T\subseteq\mathcal{P}\cup\mathcal{Q}$ where $T=P\cup Q$ with $P\subseteq\mathcal{P}$
and $Q\subseteq\mathcal{Q},$ to be the dimension of the spanning
subspace of the submatrix which is formed by the intersection of the
rows in $P$ and the columns which are \emph{not} in $Q.$ In Lemma
\ref{lem:9} we study this rank function for the block matrices of
network $\mathcal{N}$ and prove that it is submodular. In that discussion
the nodes of the two consecutive layers of the network respectively
represent the column and row blocks of a matrix. We comment that the
rank function arising from the study of cuts in our model has an important
difference from the rank function in earlier transversal theorems.
Observe that since the rank of both sets $\emptyset$ and $\mathcal{P}\cup\mathcal{Q}$
is zero, the rank function we introduce is not monotone. Korte and
Lovász \cite{key-24} initiated the study of a generalization of matroids
known as greedoids with rank functions which are monotone but not
necessarily submodular. They have also derived a transversal theory
of greedoids similar to the Rado--Hall Theorem \cite{key-24}. Theorem
\ref{thm:matrix-transversal} appears to be the first extension of
the Rado--Hall Theorem in which the rank function is submodular but
not monotone. This is interesting because monotonicity plays a central
role in the proofs of the previous results. 

In Section 3, we prove the following extension of Theorem \ref{thm:matrix-transversal}:
\begin{thm}
\label{thm:Network flow} A network $\mathcal{N}$ with capacity $\mathcal{C}$
supports flow $\mathbf{d}=\left(R;R\right)$ if and only if $R\leq\mathcal{C}.$
\end{thm}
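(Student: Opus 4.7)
I would prove Theorem \ref{thm:Network flow} by handling the two directions separately. For \textbf{necessity}, suppose $\mathcal{N}$ supports the flow $\mathbf{d}=(R;R)$ via layer vectors $\{\ell_i(j)\}$, and fix any cut $\Omega$ separating $\mathcal{S}$ from $\mathcal{D}$. Writing $U_i=\Omega\cap\mathcal{O}_i$ and $W_i=\bar\Omega\cap\mathcal{O}_i$, I would apply Theorem \ref{thm:matrix-transversal} to each matrix $G_i$ with the subset pair $(U_i,W_{i+1})$ to get $\mathrm{rank}(G_i(W_{i+1},U_i))\ge\sum_{j\in U_i}\ell_i(j)+\sum_{k\in W_{i+1}}\ell_{i+1}(k)-R$. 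Summing these inequalities over $i\in[M-1]$, using $\sum_j\ell_i(j)=R$ at every layer together with the boundary conditions $\ell_1(\mathcal{S})=\ell_M(\mathcal{D})=R$, telescopes to $\mathcal{C}(\Omega)\ge R$; minimizing over $\Omega$ gives $R\le\mathcal{C}$.

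For \textbf{sufficiency}, assume $R\le\mathcal{C}$ and proceed by induction on the number of layers $M$. The base case $M=2$ is handled directly: pick $R$ row and column block indices whose intersection is a full-rank $R\times R$ submatrix of $G_1$ (possible since $\mathrm{rank}(G_1)=\mathcal{C}\ge R$), and read off $\ell_1,\ell_2$ from the block counts; Theorem \ref{thm:matrix-transversal} then certifies that these vectors describe a valid flow. For the inductive step, I would pick an intermediate anchor layer $i^*$ and split $\mathcal{N}$ into the left subnetwork $\mathcal{N}_L$ (layers $1,\ldots,i^*$) and right subnetwork $\mathcal{N}_R$ (layers $i^*,\ldots,M$). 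For $W\subseteq[m_{i^*}]$ define $f_L(W)$ as the min-cut capacity of $\mathcal{N}_L$ when its sink set is restricted to $W$, and symmetrically define $f_R(U)$ for $\mathcal{N}_R$ with source set $U$. By Lemma \ref{lem:9}, both $f_L$ and $f_R$ are nondecreasing and submodular, defining polymatroids $P_L=P(f_L)$ and $P_R=P(f_R)$ on the ground set $[m_{i^*}]$; the inductive hypothesis (applied in a strengthened form that tracks the interface rate vector rather than merely the total rate) implies that an integer vector $\ell_{i^*}$ extends to a flow in $\mathcal{N}_L$ (resp.\ $\mathcal{N}_R$) exactly when $\ell_{i^*}\in P_L$ (resp.\ $\ell_{i^*}\in P_R$).

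To close the argument I would invoke Edmonds's polymatroid intersection theorem:
\[
\max\{\mathbf{1}^T\ell_{i^*}:\ell_{i^*}\in P_L\cap P_R\}=\min_{W\subseteq[m_{i^*}]}\bigl(f_L(W)+f_R([m_{i^*}]\setminus W)\bigr),
\]
with an integer-valued optimizer. For every $W$, concatenating the optimal min-cut of $\mathcal{N}_L$ with sinks $W$ and the optimal min-cut of $\mathcal{N}_R$ with sources $[m_{i^*}]\setminus W$ yields a cut of $\mathcal{N}$ whose capacity equals $f_L(W)+f_R([m_{i^*}]\setminus W)$, so the right-hand side above is at least $\mathcal{C}\ge R$. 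Thus an integer $\ell_{i^*}\in P_L\cap P_R$ with $\mathbf{1}^T\ell_{i^*}\ge R$ exists; reducing components to bring the total to exactly $R$ and recursing on each subnetwork assembles the desired flow.

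The main obstacle I anticipate is the strengthened induction: characterizing flow-extendability of $\ell_{i^*}$ as polymatroid membership is what links Theorem \ref{thm:matrix-transversal} to Edmonds's theorem, and it requires Lemma \ref{lem:9} in an essential way together with careful control over the interface rates. One must also verify that a pair of subnetwork cuts really does concatenate into a cut of the full network whose capacity is the sum of the two subnetwork capacities, with no cross-term double-counted at the anchor layer.
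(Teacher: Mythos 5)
Your skeleton is the paper's: split at an anchor layer, encode the admissible interface vectors by submodular set functions on that layer, apply Edmonds' polymatroid intersection theorem to get an integer interface vector, and glue subnetwork cuts to compare the min--max value with $\mathcal{C}\geq R$. Your telescoping necessity argument (summing the conditions of Theorem \ref{thm:matrix-transversal} layer by layer) is fine and even a bit more direct than the paper's.

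The genuine gap sits exactly where you flag an ``obstacle'': the strengthened inductive statement is never pinned down, and the characterization you actually use --- ``$\ell_{i^*}$ extends to a flow in $\mathcal{N}_L$ iff $\ell_{i^*}\in P(f_L)$, with $f_L(W)$ the raw min-cut capacity of $\mathcal{N}_L$ when the sink set is $W$'' --- is only valid when the opposite boundary of the subnetwork is a single node carrying the full rate $R$. That is true at the top level, but your recursion immediately creates subnetworks whose \emph{both} boundary layers are multi-node with prescribed rate vectors (split $\mathcal{N}_L$ again and its right piece has prescribed rates on both sides). For such a piece, extendability is governed by the two-sided condition $\mathcal{C}(\Omega)\geq\sum_{\mathcal{O}_1(j)\in\Omega}\ell_1(j)+\sum_{\mathcal{O}_M(k)\in\bar{\Omega}}\ell_M(k)-R$ (the paper's Theorem \ref{thm:extension}), not by bounding interface loads by min-cut capacities: already for a two-layer piece with prescribed source vector $\ell$, Theorem \ref{thm:matrix-transversal} gives $\ell'(W)\leq\min_{U}\bigl(\mbox{rank}(G(W,U))-\sum_{j\in U}\ell(j)\bigr)+R$, which differs from the min-cut capacity by the source-side correction. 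So the functions fed to Edmonds must be the shifted ones, $f_A(T)=\min\{\mathcal{C}(\Omega_A)-\sum_{\mathcal{O}_1(j)\in\Omega_A}\ell_1(j)+R:\mathcal{O}_K\cap\bar{\Omega}_A=T\}$ and its mirror, and the statement you induct on must be the two-sided cut criterion itself; with that repair your Edmonds-plus-cut-concatenation step becomes the paper's proof. Two smaller points: the base case of the strengthened statement is Theorem \ref{thm:matrix-transversal} verbatim (the boundary vectors are prescribed, not ``read off'' from an arbitrary full-rank submatrix), and the monotonicity and normalization $f(\emptyset)=0$ required by Theorem \ref{thm:12} do not follow from Lemma \ref{lem:9} alone --- that lemma gives submodularity, while monotonicity of the interface functions needs the separate argument of Lemma \ref{lem:7} (adding an interface node to the cut only deletes a row block, so the capacity cannot increase).
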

Edmonds and Giles \cite{key-17} introduced a generalization of classical
network flow known as \emph{submodular flow,} where the classical
flow conservation constraints are replaced by submodular flow constraints
on certain subsets of nodes. We prove a submodularity property of
the cut function in Lemma \ref{lem:90}, and it is possible to show
that our notion of flow for network $\mathcal{N}$ is a special case
of submodular flow. The Edmonds and Giles theory of submodular flow
does not directly imply variations of the max--flow min--cut theorems.
We therefore study our flow using an earlier result of Edmonds \cite{key-19}.
We prove in Section 3 that the maximum flow in our setting is a linear
programming optimization over the intersection of two polymatroids.
We apply a corollary of the polymatroid intersection theorem \cite{key-19}
to show that the maximum rate of a flow is the capacity $\mathcal{C}$
of the network and that the corresponding flow can be achieved by
non--negative integer vectors $\mathbf{d}_{i}$. 

In the appendix, we demonstrate an algorithm to construct a capacity--achieving
code for network $\mathcal{N}$ which is strongly polynomial time
in the size of the graph and in the size of the matrices $G_{i}$.
In first step of the analysis we show that: 
\begin{thm}
\label{thm:alg-matrix}Given matrix $G_{i}$ and an achievable flow
vector $\mathbf{d}_{i},$ the submatrix $G_{\mathbf{d}_{i}}$ can
be computed in polynomial time.
\end{thm}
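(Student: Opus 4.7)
The plan is to convert the sufficiency direction of Theorem \ref{thm:matrix-transversal} into an explicit polynomial-time algorithm. Since that proof establishes the existence of $G_{\mathbf{d}_i}$ via a divide-and-conquer reduction, the main task is to verify that every reduction step can be carried out in polynomial time and to bound the total number of steps.

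The central subroutine is a feasibility oracle: given a submatrix $G_i'$ of $G_i$ and a candidate flow vector $\mathbf{d}_i'$, decide whether $G_i'$ supports $\mathbf{d}_i'$. By Theorem \ref{thm:matrix-transversal} this is equivalent to verifying
\begin{equation*}
\mathrm{rank}(G_i'(W,U)) - \sum_{j \in U}\ell_i'(j) - \sum_{k \in W}\ell_{i+1}'(k) + R_{\mathbf{d}_i'} \;\geq\; 0
\end{equation*}
for every $U \subseteq [m_i]$ and $W \subseteq [m_{i+1}]$. Under the encoding $P = W$ and $Q = [m_i] \setminus U$, the rank term coincides with the cut rank function studied in Lemma \ref{lem:9}, which will be shown there to be submodular; the remaining terms are modular, so the left-hand side is a submodular set function on a ground set of size $O(m_i + m_{i+1})$. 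Feasibility therefore reduces to a single instance of submodular function minimization, solvable in strongly polynomial time (e.g., by the algorithms of Iwata--Fleischer--Fujishige or Schrijver) with a polynomial number of rank queries, each polynomial in the size of $G_i$.

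With this oracle in hand, the construction proceeds as a greedy peel. Initialize $(G_i', \mathbf{d}_i') \leftarrow (G_i, \mathbf{d}_i)$ and, while $R_{\mathbf{d}_i'} > 0$, pick any column block $j$ with $\ell_i'(j) > 0$ and iterate over the columns $c$ of block $j$ in $G_i'$. For each candidate $c$, tentatively commit $c$ to the solution, decrement $\ell_i'(j)$, delete the remaining columns of block $j$ from $G_i'$, and invoke the oracle on the resulting instance. Achievability of $\mathbf{d}_i$ together with Theorem \ref{thm:matrix-transversal} guarantees that at least one candidate must pass the test; commit it and continue. Once the column demands are all satisfied, a symmetric sweep fixes the rows and assembles $G_{\mathbf{d}_i}$.

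The main technical obstacle will be correctness of the greedy step, namely showing that a successful commitment preserves the hypothesis of Theorem \ref{thm:matrix-transversal} so that no backtracking is ever required. This is a stability statement for flow support under single-element restrictions of $G_i'$, and I expect it to follow from the exchange structure inherent in the submodularity of the slack function used by the oracle: a minimizer certifies which tight sets are obstructive, and committing a column strictly outside every such set cannot create a new violated inequality. Once this is established, the complexity bound is immediate: the outer loop performs $O(R_{\mathbf{d}_i})$ iterations, each invoking the oracle on at most the block size many candidates, and each oracle call is polynomial in the size of $G_i$; hence the total running time is polynomial in the bit size of $G_i$ and $\mathbf{d}_i$.
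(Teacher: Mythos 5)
Your feasibility oracle is sound and is in fact the same device the paper uses: checking the conditions of Theorem \ref{thm:matrix-transversal} for a candidate submatrix and flow is equivalent to verifying that the slack function $\alpha(\Omega)=\mbox{rank}(G_{i}(\Omega))-\sum_{\mathcal{O}_{i}(j)\in\Omega\cap\mathcal{O}_{i}}\ell_{i}(j)+\sum_{\mathcal{O}_{i+1}(k)\in\Omega\cap\mathcal{O}_{i+1}}\ell_{i+1}(k)$ has non--negative minimum, and this is submodular by Lemma \ref{lem:9} together with the modularity of the demand terms, so strongly polynomial submodular function minimization applies. The gap is in the constructive phase. First, the mechanics of your greedy step are inconsistent as written: after committing one column $c$ of block $j$ you ``delete the remaining columns of block $j$,'' but when $\ell_{i}(j)\geq2$ the reduced instance still owes $\ell_{i}(j)-1$ columns from a block that now has none, so the oracle is being asked the wrong question. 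If instead you keep the other columns of block $j$, the oracle as you defined it cannot distinguish committed from uncommitted columns, so a passing test does not certify a solution \emph{containing} $c$, and backtracking is not excluded. Second, the no--backtracking claim is precisely the step you defer (``I expect it to follow from the exchange structure\ldots''); the sketched tight--set argument is not a proof and does not handle the case in which every column of block $j$ meets some tight set. The claim can be made rigorous, but the clean way is to refine the partition: make the committed column its own singleton block with demand $1$ and reduce the demand of block $j$ to $\ell_{i}(j)-1$; then Theorem \ref{thm:matrix-transversal}, applied to the refined block structure, shows the oracle test is equivalent to the existence of a solution through the committed columns, and the necessity direction guarantees that some candidate $c$ passes at every step. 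None of this bookkeeping appears in your proposal.

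For comparison, the paper avoids commitments entirely with a deletion greedy (Algorithm \ref{alg:Submatrix}): it removes a row (and later a column) whenever the conditions of Theorem \ref{thm:matrix-transversal} still hold for the smaller matrix, each test being one submodular minimization, and a removable row can be located among any $R_{\mathbf{d}_{i}}+1$ rows by the pigeonhole principle because any row outside a fixed solution is removable. When no further removal is possible, a counting argument shows the surviving matrix has exactly $R_{\mathbf{d}_{i}}$ rows and $R_{\mathbf{d}_{i}}$ columns and therefore is itself a solution, so correctness follows from the two directions of Theorem \ref{thm:matrix-transversal} with no exchange lemma needed. You would either have to adopt that scheme or supply the refined--partition argument above to close your greedy step.
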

We can in principle use the divide--and--conquer argument for the
proof of Theorem \ref{thm:matrix-transversal} to obtain a recursive
algorithm for this problem, but since the analysis is difficult we
take a different approach. We will demonstrate that testing the conditions
of Theorem \ref{thm:matrix-transversal} for matrix $G_{i}$ and flow
$\mathbf{d}_{i}$ is equivalent to showing that a certain submodular
function has a non--negative minimum. It is well known (see, e.g.,
\cite{key-11,key-18}) that there are polynomial--time algorithms
to find the minimum of a submodular function. Our algorithm to construct
$G_{\mathbf{d}_{i}}$ checks which rows of $G_{i}$ can be removed
without violating the conditions of Theorem \ref{thm:matrix-transversal}
and then removes them one by one. The same procedure is next repeated
for the columns. The final part of this step is to establish that
the remaining matrix is a valid choice for $G_{\mathbf{d}_{i}}$. 

The second step establishes the following result. 
\begin{thm}
\label{thm:alg-network}The capacity $\mathcal{C}$ of a deterministic
relay network $\mathcal{N}$ can be computed in polynomial time. Given
the network flow vector $\mathbf{d}=(R;R)$ with $R\leq\mathcal{C},$
the flow vectors $\mathbf{d}_{i}$ for every matrix $G_{i}$, $i\in\left\{ 1,\cdots,M-1\right\} ,$
can be computed in polynomial time.
\end{thm}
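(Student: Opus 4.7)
The plan is to split the statement into three steps, each reducible to a standard polynomial--time primitive: submodular function minimization (SFM) and optimization over the intersection of two polymatroids.

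For the first claim, by Theorem~\ref{thm:channel-capacity}, $\mathcal{C}=\min_{\Omega}\mathcal{C}(\Omega)$ over cuts $\Omega$ separating $\mathcal{S}$ from $\mathcal{D}$. The cut function $\Omega\mapsto\mathcal{C}(\Omega)$ is submodular by Lemma~\ref{lem:90}, and the side conditions $\mathcal{S}\in\Omega$, $\mathcal{D}\notin\Omega$ can be imposed by instead minimizing the submodular function $\Omega'\mapsto\mathcal{C}(\Omega'\cup\{\mathcal{S}\})$ over $\Omega'\subseteq\mathcal{V}\setminus\{\mathcal{S},\mathcal{D}\}$. The polynomial--time SFM algorithms cited in \cite{key-11,key-18} then return $\mathcal{C}$ (and an optimal cut $\Omega^\star$) in polynomial time; the evaluation oracle $\Omega\mapsto\mbox{rank}(G(\Omega))=\sum_{i=1}^{M-1}\mbox{rank}(G_i(\Omega))$ is just Gaussian elimination on the block submatrices and is hence polynomial--time.

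For the second claim, assume $R\leq\mathcal{C}$ and the target $\mathbf{d}=(R;R)$ are given. From the proof of Theorem~\ref{thm:Network flow}, the set of feasible rate--$R$ flows is characterized as the integer points in the intersection of two polymatroids whose rank functions are built from the submodular cut functions of Lemma~\ref{lem:9}. Edmonds' polymatroid intersection theorem \cite{key-19} supplies both the existence of such an integer point when $R\leq\mathcal{C}$ and a polynomial--time linear--programming routine that produces one; each call to the polymatroid--intersection oracle internally invokes SFM from the first step as a subroutine. This yields the per--layer flow vectors $\mathbf{d}_1,\ldots,\mathbf{d}_{M-1}$ as required by Definition~\ref{def:matrix-flow}. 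Finally, given each $\mathbf{d}_i$, Theorem~\ref{thm:alg-matrix} produces $G_{\mathbf{d}_i}$ in polynomial time, so iterating over $i\in\{1,\ldots,M-1\}$ completes the capacity--achieving transmission scheme.

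The main obstacle I anticipate lies in the polymatroid--intersection step: one must verify that the two polymatroid rank functions admit polynomial--time evaluation oracles (each query itself reducing to a submodular minimization over an auxiliary subnetwork) and that Edmonds' LP algorithm returns an integer optimal point whose coordinates can be identified with a consistent sequence of $\ell_i(j)$'s across all layers, as demanded by Definition~\ref{def:matrix-flow}. Once this bookkeeping is in place, composing the three steps yields an algorithm polynomial in $|\mathcal{V}|$, $|\mathcal{E}|$, $\max_i\dim G_i$, and $\log q$.
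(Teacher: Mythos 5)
Your proposal follows essentially the same route as the paper: the capacity is computed by submodular minimization of the cut function (Lemmas \ref{lem:9} and \ref{lem:90}) with Gaussian--elimination rank oracles, and the flow vectors are obtained by Edmonds' polymatroid intersection over $P_{f_A}\cap P_{f_B}$, where each evaluation of $f_A$ and $f_B$ is itself a submodular minimization over cuts of the corresponding subnetwork. The only point you leave as ``bookkeeping'' that the paper makes explicit is that the intersection LP is solved once per intermediate layer $K\in\{2,\ldots,M-1\}$, following the split of $\mathcal{N}$ into $\mathcal{N}_A$ and $\mathcal{N}_B$ in the proof of Theorem \ref{thm:extension}, which is exactly how the consistent values $\ell_K(j)$ are produced layer by layer.
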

We prove this theorem in the appendix by applying the algorithm in
\cite{key-12} for optimizing a linear function over the intersection
of two polymatroids. Theorems \ref{thm:alg-matrix} and \ref{thm:alg-network}
together imply a polynomial--time algorithm for finding a transmission
scheme for network $\mathcal{N}.$

\subsection{Related Work}

Earlier work \cite{key-13}, \cite{key-14} obtained capacity results
for a different type of deterministic relay network in which the nodes
broadcast data but the signals are received without interference.
The paper \cite{key-3} considers the same problem we address here,
but restricts $\mathbb{F}_{q}$ to a binary field. The approach of
\cite{key-3} is based on a path augmentation argument similar to
the Ford-Fulkerson algorithm (see, e.g., \cite{key-22}) and involves
a new network in which every node $\mathcal{O}_{i}(j)$ is replaced
by several nodes with each corresponding to a different entry of vector
$\mathbf{x}_{i}(j)$ or $\mathbf{y}_{i}(j).$ In the new network there
is an edge between a pair of nodes if the analogous entry in the transfer
function $G_{i}(k,j)$ is equal to one. For an edge $e$, we respectively
denote its tail and head by $x(e)$ and $y(e).$ Edges $e_{1},\cdots,e_{K}$
are said to be \emph{independent} if the transfer function from the
vector $\left(x(e_{1}),\cdots,x(e_{K})\right)$ to the vector $\left(y(e_{1}),\cdots,y(e_{K})\right)$
has full rank. The scheme in \cite{key-3} finds $\mathcal{C}$ disjoint
paths in the new network, where every cut that separates $\mathcal{S}$
from $\mathcal{D}$ intersects $\mathcal{C}$ independent edges of
these paths. There does not appear to be a natural way to extend the
approach of \cite{key-3} to arbitrary finite fields. We use a different
procedure to construct the full rank submatrices needed for our transmission
scheme.
\begin{rem*}
The missing proofs in the body of the paper can be found in the appendix.
\end{rem*}

\section{Proof of Theorem \ref{thm:matrix-transversal} }

At several steps of our proof we remove rows or columns from matrix
$G_{i}$ to find a suitable submatrix $G_{\mathbf{d}_{i}}.$ Unless
otherwise stated, assume that each such intermediate submatrix $T$
of $G_{i}$ maintains the same partition of row blocks and column
blocks as original matrix $G_{i}.$ In other words, each row (column)
of $T$ consists of a subset of the entries of some row (column) of
$G_{i}$, and the two rows (columns) have the same row (column) block
index in their respective matrices. For $U\subseteq\left[m_{i}\right]$
and $W\subseteq\left[m_{i+1}\right]$, let the block submatrix $T(W,U)$
denote the intersection of the row blocks of $T$ having indices in
$W$ with the column blocks of $T$ having indices in $U.$ Notice
that some row and/or column blocks of $T$ may be empty.

\subsection{Proof of Necessity}

For $U\subseteq\left[m_{i}\right]$ and $W\subseteq\left[m_{i+1}\right]$,
$G_{\mathbf{d}_{i}}(W,U)$ is a submatrix of $G_{i}(W,U)$. Therefore,
$\mbox{rank}(G_{i}(W,U))\geq\mbox{rank}(G_{\mathbf{d}_{i}}(W,U)).$
From the submodularity of the rank function we have:

\begin{equation}
\mbox{rank}(G_{\mathbf{d}_{i}}(\left[m_{i+1}\right],U))\leq\mbox{rank}(G_{\mathbf{d}_{i}}(W,U))+\mbox{rank}(G_{\mathbf{d}_{i}}(\left[m_{i+1}\right]\backslash W,U)).\label{eq:nece1}\end{equation}
 Since $G_{\mathbf{d}_{i}}$ is a full rank square matrix all of its
columns are independent and hence \begin{equation}
\mbox{rank}(G_{\mathbf{d}_{i}}(\left[m_{i+1}\right],U))=\sum_{j\in U}\ell_{i}(j).\label{eq:nece2}\end{equation}
 By the monotonicity of the rank function \begin{equation}
\mbox{rank}(G_{\mathbf{d}_{i}}(\left[m_{i+1}\right]\backslash W,U))\leq\mbox{rank}(G_{\mathbf{d}_{i}}(\left[m_{i+1}\right]\backslash W,\left[m_{i}\right])).\label{eq:nece3}\end{equation}
Since all of the rows of $G_{\mathbf{d}_{i}}$ are independent we
have\begin{equation}
\mbox{rank}(G_{\mathbf{d}_{i}}(\left[m_{i+1}\right]\backslash W,\left[m_{i}\right]))=\sum_{k\in\left[m_{i+1}\right]\backslash W}\ell_{i+1}(k)=R_{\mathbf{d}_{i}}-\sum_{k\in W}\ell_{i+1}(k).\label{eq:nece4}\end{equation}
 The relations (\ref{eq:nece1})--(\ref{eq:nece4}) imply the necessity
of the condition\[
\mbox{rank}(G_{i}(W,U))\geq\sum_{j\in U}\ell_{i}(j)+\sum_{k\in W}\ell_{i+1}(k)-R_{\mathbf{d}_{i}}.\]

\subsection{Proof of Sufficiency}

Assume throughout this subsection that the conditions of Theorem \ref{thm:matrix-transversal}
are satisfied. First suppose that $m_{i+1}=1$ and $W=\left\{ 1\right\} .$Then
for every set $U\subseteq\left[m_{i}\right]$ the inequality (\ref{eq:transversal-condition})
reduces to:\[
\mbox{rank}(G_{i}(\left\{ 1\right\} ,U))\geq\sum_{j\in U}\ell_{i}(j)+\ell_{i+1}(1)-R_{\mathbf{d}_{i}}.\]
 Since the definition of vector $\mathbf{d}_{i}$ implies $\ell_{i+1}(1)=R_{\mathbf{d}_{i}},$
it follows that $\mbox{rank}(G_{i}(\left\{ 1\right\} ,U))\geq\sum_{j\in U}\ell_{i}(j).$
By Theorem \ref{thm:(Rado's-Theorem)}, there exists a rank--$R_{\mathbf{d}_{i}}$
submatrix $\tilde{G_{i}}$ of matrix $G_{i}$ which consists a subset
of $\ell_{i}(j)$ columns, for $j\in\left[m_{i}\right],$ from each
column block $G_{i}(\left\{ 1\right\} ,\left\{ j\right\} ).$ Since
$\mbox{rank}(\tilde{G_{i}})=R_{\mathbf{d}_{i}},$ $\tilde{G_{i}}$
has a submatrix $G_{\mathbf{d}_{i}}$ consisting of $R_{\mathbf{d}_{i}}$
independent rows. $G_{\mathbf{d}_{i}}$ is a solution for flow $\mathbf{d}_{i}.$ 

We can similarly argue the existence of a solution $G_{\mathbf{d}_{i}}$
for flow $\mathbf{d}_{i}$ when $m_{i}=1.$ Next suppose $m_{i}\geq2$
and $m_{i+1}\geq2.$ We will use induction and a divide--and--conquer
procedure to prove the sufficiency result. For our inductive hypothesis
we assume that Theorem \ref{thm:matrix-transversal} is true for any
block matrix $G_{i}$ consisting of $n_{i+1}\times n_{i}$ blocks
where $n_{i}\leq m_{i},$ $n_{i+1}\leq m_{i+1},$ and $m_{i+1}\times m_{i}\neq n_{i+1}\times n_{i}$. 

Let $U\subseteq\left[m_{i}\right]$ and $W\subseteq\left[m_{i+1}\right]$.
We say that $G_{i}(W,U)$ is a \emph{tight} submatrix of $G_{i}$
if the inequality (\ref{eq:transversal-condition}) holds with equality
for $U$ and $W$. 
\begin{lem}
\label{lem:tight-submatrix}Either $G_{i}$ has a tight submatrix
or $G_{i}$ has a submatrix $T$ for which for all $\hat{U}\subseteq\left[m_{i}\right]$
and $\hat{W}\subseteq\left[m_{i+1}\right]$, \begin{equation}
\mbox{rank}(T(\hat{W},\hat{U}))\geq\sum_{j\in\hat{U}}\ell_{i}(j)+\sum_{k\in\hat{W}}\ell_{i+1}(k)-R_{\mathbf{d}_{i}}\label{eq:6A}\end{equation}
 and $T$ has a tight submatrix; i.e., (\ref{eq:6A}) hold with equality
for some $\tilde{U}\subseteq\left[m_{i}\right]$ and $\tilde{W}\subseteq\left[m_{i+1}\right].$
In the latter case we replace $G_{i}$ with $T$ for the rest of the
proof of Theorem \ref{thm:matrix-transversal}.
\end{lem}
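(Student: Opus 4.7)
The plan is a row--removal argument. If $G_i$ itself contains a tight submatrix, take $T=G_i$ and there is nothing more to do. Otherwise (\ref{eq:transversal-condition}) holds \emph{strictly} for every pair $(U,W)\subseteq[m_i]\times[m_{i+1}]$, and since both sides are integer--valued the gap is at least $1$ in every such instance.

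Starting with $T:=G_i$, the strategy is to delete one row of $T$ at a time while preserving (\ref{eq:6A}). Suppose the current $T$ satisfies (\ref{eq:6A}) strictly for every $(\hat U,\hat W)$. Pick any row $r$ of $T$ belonging to some block $k_0\in[m_{i+1}]$ and set $T'=T\setminus\{r\}$. For $(\hat U,\hat W)$ with $k_0\notin\hat W$ we have $T'(\hat W,\hat U)=T(\hat W,\hat U)$, so nothing changes. For $k_0\in\hat W$ the rank can drop by at most one, giving
\[
\mbox{rank}(T'(\hat W,\hat U))\;\geq\;\mbox{rank}(T(\hat W,\hat U))-1\;\geq\;\sum_{j\in\hat U}\ell_i(j)+\sum_{k\in\hat W}\ell_{i+1}(k)-R_{\mathbf{d}_i},
\]
where the final step is precisely the integer--gap argument. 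Hence $T'$ still satisfies (\ref{eq:6A}), possibly now with equality in some places.

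The iteration must halt because (\ref{eq:6A}) applied to $\hat U=[m_i]$ and $\hat W=[m_{i+1}]$ forces $\mbox{rank}(T)\geq R_{\mathbf{d}_i}$ at every stage, so $T$ cannot be pruned below $R_{\mathbf{d}_i}$ rows. Continuing to delete rows for as long as the strict form of (\ref{eq:6A}) holds on the current matrix, at the first step $s^\star$ at which strictness fails the resulting matrix still satisfies (\ref{eq:6A}) for every pair and now has equality for some $(\tilde U,\tilde W)$. This matrix is the desired $T$.

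The main subtlety is really just the bookkeeping in the second paragraph, in particular verifying that the integer--gap is spent correctly so that the weak form of (\ref{eq:6A}) survives a rank drop localized to the blocks containing $k_0$. This is short and mechanical, and no obstacle comparable to the subsequent divide--and--conquer induction that drives the rest of the proof of Theorem \ref{thm:matrix-transversal} is anticipated here.
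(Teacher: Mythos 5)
Your proposal is correct and follows essentially the same route as the paper: delete rows one at a time, use the fact that a single row removal lowers any block-submatrix rank by at most one together with the integer gap of at least one when all inequalities are strict, and terminate because the full-set condition $\mbox{rank}(T)\geq R_{\mathbf{d}_i}$ bounds the pruning (the empty matrix would violate it), so the first matrix at which strictness fails satisfies (\ref{eq:6A}) everywhere and contains a tight submatrix. The only cosmetic difference is that you localize the possible rank drop to pairs with $k_0\in\hat W$, whereas the paper simply bounds the drop uniformly; the arguments are otherwise identical.
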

By our previous argument, $G_{i}$ has one or more tight submatrices.
We call a tight submatrix $G_{i}(W,U)$ a \emph{proper} submatrix
if it is \emph{not} one of the following cases:
\begin{enumerate}
\item $|U|=m_{i}$ and $|W|=1$, or
\item $|U|=1$ and $|W|=m_{i+1}$. 
\end{enumerate}
For the rest of the proof of Theorem \ref{thm:matrix-transversal}
we need a proper submatrix. We have the following result:
\begin{lem}
\label{lem:proper-submatrix} Either (a) $G_{i}$ has a proper submatrix
or (b) it has a submatrix $T$ which satisfies (\ref{eq:6A}) for
all $\hat{U}\subseteq\left[m_{i}\right]$ and $\hat{W}\subseteq\left[m_{i+1}\right]$
and $T$ has a proper submatrix $T(W,U)$. If case (b) applies then
we replace $G_{i}$ with the corresponding submatrix $T$ for the
rest of the proof of Theorem \ref{thm:matrix-transversal}.
\end{lem}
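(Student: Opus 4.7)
The plan is to use the symmetry of the problem and the submodularity of the rank function to handle the case where $G_i$ admits no proper tight submatrix. First, by Lemma \ref{lem:tight-submatrix}, we may assume $G_i$ (or its replacement) contains at least one tight submatrix. If that submatrix is proper, case (a) already holds, so assume every tight submatrix of $G_i$ is of type 1 or type 2. The problem is symmetric under transposing $G_i$ -- swapping row and column blocks and interchanging $\ell_i$ with $\ell_{i+1}$ -- so without loss of generality there is at least one type-1 tight submatrix, say $G_i(\{k_0\},[m_i])$, giving $\text{rank}(G_i(\{k_0\},[m_i])) = \ell_{i+1}(k_0)$.

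The next move is to prune row block $k_0$ down to $\ell_{i+1}(k_0)$ linearly independent rows, discarding the remainder, and to call the resulting matrix $T$. Because each discarded row is a linear combination of the kept rows on the full column set -- and therefore on every column subset -- we have $\text{rank}(T(\hat W,\hat U)) = \text{rank}(G_i(\hat W,\hat U))$ for all $\hat W\subseteq[m_{i+1}]$ and $\hat U\subseteq[m_i]$. Hence $T$ inherits inequality (\ref{eq:6A}) with the original demands, and the families of tight pairs of $T$ and $G_i$ coincide.

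To exhibit a proper tight submatrix in $T$, I would invoke the joint submodularity of the rank function provided by Lemma \ref{lem:9}. Define the excess function $\tilde f(W,U) := \text{rank}(T(W,U)) - \sum_{j\in U}\ell_i(j) - \sum_{k\in W}\ell_{i+1}(k) + R_{\mathbf{d}_i}$; this is nonnegative by hypothesis and its zero-set is closed under an appropriate lattice operation on the pairs $(W,U)$ inherited from Lemma \ref{lem:9}. If $T$ has both a type-1 tight pair $(\{k_0\},[m_i])$ and a type-2 tight pair $([m_{i+1}],\{j_0\})$, combining them via this closure produces a tight pair of the form $(\{k_0\},\{j_0\})$, which is proper because $m_i,m_{i+1}\ge 2$. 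If only type-1 tight pairs are present in $T$, a dual pruning on column blocks -- each of which now has strict slack $\text{rank}(G_i([m_{i+1}],\{j\})) > \ell_i(j)$ -- either creates a type-2 tight pair, reducing to the previous subcase, or strictly decreases the matrix dimension; iteration terminates because the base cases $m_i=1$ and $m_{i+1}=1$ were already handled at the start of Section 2.

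The hard part will be the submodular-lattice step. Lemma \ref{lem:9} gives submodularity of the rank function on the union of row and column labels, but transferring this into a closure property for the zero-set of $\tilde f$ on block-index pairs $(W,U)$, and verifying that meeting a type-1 pair with a type-2 pair yields a pair that is simultaneously proper and tight (rather than collapsing to $(\emptyset,\emptyset)$ or $([m_{i+1}],[m_i])$), is the technical heart of the argument and will require the standing assumption $m_i,m_{i+1}\ge 2$ in an essential way.
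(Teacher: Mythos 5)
There is a genuine gap at the step you yourself flag as the ``technical heart.'' The submodularity supplied by Lemma \ref{lem:9} is submodularity over \emph{cuts} $\Omega\subseteq\mathcal{O}_{i}\cup\mathcal{O}_{i+1}$, where the pair $(W,U)$ corresponds to the cut $\Omega$ with $U=\Omega\cap\mathcal{O}_{i}$ and $W=\bar{\Omega}\cap\mathcal{O}_{i+1}$. Under this correspondence the lattice operations on pairs are
\[
(W_{1},U_{1})\wedge(W_{2},U_{2})=(W_{1}\cup W_{2},\,U_{1}\cap U_{2}),\qquad
(W_{1},U_{1})\vee(W_{2},U_{2})=(W_{1}\cap W_{2},\,U_{1}\cup U_{2}),
\]
and the zero--set of your excess function $\tilde f$ is indeed closed under these two operations (the linear part is modular for them). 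But applied to a type--1 tight pair $(\{k_{0}\},[m_{i}])$ and a type--2 tight pair $([m_{i+1}],\{j_{0}\})$ these operations return exactly the two pairs you started with; they never produce $(\{k_{0}\},\{j_{0}\})$. Obtaining $(\{k_{0}\},\{j_{0}\})$ would require submodularity of $\mbox{rank}(G_{i}(W,U))$ under the coordinatewise intersection/union of $(W,U)$, which is false: for the $2\times2$ identity matrix with singleton row and column blocks, $\mbox{rank}(G_{i}(\{1\},\{1,2\}))+\mbox{rank}(G_{i}(\{1,2\},\{1\}))=1+1$, while $\mbox{rank}(G_{i}(\{1\},\{1\}))+\mbox{rank}(G_{i}(\{1,2\},\{1,2\}))=1+2$. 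So the mixed type--1/type--2 case---which is precisely the hard case---is not resolved by your closure argument, and the subsequent ``only type--1 pairs present'' branch falls back on this broken subcase. (Your preliminary pruning of row block $k_{0}$ to a basis is fine and rank--preserving, and the type--1/type--1 and type--2/type--2 combinations do follow from the closure, matching the paper's first bullet; the failure is only in the mixed case. The termination claim for your iterated dual pruning is also left unargued, but that is secondary.)

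The paper's proof handles the mixed case with additional structure that your scheme does not create. It prunes rows from the row blocks $\{2,\cdots,m_{i+1}\}$ (not from the tight block) until no further row can be removed without violating (\ref{eq:transversal-condition}); this maximality forces a tight submatrix meeting those blocks. If that submatrix, or any third tight submatrix, is improper of the same type as $P$ or $Q$, the simple rank--subadditivity combination gives a proper tight submatrix. In the remaining situation, where $P=G_{i}(\{1\},[m_{i}])$ and $Q=G_{i}([m_{i+1}],\{1\})$ are the \emph{only} tight submatrices, the maximality of the pruning implies every row of $Q$ outside row block $1$ is independent of the other rows of $Q$, giving $\mbox{rank}(Q)=\mbox{rank}(G_{i}(1,1))+r_{i+1}(2)+\cdots+r_{i+1}(m_{i+1})$ with $r_{i+1}(j)\geq\ell_{i+1}(j)$; combined with tightness of $Q$ and inequality (\ref{eq:transversal-condition}) for $(W,U)=(\{1\},\{1\})$ this forces $G_{i}(1,1)$ to be tight, contradicting the assumption that $P$ and $Q$ were the only tight submatrices. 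That counting argument exploits exactly the information (no removable rows outside block $1$) that a purely lattice--theoretic argument lacks, which is why the naive submodular closure cannot replace it.
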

Let $P=G_{i}(W,U)$ be a proper submatrix of $G_{i}.$ Next we reorder
and relabel the row blocks and the column blocks of $G_{i}$ such
that $P=G_{i}(\left[n_{i+1}\right],\left[n_{i}\right]).$ It is straightforward
to reverse the ordering and relabeling operation at the end of our
argument. We label the four parts of matrix $G_{i}$ as the following,
where $A,B,$ and/or $L$ may possibly be empty matrices: \[
G_{i}=\left[\begin{array}{cc}
P & A\\
B & L\end{array}\right].\]
We denote by $G_{A}$ the matrix $\left[\begin{array}{cc}
P & A\end{array}\right].$ We further consider a partition of $G_{A}$ into blocks as the following:\[
G_{A}=\left[\begin{array}{c|c|c|c}
G_{i}(1,\left[n_{i}\right]) & G_{i}(1,n_{i}+1) & \cdots & G_{i}(1,m_{i})\\
\hline \vdots & \vdots & \ddots & \vdots\\
\hline G_{i}(n_{i+1},\left[n_{i}\right]) & G_{i}(n_{i+1},n_{i}+1) & \cdots & G_{i}(n_{i+1},m_{i})\end{array}\right].\]
 Next consider the vector \[
\mathbf{d}_{A}=\left(\mbox{rank}(P),\ell_{i}(n_{i}+1),\ell_{i}(n_{i}+2),\cdots,\ell_{i}(m_{i});\ell_{i+1}(1),\cdots,\ell_{i+1}(n_{i+1})\right).\]
 We verify that $\mathbf{d}_{A}$ is a valid flow vector. By the tightness
of matrix $P$ we have:\[
\mbox{rank}(P)=\sum_{j=1}^{n_{i+1}}\ell_{i+1}(j)+\sum_{k=1}^{n_{i}}\ell_{i}(k)-R_{\mathbf{d}_{i}}=\sum_{j=1}^{n_{i+1}}\ell_{i+1}(j)-\sum_{k=n_{i}+1}^{m_{i}}\ell_{i}(k).\]
Therefore we have\begin{equation}
\mbox{rank}(P)+\sum_{j=n_{i}+1}^{m_{i}}\ell_{i}(j)=\sum_{k=1}^{n_{i+1}}\ell_{i+1}(k)\doteq R_{A}.\label{eq:rank(P)}\end{equation}

\begin{lem}
\label{lem:3}Matrix $G_{A}$ supports flow $\mathbf{d}_{A}.$ 
\end{lem}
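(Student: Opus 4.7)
The plan is to reinterpret $G_A$ as a block matrix with $n_{i+1}$ row blocks (inherited from $G_i$) and $m_i-n_i+1$ column blocks, obtained by \emph{merging} the first $n_i$ original column blocks into a single block (whose content is exactly $P$), and then to invoke the inductive hypothesis of Theorem \ref{thm:matrix-transversal}. Under this coarser partition, $\mathbf{d}_A$ asks for $\mbox{rank}(P)$ independent columns from the merged block, $\ell_i(j)$ columns from each block $j\in\{n_i+1,\dots,m_i\}$, and $\ell_{i+1}(k)$ rows from each row block $k\in[n_{i+1}]$; equation (\ref{eq:rank(P)}) shows that both column and row totals equal $R_A$, so $\mathbf{d}_A$ is a legitimate flow vector for this partition. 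The new block dimensions $(n_{i+1},\,m_i-n_i+1)$ are componentwise at most $(m_{i+1},m_i)$ and strictly smaller in at least one coordinate precisely because $P$ is proper: simultaneous equality would force $n_{i+1}=m_{i+1}$ and $n_i=1$, which is exactly the second forbidden case.

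The substantive work is to verify the transversal inequality (\ref{eq:transversal-condition}) for $G_A$ under the coarser partition. Fix $\hat W\subseteq[n_{i+1}]$ and a column-block subset $\hat U$ from the coarsened index set, and let $0$ denote the merged block. Split on whether $0\in\hat U$.

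If $0\notin\hat U$, so $\hat U\subseteq\{n_i+1,\dots,m_i\}$, then $G_A(\hat W,\hat U)=G_i(\hat W,\hat U)$. Subadditivity of the column rank applied to $[n_i]$ and $\hat U$ yields
$$\mbox{rank}(G_i(\hat W,\hat U))\;\geq\;\mbox{rank}(G_i(\hat W,[n_i]\cup\hat U))-\mbox{rank}(G_i(\hat W,[n_i])).$$
The second term is at most $\mbox{rank}(P)$ because $G_i(\hat W,[n_i])$ is a row-restriction of $P$, while the first term is at least $\sum_{j=1}^{n_i}\ell_i(j)+\sum_{j\in\hat U}\ell_i(j)+\sum_{k\in\hat W}\ell_{i+1}(k)-R_{\mathbf{d}_i}$ by the hypothesis (\ref{eq:transversal-condition}) for $G_i$ applied to $[n_i]\cup\hat U$ and $\hat W$. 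Substituting the tight identity $\sum_{j=1}^{n_i}\ell_i(j)-R_{\mathbf{d}_i}=\mbox{rank}(P)-R_A$ collapses the right-hand side to $\sum_{j\in\hat U}\ell_i(j)+\sum_{k\in\hat W}\ell_{i+1}(k)-R_A$, which is the required bound.

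If instead $0\in\hat U$, then $G_A(\hat W,\hat U)=G_i(\hat W,[n_i]\cup(\hat U\setminus\{0\}))$, and applying (\ref{eq:transversal-condition}) for $G_i$ to this column set, together with the same tight identity, gives exactly the required bound $\mbox{rank}(P)+\sum_{j\in\hat U\setminus\{0\}}\ell_i(j)+\sum_{k\in\hat W}\ell_{i+1}(k)-R_A$, with the $\mbox{rank}(P)$ contribution arising precisely from rewriting $\sum_{j=1}^{n_i}\ell_i(j)-R_{\mathbf{d}_i}$. In both cases the new transversal condition holds, so the inductive hypothesis produces a solution $G_{\mathbf{d}_A}$ for $\mathbf{d}_A$ in $G_A$. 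The main obstacle is purely bookkeeping: the notation $G_A(\hat W,\hat U)$ refers to submatrices under a \emph{different} block partition than $G_i(\cdot,\cdot)$, and keeping the tight identity (\ref{eq:rank(P)}) at hand is what makes the algebra close in both cases; without the tightness of $P$ the estimate $R_A\leq R_{\mathbf{d}_i}$ would go the wrong direction and the argument would fail.
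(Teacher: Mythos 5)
Your proof is correct and follows essentially the same route as the paper: you coarsen the column partition of $G_A$ by merging the first $n_i$ blocks into a single block carrying flow $\mbox{rank}(P)$, verify the transversal condition (\ref{eq:transversal-condition}) by splitting on whether the merged block lies in $\hat U$ (using the tightness identity from (\ref{eq:rank(P)}) and rank subadditivity), and invoke the inductive hypothesis after noting the block dimensions strictly decrease because $P$ is proper. The only cosmetic difference is that the paper treats the case with the merged block first and deduces the other case from it, while you argue the two cases independently.
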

Let $G_{\mathbf{d}_{A}}=\left[\begin{array}{cc}
P_{t} & A_{t}\end{array}\right]$ be the submatrix of $G_{A}$ corresponding to a solution for flow
$\mathbf{d}_{A}$; here $P_{t}$ is a submatrix of $P$ and $A_{t}$
is a submatrix of $A.$ We let $G'_{\mathbf{d}_{A}}=\left[\begin{array}{cc}
P_{c} & A_{t}\end{array}\right],$ where $P_{c}$ is the submatrix of $P$ consisting of the rows of
$P$ that are used for the construction of $P_{t};$ $P_{t}$ is then
a submatrix of $P_{c}$ consisting of a subset of its columns. Notice
that matrices $P_{t}$ and $P_{c}$ have a natural partition into
$n_{i+1}\times n_{i}$ block matrices which is induced by the block
structure of matrix $P.$ Next let $G_{B}=\left[\begin{array}{c}
P_{c}\\
B\end{array}\right]$ and partition $G_{B}$ into blocks as follows:\[
G_{B}=\left[\begin{array}{c|c|c}
P_{c}(\left[n_{i+1}\right],1) & \cdots & P_{c}(\left[n_{i+1}\right],n_{i})\\
\hline G_{i}(n_{i+1}+1,1) & \cdots & G_{i}(n_{i+1}+1,n_{i})\\
\hline \vdots & \ddots & \vdots\\
\hline G_{i}(m_{i+1},1) & \cdots & G_{i}(m_{i+1},n_{i})\end{array}\right].\]
Next we define the vector\[
\mathbf{d}_{B}=\left(\ell_{i}(1),\cdots,\ell_{i}(n_{i});\mbox{rank}(P),\ell_{i+1}(n_{i+1}+1),\cdots,\ell_{i+1}(m_{i+1})\right).\]
 Since\[
\mbox{rank}(P)=\sum_{j=1}^{n_{i+1}}\ell_{i+1}(j)+\sum_{k=1}^{n_{i}}\ell_{i}(k)-R_{\mathbf{d}_{i}}=\sum_{k=1}^{n_{i}}\ell_{i}(k)-\sum_{j=n_{i+1}+1}^{m_{i+1}}\ell_{i+1}(j),\]
 it follows that\[
\sum_{k=1}^{n_{i}}\ell_{i}(k)=\mbox{rank}(P)+\sum_{j=n_{i+1}+1}^{m_{i+1}}\ell_{i+1}(j)\doteq R_{B}\]
 and $\mathbf{d}_{B}$ is a valid flow vector. Furthermore we have:
\begin{lem}
\label{lem:F_B flow}Matrix $G_{B}$ supports flow $\mathbf{d}_{B}.$ 
\end{lem}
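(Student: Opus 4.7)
The proof will mirror that of Lemma \ref{lem:3} with the roles of rows and columns interchanged, reducing via the inductive hypothesis to Theorem \ref{thm:matrix-transversal} applied to $G_{B}$. Note first that $G_{B}$ has $n_{i}$ column blocks and $m_{i+1}-n_{i+1}+1$ row blocks (the combined $P_{c}$-block together with the original row blocks $n_{i+1}+1,\cdots,m_{i+1}$ of $G_{i}$). Because $P=G_{i}(\left[n_{i+1}\right],\left[n_{i}\right])$ was chosen as a \emph{proper} submatrix, we are not in the excluded case $n_{i}=m_{i}$ and $n_{i+1}=1$, so the block dimensions of $G_{B}$ strictly differ from those of $G_{i}$ and the inductive hypothesis applies.

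The central observation I will establish first is that the rows of $P_{c}$ span exactly the same subspace as the rows of $P$. One containment is immediate since the rows of $P_{c}$ are a subset of those of $P$. For the other, note that $A_{t}$ has at most $R_{A}-\mbox{rank}(P)$ columns, so subadditivity of rank applied to $\left[P_{t}\ A_{t}\right]$, whose rank equals $R_{A}$, forces $\mbox{rank}(P_{t})\geq\mbox{rank}(P)$. Since $P_{t}$ is a column-submatrix of $P_{c}$ and $P_{c}$ is a row-submatrix of $P$, this yields $\mbox{rank}(P_{t})=\mbox{rank}(P_{c})=\mbox{rank}(P)$, which upgrades the row-span inclusion to equality. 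Projecting onto any column-block subset $\hat{U}\subseteq\left[n_{i}\right]$ then gives $\mbox{rowspan}(P_{c}|_{\hat{U}})=\mbox{rowspan}(P|_{\hat{U}})$.

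Armed with this identity, I verify the inequality (\ref{eq:transversal-condition}) for $G_{B}$ and $\mathbf{d}_{B}$ for every $\hat{U}\subseteq\left[n_{i}\right]$ and $\hat{W}$ chosen among the row blocks of $G_{B}$, splitting on whether the $P_{c}$-block lies in $\hat{W}$. When it does not, $G_{B}(\hat{W},\hat{U})$ coincides with $G_{i}(\hat{W},\hat{U})$ for $\hat{W}\subseteq\left\{ n_{i+1}+1,\cdots,m_{i+1}\right\}$; then subadditivity of rank on the row sets $\hat{W}$ and $\left[n_{i+1}\right]$, combined with the hypothesis (\ref{eq:transversal-condition}) for $G_{i}$ on $\left[n_{i+1}\right]\cup\hat{W}$ and the tightness of $P$, produces the required bound with $R_{\mathbf{d}_{i}}$ replaced by $R_{B}=\sum_{j=1}^{n_{i}}\ell_{i}(j)$. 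When the $P_{c}$-block is in $\hat{W}$, let $\hat{W}'$ denote the non-$P_{c}$ portion; the row-span identity lets me replace $G_{B}(\hat{W},\hat{U})$ by $G_{i}(\left[n_{i+1}\right]\cup\hat{W}',\hat{U})$ without altering the rank, and then the tightness identity $\mbox{rank}(P)-R_{B}=\sum_{k=1}^{n_{i+1}}\ell_{i+1}(k)-R_{\mathbf{d}_{i}}$ converts the hypothesis on $G_{i}$ into exactly the bound that $\mathbf{d}_{B}$ demands.

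The main obstacle is the row-span equality $\mbox{rowspan}(P_{c})=\mbox{rowspan}(P)$: this is the one step where the specific construction of $P_{c}$ dictated by the flow $\mathbf{d}_{A}$ supplied by Lemma \ref{lem:3} is used in an essential way, whereas the two case analyses that follow reduce to the rank hypothesis on $G_{i}$ together with bookkeeping against the tightness identity for $P$. Once the row-span equality is secured, the inductive hypothesis of Theorem \ref{thm:matrix-transversal} applied to $G_{B}$ produces a solution $G_{\mathbf{d}_{B}}$, completing the proof.
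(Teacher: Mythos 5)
Your proposal is correct and follows essentially the same route as the paper: you establish $\mbox{rank}(P_t)=\mbox{rank}(P_c)=\mbox{rank}(P)$ from the solution $G_{\mathbf{d}_A}$, use the blockwise row-span consequence to see that replacing $P_c$ by $P$ does not change any of the relevant ranks, verify the conditions of Theorem \ref{thm:matrix-transversal} for $G_B$ via the tightness of $P$ (this is exactly the transposed Lemma \ref{lem:3} argument, which you simply spell out inline), and invoke the inductive hypothesis using properness of $P$. The only cosmetic differences are that the paper gets $\mbox{rank}(P_t)=\mbox{rank}(P)$ directly from the full column rank of $G_{\mathbf{d}_A}$ rather than by your subadditivity count, and cites Lemma \ref{lem:3} by transposition instead of redoing the two cases.
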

Let $G_{\mathbf{d}_{B}}=\left[\begin{array}{c}
P_{ct}\\
B_{t}\end{array}\right]$ be the submatrix of $G_{B}$ corrsponding to a solution for flow
$\mathbf{d}_{B}$; here $P_{ct}$ is a submatrix of $P_{c}$ and $B_{t}$
is a submatrix of $B.$ We let $G'_{\mathbf{d}_{B}}=\left[\begin{array}{c}
P_{cr}\\
B_{t}\end{array}\right],$ where $P_{cr}$ is the submatrix of $P_{c}$ consisting of the columns
of $P_{c}$ that are used for the construction of $P_{ct};$ $P_{ct}$
is then a submatrix of $P_{cr}$ consisting of a subset of its rows.
Finally, let $L_{t}$ be the submatrix of $L$ obtained by intersecting
the set of columns with indices matching those of the columns of $A_{t}$
with the set of rows with indices matching those of $B_{t}.$ Observe
that\[
G_{\mathbf{d}_{i}}=\left[\begin{array}{cc}
P_{cr} & A_{t}\\
B_{t} & L_{t}\end{array}\right]\]
 is a submatrix of $G_{i}.$ Our final step is the following lemma:
\begin{lem}
\label{lem:F_i flow}Matrix $G_{\mathbf{d}_{i}}$ as defined above
is a solution for flow $\mathbf{d}_{i}$ for matrix $G_{i}.$ 
\end{lem}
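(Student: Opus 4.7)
The plan is to verify that $G_{\mathbf{d}_i}$ satisfies the two defining properties of a solution for flow $\mathbf{d}_i$: the correct block structure (it is the intersection of $\ell_i(j)$ columns of the $j$th column block of $G_i$ with $\ell_{i+1}(k)$ rows of the $k$th row block of $G_i$), and full rank. The block-structure verification is bookkeeping: the columns of $P_{cr}$ inherit the column-selection made for $P_{ct}$ in the $\mathbf{d}_B$-solution, contributing $\ell_i(j)$ columns from column block $j\in[n_i]$; the columns of $A_t$ contribute $\ell_i(j)$ columns from block $j\in\{n_i+1,\ldots,m_i\}$ by the definition of $\mathbf{d}_A$; and a symmetric accounting for rows, using that $P_c$'s rows are those selected for $P_t$ and that $B_t$'s rows come from the $\mathbf{d}_B$-solution, gives exactly $\ell_{i+1}(k)$ rows in each row block $k\in[m_{i+1}]$. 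The totals on each side then sum to $R_{\mathbf{d}_i}$, consistent with the identity $R_{\mathbf{d}_i}=R_A+R_B-\mathrm{rank}(P)$ that follows from the tightness of $P$.

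The substantive step is to show that $G_{\mathbf{d}_i}$ has trivial kernel. Writing a kernel vector as $\begin{pmatrix} y \\ z\end{pmatrix}$ with $y\in\mathbb{F}_q^{R_B}$ and $z\in\mathbb{F}_q^{R_A-\mathrm{rank}(P)}$, the two block equations are $P_{cr}\,y+A_t\,z=0$ and $B_t\,y+L_t\,z=0$. The key claim I will establish is that the column space of $P_{cr}$ and the column space of $A_t$ meet only in the zero vector. Granting this, the top equation forces $A_t\,z=0$; since $[P_t\mid A_t]=G_{\mathbf{d}_A}$ is invertible by Lemma~\ref{lem:3}, the columns of $A_t$ are linearly independent, so $z=0$. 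The remaining equations reduce to $\begin{pmatrix} P_{cr} \\ B_t \end{pmatrix}y=0$; since $P_{ct}$ is a row subset of $P_{cr}$, this matrix has rank at least that of $G_{\mathbf{d}_B}=\begin{pmatrix} P_{ct} \\ B_t \end{pmatrix}$, which equals $R_B$ by Lemma~\ref{lem:F_B flow}, so its $R_B$ columns are independent and $y=0$.

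For the pivotal trivial-intersection claim, I would squeeze ranks. From $[P_t\mid A_t]$ being a full-rank $R_A\times R_A$ matrix with $A_t$ contributing at most $R_A-\mathrm{rank}(P)$ columns, one concludes $\mathrm{rank}(P_t)=\mathrm{rank}(P)$ and $\mathrm{col}(P_t)\cap\mathrm{col}(A_t)=\{0\}$. Next, the inclusions $P_t\subseteq P_c$ (as columns) and $P_c\subseteq P$ (as rows), together with the parallel chain $P_{ct}\subseteq P_{cr}\subseteq P_c$ and the already-established $\mathrm{rank}(P_{ct})=\mathrm{rank}(P)$, squeeze $\mathrm{rank}(P_c)=\mathrm{rank}(P_{cr})=\mathrm{rank}(P)$. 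Since $\mathrm{col}(P_t)$, $\mathrm{col}(P_{cr})$, and $\mathrm{col}(P_c)$ all sit inside $\mathbb{F}_q^{R_A}$, all have dimension $\mathrm{rank}(P)$, and satisfy the inclusions $\mathrm{col}(P_t)\subseteq\mathrm{col}(P_c)\supseteq\mathrm{col}(P_{cr})$, the three column spaces coincide, and the trivial intersection transfers from $P_t$ to $P_{cr}$.

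The main obstacle I anticipate is the mismatch between the column set of $P$ used to form $P_t$ in the first divide step and the column set retained in $P_{cr}$ (which is pinned down only later by the $\mathbf{d}_B$-solution); these need not overlap, so a direct column-by-column matching or a Schur-complement reduction is unavailable, and $P_{cr}$ is not even square. The subspace argument sketched above sidesteps this by working with column-space invariants rather than with particular representatives, and it is robust to the independent nature of the two solution choices as long as the ranks line up. Combined with the block-structure accounting, this identifies $G_{\mathbf{d}_i}$ as a solution for flow $\mathbf{d}_i$ and completes the sufficiency proof of Theorem~\ref{thm:matrix-transversal}.
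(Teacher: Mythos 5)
Your proposal is correct, and it rests on exactly the same ingredients as the paper's proof: the block--count bookkeeping inherited from $\mathbf{d}_{A}$ and $\mathbf{d}_{B}$, the nonsingularity of $G_{\mathbf{d}_{A}}=[\, P_{t}\;\; A_{t}\,]$ and $G_{\mathbf{d}_{B}}$, and the rank sandwich $\mbox{rank}(P_{t})=\mbox{rank}(P_{c})=\mbox{rank}(P_{cr})=\mbox{rank}(P_{ct})=\mbox{rank}(P)$ (the paper's (\ref{eq:28C}), (\ref{eq:28A}), (\ref{eq:28B})). The difference is only in how these facts are assembled. The paper works with rows: it shows the rows of $[\, B_{t}\;\; L_{t}\,]$ are independent of the remaining rows of $G_{\mathbf{d}_{i}}$ (because the rows of $P_{cr}$ lie in the row span of $P_{ct}$ and $G_{\mathbf{d}_{B}}$ is nonsingular), so ranks add, and then computes $\mbox{rank}([\, P_{cr}\;\; A_{t}\,])=R_{A}$ by successively replacing $P_{cr}$ by $P_{c}$ and $P_{c}$ by $P_{t}$ inside the column span. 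You work with columns, i.e., a trivial--kernel argument: your pivotal claim $\mathrm{col}(P_{cr})\cap\mathrm{col}(A_{t})=\left\{ 0\right\} $ follows from $\mathrm{col}(P_{cr})=\mathrm{col}(P_{t})$, both being $\mbox{rank}(P)$--dimensional subspaces of $\mathrm{col}(P_{c})$ in $\mathbb{F}_{q}^{R_{A}}$; then $z=0$ because the columns of $A_{t}$ are columns of the nonsingular $G_{\mathbf{d}_{A}}$, and $y=0$ because $G'_{\mathbf{d}_{B}}$ contains $G_{\mathbf{d}_{B}}$ as a row--submatrix. These are dual packagings of the same linear algebra: your version delivers nonsingularity directly from the kernel, while the paper's delivers the explicit identity $\mbox{rank}(G_{\mathbf{d}_{i}})=\sum_{j}\ell_{i+1}(j)=R_{\mathbf{d}_{i}}$; and the obstacle you flag (no column--by--column matching between the columns chosen for $P_{t}$ and those retained in $P_{cr}$) is precisely what the paper's column--span substitutions also sidestep. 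All steps in your sketch check out against the definitions of $P_{c},P_{t},P_{cr},P_{ct}$ used in the construction.
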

To summarize the preceding argument, we have established the existence
of a solution $G_{\mathbf{d}_{i}}$ when $m_{i}=1$, when $m_{i+1}=1,$
and when $m_{i}\geq2$ and $m_{i+1}\geq2.$ Our proof of Theorem \ref{thm:matrix-transversal}
is complete.

\section{Proof of Theorem \ref{thm:Network flow}}

We prove a more general statement. Consider a network $\mathcal{N}$
with an arbitrary number of nodes in the first and last layers. For
a rate--$R$ flow vector $\mathbf{d}=\left(\ell_{1}(1),\cdots\ell_{1}(m_{1});\ell_{M}(1),\cdots,\ell_{M}(m_{M})\right)$
we show that:
\begin{thm}
\label{thm:extension}Network $\mathcal{N}$ supports rate--$R$ flow
$\mathbf{d}$ if and only if for every cut $\Omega,$ \begin{equation}
\mathcal{C}(\Omega)\geq\sum_{\mathcal{O}_{1}(j)\in\Omega}\ell_{1}(j)+\sum_{\mathcal{O}_{M}(k)\in\bar{\Omega}}\ell_{M}(k)-R.\label{eq:extension-condition}\end{equation}

\end{thm}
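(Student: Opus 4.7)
The proof splits into necessity (a direct application of Theorem \ref{thm:matrix-transversal}) and sufficiency, which I plan to handle by induction on the number of layers $M$ with the inductive step reduced to an Edmonds polymatroid intersection at the top layer. Below I sketch both directions and flag the main technical issue.

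For necessity, assume $\mathcal{N}$ supports $\mathbf{d}$. By Definition \ref{def:matrix-flow} there exist non--negative integers $\ell_i(1),\ldots,\ell_i(m_i)$ with $\sum_{j=1}^{m_i}\ell_i(j)=R$ for every $i\in\{2,\ldots,M-1\}$ such that $G_i$ supports $(\ell_i;\ell_{i+1})$. Fix a cut $\Omega$ and set $U_i=\{j:\mathcal{O}_i(j)\in\Omega\}$ and $W_i=\{j:\mathcal{O}_i(j)\in\bar{\Omega}\}$, so that $G_i(\Omega)=G_i(W_{i+1},U_i)$. Theorem \ref{thm:matrix-transversal} gives $\mbox{rank}(G_i(\Omega))\geq\sum_{j\in U_i}\ell_i(j)+\sum_{k\in W_{i+1}}\ell_{i+1}(k)-R$ for each $i$. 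Summing over $i\in[M-1]$ and using $\sum_{j\in U_i}\ell_i(j)+\sum_{j\in W_i}\ell_i(j)=R$ for each intermediate $i\in\{2,\ldots,M-1\}$, the intermediate contributions telescope and the right--hand side collapses to $\sum_{\mathcal{O}_1(j)\in\Omega}\ell_1(j)+\sum_{\mathcal{O}_M(k)\in\bar{\Omega}}\ell_M(k)-R$, as required by (\ref{eq:extension-condition}).

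For sufficiency I induct on $M$. The base case $M=2$ is Theorem \ref{thm:matrix-transversal} under the natural bijection between cuts of a two--layer network and pairs of block--index subsets $(U,W)$. For $M\geq 3$, write $\mathcal{N}'$ for the sub--network of the first $M-1$ layers. It suffices to exhibit a non--negative integer vector $\ell_{M-1}(\cdot)$ summing to $R$ such that (i) $\mathcal{N}'$ supports $(\ell_1;\ell_{M-1})$ and (ii) $G_{M-1}$ supports $(\ell_{M-1};\ell_M)$. By the induction hypothesis, (i) is equivalent to $\sum_{k\in B}\ell_{M-1}(k)\leq f_1(B)$ for every $B\subseteq[m_{M-1}]$, where
\[
f_1(B)=R+\min\Bigl\{\mathcal{C}(\Omega')-\sum_{\mathcal{O}_1(j)\in\Omega'}\ell_1(j)\;:\;\Omega'\text{ a cut of }\mathcal{N}',\ \bar{\Omega}'\cap\mathcal{O}_{M-1}\text{ indexed by }B\Bigr\};
\]
and by Theorem \ref{thm:matrix-transversal}, (ii) is equivalent to $\sum_{j\in A}\ell_{M-1}(j)\leq f_2(A)$ for every $A\subseteq[m_{M-1}]$, where
\[
f_2(A)=R+\min_{W\subseteq[m_M]}\Bigl[\mbox{rank}(G_{M-1}(W,A))-\sum_{k\in W}\ell_M(k)\Bigr].
\]
Both $f_1$ and $f_2$ are non--negative, integer--valued, and submodular ($f_1$ by partial minimization of the submodular cut function, $f_2$ by submodularity of matrix rank), so they define two polymatroids on $[m_{M-1}]$.

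By Edmonds' polymatroid intersection theorem, the two polymatroids share a common non--negative integer vector with coordinate sum $R$ iff $f_1(S)+f_2([m_{M-1}]\setminus S)\geq R$ for every $S\subseteq[m_{M-1}]$. Given $S$, let $\Omega'$ and $W$ achieve the minima in $f_1(S)$ and $f_2([m_{M-1}]\setminus S)$, and assemble the cut $\Omega=\Omega'\cup\{\mathcal{O}_M(k):k\notin W\}$ of $\mathcal{N}$. A direct computation yields $\mathcal{C}(\Omega)=\mathcal{C}(\Omega')+\mbox{rank}(G_{M-1}(W,[m_{M-1}]\setminus S))$, with $\Omega\cap\mathcal{O}_1=\Omega'\cap\mathcal{O}_1$ and $\bar{\Omega}\cap\mathcal{O}_M$ indexed by $W$. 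Substituting into (\ref{eq:extension-condition}) applied to $\Omega$ and rearranging reproduces $f_1(S)+f_2([m_{M-1}]\setminus S)\geq R$, and Edmonds' theorem then supplies an integer $\ell_{M-1}$, closing the induction. The main obstacle I anticipate is the submodularity of $f_1$, which requires partial minimization of a submodular function to preserve submodularity and, at its root, relies on the submodularity of the cut capacity $\mathcal{C}(\cdot)$ on $2^{\mathcal{V}}$; this is a preliminary lemma that must be proved separately. Once it is in hand, the remaining work is the bookkeeping above matching Edmonds' min--max to an instance of (\ref{eq:extension-condition}) on $\mathcal{N}$.
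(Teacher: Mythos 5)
Your proposal is correct and follows essentially the same route as the paper: induction on $M$, splitting the network at an intermediate layer, encoding the two sides as polymatroids built from partially minimized cut functions, invoking Edmonds' intersection theorem, and verifying the min--max condition by assembling a cut of $\mathcal{N}$ from the two minimizers, which is exactly the paper's Lemma \ref{lem:final-lemma}. The only minor differences are that you prove necessity directly by telescoping the layer--wise bounds of Theorem \ref{thm:matrix-transversal} and fix the split at $K=M-1$ (so one polymatroid comes straight from Theorem \ref{thm:matrix-transversal}); when filling in details, also verify that $f_1,f_2$ are nondecreasing with $f_1(\emptyset)=f_2(\emptyset)=0$ (the paper's Lemma \ref{lem:7}), since Edmonds' theorem as invoked requires these hypotheses, and use down--closure of the polymatroid intersection to trim the integer optimum to coordinate sum exactly $R$.
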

Notice that for $M=2,$ Theorem \ref{thm:extension} reduces to Theorem
\ref{thm:matrix-transversal}. Also if $\mathcal{N}$ has a single
node each in the first and last layers, then $\ell_{1}(1)=\ell_{M}(1)=R,$
and hence (\ref{eq:extension-condition}) and Theorem \ref{thm:channel-capacity}
imply that for every cut $\Omega$ with $\mathcal{S}\in\Omega$ and
$\mathcal{D}\in\bar{\Omega}$, $\mathcal{C}(\Omega)\geq\mathcal{C}\geq R.$
Thus, Theorem \ref{thm:extension} implies Theorem \ref{thm:Network flow}.

\subsection{Proof of Theorem \ref{thm:extension}}

We use induction on $M.$ For $M=2,$ Theorem \ref{thm:matrix-transversal}
gives the result. For $M>2,$ choose $K\in\left\{ 2,\cdots,M-1\right\} .$
Define networks $\mathcal{N}_{A}$ and $\mathcal{N}_{B}$ to respectively
be the subnetworks of $\mathcal{N}$ with node set $\mathcal{O}_{1}\cup\cdots\cup\mathcal{O}_{K}$
and $\mathcal{O}_{K}\cup\cdots\cup\mathcal{O}_{M}$. The next step
of our argument is to show that the inductive hypothesis and (\ref{eq:extension-condition})
imply the existence of non-negative integers $\ell_{K}(1),\cdots,\ell_{K}(m_{K})$
such that $\sum_{i=1}^{m_{K}}\ell_{K}(i)=R$ and networks $\mathcal{N}_{A}$
and $\mathcal{N}_{B}$ support the rate--$R$ flows\begin{align*}
 & \mathbf{d}_{A}=(\ell_{1}(1),\cdots,\ell_{1}(m_{1});\ell_{K}(1),\cdots,\ell_{K}(m_{K}))\\
 & \mathbf{d}_{B}=(\ell_{K}(1),\cdots,\ell_{K}(m_{K});\ell_{M}(1),\cdots,\ell_{M}(m_{M})).\end{align*}
 This step would establish that $\mathcal{N}$ supports flow $\mathbf{d}$
since submatrix $R_{\mathbf{d}_{i}}$ can be obtained from the solution
to $\mathcal{N}_{A}$ for $i\in\left\{ 1,\cdots,K-1\right\} $ and
from the solution to $\mathcal{N}_{B}$ for $i\in\left\{ K,\cdots,M-1\right\} .$ 

By the inductive hypothesis if the desired $\ell_{K}(1),\cdots,\ell_{K}(m_{K})$
exist then they are non--negative integers which form a feasible solution
to the following system of linear constraints:

\begin{equation}
\begin{cases}
\mathcal{C}(\Omega_{A})\geq\sum_{\mathcal{O}_{1}(j)\in\Omega_{A}}\ell_{1}(j)+\sum_{\mathcal{O}_{K}(k)\in\bar{\Omega}_{A}}\ell_{K}(k)-R, & \mbox{for every cut }\Omega_{A}\mbox{ in }\mathcal{N}_{A},\\
\mathcal{C}(\Omega_{B})\geq\sum_{\mathcal{O}_{K}(j)\in\Omega_{B}}\ell_{K}(j)+\sum_{\mathcal{O}_{M}(k)\in\bar{\Omega}_{B}}\ell_{M}(k)-R, & \mbox{for every cut }\Omega_{B}\mbox{ in }\mathcal{N}_{B},\mbox{ and }\\
\sum_{i=1}^{m_{K}}\ell_{K}(i)=R.\end{cases}\label{eq:17}\end{equation}
 For any set $T\subseteq\mathcal{O}_{K}$ define:\begin{align}
 & f_{A}(T)=\min\left\{ \mathcal{C}(\Omega_{A})-\sum_{\mathcal{O}_{1}(j)\in\Omega_{A}}\ell_{1}(j)+R:\mathcal{O}_{K}\cap\bar{\Omega}_{A}=T\right\} \label{eq:f_A}\\
 & f_{B}(T)=\min\left\{ \mathcal{C}(\Omega_{B})-\sum_{\mathcal{O}_{M}(j)\in\bar{\Omega}_{B}}\ell_{M}(j)+R:\mathcal{O}_{K}\cap\Omega_{B}=T\right\} .\label{eq:f_B}\end{align}
 Then the system (\ref{eq:17}) of linear constraints is equivalent
to\begin{equation}
\begin{cases}
\sum_{\mathcal{O}_{K}(j)\in T}\ell_{K}(j)\leq f_{A}(T), & \mbox{for every }T\subseteq\mathcal{O}_{K}\\
\sum_{\mathcal{O}_{K}(j)\in T}\ell_{K}(j)\leq f_{B}(T), & \mbox{for every }T\subseteq\mathcal{O}_{K}\\
\sum_{i=1}^{m_{K}}\ell_{K}(i)=R.\end{cases}\label{eq:lp}\end{equation}

\begin{lem}
\label{lem:7}The functions $f_{A}(T)$ and $f_{B}(T)$ are I) submodular,
II) nondecreasing, and satisfy III) $f_{A}(\emptyset)=0$ and $f_{B}(\emptyset)=0.$
Notice that function $f$ is submodular if for every $T_{1}$ and
$T_{2}$: \[
f(T_{1})+f(T_{2})\geq f(T_{1}\cap T_{2})+f(T_{1}\cup T_{2})\]
 and is nondecreasing if for every $T_{1}\subseteq T_{2},$ $f(T_{1})\leq f(T_{2}).$
\end{lem}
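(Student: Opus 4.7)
My plan is to prove the three properties for $f_A$; the argument for $f_B$ is entirely analogous, obtained by extending a cut of $\mathcal{N}_B$ to $\mathcal{N}$ by placing all of $\mathcal{O}_1\cup\cdots\cup\mathcal{O}_{K-1}$ in $\bar\Omega$. The tools I will combine are the submodularity of the cut capacity $\mathcal{C}(\cdot)$ asserted in Lemma~\ref{lem:90}, the modularity of the set-function $\Omega\mapsto\sum_{\mathcal{O}_1(j)\in\Omega}\ell_1(j)$, the fact that deleting rows from a matrix cannot raise its rank, and --- only for part~(III) --- the standing inductive hypothesis~(\ref{eq:extension-condition}) on $\mathcal{N}$.

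For~(I), I would pick cuts $\Omega_1,\Omega_2$ of $\mathcal{N}_A$ attaining $f_A(T_1)$ and $f_A(T_2)$, so $\mathcal{O}_K\cap\bar\Omega_s=T_s$ for $s=1,2$. Since $\overline{\Omega_1\cap\Omega_2}=\bar\Omega_1\cup\bar\Omega_2$ and $\overline{\Omega_1\cup\Omega_2}=\bar\Omega_1\cap\bar\Omega_2$, the cut $\Omega_1\cap\Omega_2$ is feasible for $T_1\cup T_2$ and $\Omega_1\cup\Omega_2$ is feasible for $T_1\cap T_2$. Summing the two defining equalities and applying Lemma~\ref{lem:90} to the $\mathcal{C}$-terms, together with the modular identity
\[
\sum_{\mathcal{O}_1(j)\in\Omega_1}\ell_1(j)+\sum_{\mathcal{O}_1(j)\in\Omega_2}\ell_1(j)=\sum_{\mathcal{O}_1(j)\in\Omega_1\cap\Omega_2}\ell_1(j)+\sum_{\mathcal{O}_1(j)\in\Omega_1\cup\Omega_2}\ell_1(j),
\]
yields $f_A(T_1)+f_A(T_2)\geq f_A(T_1\cap T_2)+f_A(T_1\cup T_2)$. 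For~(II), given $T_1\subseteq T_2$ and an optimal $\Omega$ for $T_2$, the cut $\Omega'=\Omega\cup(T_2\setminus T_1)$ is feasible for $T_1$ and differs from $\Omega$ only by moving some nodes of $\mathcal{O}_K$ out of $\bar\Omega$. In $\mathcal{N}_A$ this affects only the block $G_{K-1}(\cdot)$, whose row-index set shrinks from $T_2$ to $T_1$, so $\mathcal{C}(\Omega')\leq\mathcal{C}(\Omega)$. The $\ell_1$-sum is unchanged, giving $f_A(T_1)\leq f_A(T_2)$.

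For~(III), the cut $\Omega_A^\ast=\mathcal{O}_1\cup\cdots\cup\mathcal{O}_K$ is feasible for $T=\emptyset$, has $\mathcal{C}(\Omega_A^\ast)=0$ (every $\bar\Omega_A^\ast\cap\mathcal{O}_{i+1}$ is empty), and realizes objective value $0-R+R=0$, so $f_A(\emptyset)\leq0$. For the matching lower bound I would extend a generic feasible $\Omega_A$ (which must satisfy $\mathcal{O}_K\subseteq\Omega_A$) to the cut $\Omega=\Omega_A\cup\mathcal{O}_{K+1}\cup\cdots\cup\mathcal{O}_M$ of $\mathcal{N}$. A block-by-block check gives $\mathcal{C}(\Omega)=\mathcal{C}(\Omega_A)$ (the $i\geq K$ blocks vanish because $\bar\Omega\cap\mathcal{O}_{i+1}=\emptyset$), $\Omega\cap\mathcal{O}_1=\Omega_A\cap\mathcal{O}_1$, and $\bar\Omega\cap\mathcal{O}_M=\emptyset$, so (\ref{eq:extension-condition}) applied to $\Omega$ reduces to $\mathcal{C}(\Omega_A)\geq\sum_{\mathcal{O}_1(j)\in\Omega_A}\ell_1(j)-R$, i.e.\ the objective is non-negative. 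Minimizing over $\Omega_A$ gives $f_A(\emptyset)\geq0$. The main subtlety, and the only place the full inductive hypothesis on $\mathcal{N}$ enters, is this last step: the extension of $\Omega_A$ must be chosen so that the two boundary terms of~(\ref{eq:extension-condition}) collapse to exactly the linear piece appearing in the definition of $f_A$ and to zero, respectively.
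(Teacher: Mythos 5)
Your Parts I and II are essentially the paper's own argument: submodularity is obtained by taking optimal cuts for $T_1,T_2$, applying Lemma~\ref{lem:90} together with the modular identity for the $\ell_1$-sums and De Morgan's laws (the paper's (\ref{eq:demorgan1})--(\ref{eq:set-theory})), and monotonicity is obtained by moving nodes of $\mathcal{O}_K$ from $\bar\Omega$ into $\Omega$, which only deletes row blocks of $G_{K-1}(\cdot)$ and hence cannot increase $\mathcal{C}$ (the paper does this one node at a time, you do it in one step --- equivalent). The only real divergence is in Part III. Your upper bound via $\Omega_A^\ast=\mathcal{O}_1\cup\cdots\cup\mathcal{O}_K$ is exactly the paper's; but for the lower bound the paper uses the elementary, unconditional observation that $\mathcal{C}(\Omega_A)\geq 0$ and $\sum_{\mathcal{O}_1(j)\in\Omega_A}\ell_1(j)\leq\sum_j\ell_1(j)=R$, which in fact gives $f_A(T)\geq 0$ for \emph{every} $T$, whereas you invoke the cut condition (\ref{eq:extension-condition}) on $\mathcal{N}$ via an extended cut. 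Your argument is correct and is legitimate in the only place the lemma is used (the sufficiency direction of Theorem~\ref{thm:extension}, where (\ref{eq:extension-condition}) is assumed), but it makes Lemma~\ref{lem:7} conditional on a hypothesis it does not need; the paper's version keeps the lemma a standalone statement about $f_A,f_B$, which is cleaner and is what lets it be reused (e.g., in the algorithmic part) without re-checking the cut condition. So: correct proof, same route for I and II, and for III a slightly heavier route than necessary --- the simpler pointwise non-negativity bound is worth knowing.
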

Referring to terminology in polyhedral optimizations (see \cite[\S 5.15]{key-9})
a polytope $P$ is integer if and only if each vertex of $P$ has
integral coordinates. If a polyhedron $P=\left\{ \mathbf{x}:A\mathbf{x}\leq\mathbf{b}\right\} $
in $n$ dimensions is integer, then any linear programming problem
$\max\left\{ \mathbf{c}^{T}\mathbf{x}:A\mathbf{x}\leq\mathbf{b}\right\} $
with a finite solution must have a solution with integral coordinates. 

Let $f$ be a submodular function on some set $V$ with $v$ elements.
The \emph{polymatroid} associated with $f$ is: \[
P_{f}=\left\{ \mathbf{x}\in\mathbb{R}^{v}:\mathbf{x}\geq\mathbf{0},\quad x(U)\leq f(U)\quad\mbox{for every }U\subseteq V\right\} ,\]
where we define $\mathbf{x}=\left[\begin{array}{ccc}
x(1) & \cdots & x(v)\end{array}\right]^{T}$ and $x(U)=\sum_{u\in U}x(u).$ 
\begin{thm}
\label{thm:12}(\cite{key-19}) Let $f_{1}$ and $f_{2}$ be nondecreasing
submodular set functions with integer values on $V$ with $f_{1}(\emptyset)=f_{2}(\emptyset)=0.$
Then $P_{f_{1}}\cap P_{f_{2}}$ is integer and for each $U\subseteq V,$
\begin{equation}
\max\left\{ x(U):\mathbf{x}\in P_{f_{1}}\cap P_{f_{2}}\right\} =\min_{T\subseteq U}\left(f_{1}(T)+f_{2}(U\backslash T)\right).\label{eq:34}\end{equation}

\end{thm}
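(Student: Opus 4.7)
The theorem makes two assertions: integrality of the polytope $P_{f_1}\cap P_{f_2}$ and the min--max identity. My plan is to handle the easy direction of the identity directly, and then treat the hard direction together with integrality via LP duality and an uncrossing argument, following the strategy of Edmonds.

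The weak duality bound is immediate. For any $\mathbf{x}\in P_{f_1}\cap P_{f_2}$ and any $T\subseteq U$, split $U$ into $T$ and $U\setminus T$ and apply the defining inequalities of the two polymatroids:
\[
x(U)=x(T)+x(U\setminus T)\leq f_1(T)+f_2(U\setminus T).
\]
Minimizing the right side over $T\subseteq U$ gives $\max\{x(U):\mathbf{x}\in P_{f_1}\cap P_{f_2}\}\leq \min_{T\subseteq U}(f_1(T)+f_2(U\setminus T))$.

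For the reverse inequality I would dualize the LP $\max x(U)$ over $P_{f_1}\cap P_{f_2}$. The dual assigns nonnegative multipliers $y^{1}_{S},y^{2}_{S}$ to the constraints $x(S)\leq f_i(S)$, subject to $\sum_{S\ni v}(y^{1}_{S}+y^{2}_{S})\geq \mathbf{1}_{v\in U}$ for each $v\in V$, and minimizes $\sum_S(y^{1}_{S}f_1(S)+y^{2}_{S}f_2(S))$. The key step is an \emph{uncrossing} move: whenever the support of some $y^{i}$ contains an incomparable pair $S_1,S_2$, we shift $\varepsilon$ mass off of $y^{i}_{S_1},y^{i}_{S_2}$ and onto $y^{i}_{S_1\cap S_2},y^{i}_{S_1\cup S_2}$. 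A direct check of the four cases for $v$ (membership in $S_1\cap S_2$, in $S_1\setminus S_2$, in $S_2\setminus S_1$, or in neither) shows the constraint at every $v$ is preserved, and submodularity of $f_i$ ensures the objective does not increase since $f_i(S_1\cap S_2)+f_i(S_1\cup S_2)\leq f_i(S_1)+f_i(S_2)$. Iterating yields an optimal dual whose support in each polymatroid is a chain. A final exchange, exploiting the monotonicity of $f_i$ together with complementary slackness against a primal optimum, collapses the combined support onto a partition $T\sqcup (U\setminus T)$ of $U$, giving optimal dual value $f_1(T)+f_2(U\setminus T)$ for some $T\subseteq U$, which matches the claimed minimum.

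The main obstacle is integrality of $P_{f_1}\cap P_{f_2}$, not the equality itself. Here the uncrossing yields more: once the supports of $y^{1}$ and $y^{2}$ are chains, the combined dual constraint matrix is the incidence matrix of a laminar family and is therefore totally unimodular. Together with the integrality of $f_1$ and $f_2$, this makes the system totally dual integral, so the primal LP has an integer optimum for every integer objective $\mathbf{c}$. Specializing to $\mathbf{c}=\mathbf{1}_U$ proves the identity, and letting $\mathbf{c}$ range over $\mathbb{Z}^v$ forces every vertex of $P_{f_1}\cap P_{f_2}$ to be integer, completing both assertions.
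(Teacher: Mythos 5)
The paper does not actually prove Theorem \ref{thm:12}: it is quoted from Edmonds \cite{key-19} and used as a black box in the proof of Theorem \ref{thm:extension}, so there is no in-paper argument to compare yours against. What you have written is essentially the classical proof of the polymatroid intersection theorem by LP duality, uncrossing, and total dual integrality, and in outline it is sound: your weak-duality step is exactly right, the uncrossing move preserves every covering constraint (the count of sets containing a given $v$ is the same for $\{S_{1},S_{2}\}$ as for $\{S_{1}\cap S_{2},S_{1}\cup S_{2}\}$) and by submodularity does not increase the dual objective, and total dual integrality with integer right-hand sides $f_{1},f_{2}$ does yield integrality of $P_{f_{1}}\cap P_{f_{2}}$ by the Edmonds--Giles argument.

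Two steps need repair before this is a complete proof. First, the union of the two chain supports is in general \emph{not} a laminar family, since a set in the support of $y^{1}$ may cross a set in the support of $y^{2}$; the correct justification is the separate (standard, but not free) lemma that the incidence matrix of the union of two chains, or more generally of two laminar families, is totally unimodular. You also owe a termination argument for the uncrossing iteration, e.g.\ that a potential such as $\sum_{S}y_{S}|S|^{2}$ strictly increases while $\sum_{S}y_{S}$ stays bounded. Second, the ``final exchange'' collapsing the dual onto a partition of $U$ is vague as stated and is cleaner once integrality is already in hand: take an integer optimal dual $(y^{1},y^{2})$, let $T_{i}$ be the union of the support of $y^{i}$; since $f_{i}(\emptyset)=0$, $f_{i}$ is nondecreasing (hence nonnegative) and submodular, one gets $f_{i}(T_{i})\leq\sum_{S}y_{S}^{i}f_{i}(S)$; the covering constraints force $T_{1}\cup T_{2}\supseteq U$, so with $T=T_{1}\cap U$ monotonicity gives $f_{1}(T)+f_{2}(U\backslash T)\leq f_{1}(T_{1})+f_{2}(T_{2})\leq$ the dual optimum, which is what you need. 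Complementary slackness plays no role here. With these two adjustments your sketch becomes a correct, self-contained proof of the cited theorem, which the paper itself leaves to \cite{key-19}.
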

For the submodular set functions $f_{A}$ and $f_{B},$ define the
polymatroids: \begin{align*}
P_{f_{A}}=\left\{ \mathbf{x}\in\mathbb{R}^{m_{K}}:\mathbf{x}\geq\mathbf{0},\quad x(T)\leq f_{A}(T)\quad\mbox{for every }T\subseteq\mathcal{O}_{K}\right\} ,\\
P_{f_{B}}=\left\{ \mathbf{x}\in\mathbb{R}^{m_{K}}:\mathbf{x}\geq\mathbf{0},\quad x(T)\leq f_{B}(T)\quad\mbox{for every }T\subseteq\mathcal{O}_{K}\right\} .\end{align*}
 For (\ref{eq:lp}) to have a non--negative and integral solution,
$\max\left\{ x\left(\mathcal{O}_{K}\right):\mathbf{x}\in P_{f_{A}}\cap P_{f_{B}}\right\} \geq R$
is clearly necessary. To show sufficiency suppose $\mathbf{y}\in P_{f_{A}}\cap P_{f_{B}}$
achieves $\max\left\{ x\left(\mathcal{O}_{K}\right):\mathbf{x}\in P_{f_{A}}\cap P_{f_{B}}\right\} \geq R$.
Then for every choice of $0\leq\ell_{K}(j)\leq y(j),$ $\left[\begin{array}{ccc}
\ell_{K}(1) & \cdots & \ell_{K}(m_{K})\end{array}\right]^{T}\in P_{f_{A}}\cap P_{f_{B}}$ and so we choose $\ell_{K}(j)$ such that $\sum_{i=1}^{m_{K}}\ell_{K}(i)=R.$
Lemma \ref{lem:7} and Theorem \ref{thm:12} imply that:\begin{equation}
\max\left\{ x\left(\mathcal{O}_{K}\right):\mathbf{x}\in P_{f_{A}}\cap P_{f_{B}}\right\} =\min_{T\subseteq\mathcal{O}_{K}}\left(f_{A}\left(T\right)+f_{B}\left(\mathcal{O}_{K}\backslash T\right)\right)\label{eq:35}\end{equation}
and the optimum can be achieved by a non--negative integer solution.
Theorem \ref{thm:extension} follows from (\ref{eq:35}) and the following
lemma:
\begin{lem}
\textup{\label{lem:final-lemma}$\min_{T\subseteq\mathcal{O}_{K}}\left(f_{A}(T)+f_{B}(\mathcal{O}_{K}\backslash T)\right)\geq R$
if and only if for every cut $\Omega$ in $\mathcal{N},$ \begin{equation}
\mathcal{C}(\Omega)\geq\sum_{\mathcal{O}_{1}(j)\in\Omega}\ell_{1}(j)+\sum_{\mathcal{O}_{M}(k)\in\bar{\Omega}}\ell_{M}(k)-R.\label{eq:necessity}\end{equation}
}\end{lem}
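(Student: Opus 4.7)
The plan is to reduce the double minimization over $T\subseteq\mathcal{O}_K$ (with an inner minimization hidden in each $f_A$ and $f_B$) to a single minimization over cuts in the full network $\mathcal{N}$, and then rearrange to obtain the stated cut inequality. The key structural observation is a decomposition/composition bijection between cuts in $\mathcal{N}$ and pairs of cuts in $\mathcal{N}_A$ and $\mathcal{N}_B$ that agree on the middle layer $\mathcal{O}_K$.

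First I would record the decomposition lemma. Given any cut $\Omega$ in $\mathcal{N}$, set $\Omega_A=\Omega\cap(\mathcal{O}_1\cup\cdots\cup\mathcal{O}_K)$ and $\Omega_B=\Omega\cap(\mathcal{O}_K\cup\cdots\cup\mathcal{O}_M)$; these are cuts in $\mathcal{N}_A$ and $\mathcal{N}_B$ satisfying $\Omega_A\cap\mathcal{O}_K=\Omega_B\cap\mathcal{O}_K=\Omega\cap\mathcal{O}_K$. Conversely, any two such cuts that agree on $\mathcal{O}_K$ glue to a cut $\Omega=\Omega_A\cup\Omega_B$ in $\mathcal{N}$. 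Since $G(\Omega)$ is block diagonal with $G_i(\Omega)$ depending only on $\Omega\cap(\mathcal{O}_i\cup\mathcal{O}_{i+1})$, Theorem~\ref{thm:channel-capacity} yields the additivity
\[
\mathcal{C}(\Omega)=\sum_{i=1}^{K-1}\mathrm{rank}(G_i(\Omega_A))+\sum_{i=K}^{M-1}\mathrm{rank}(G_i(\Omega_B))=\mathcal{C}(\Omega_A)+\mathcal{C}(\Omega_B).
\]

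Next I would chain the minimizations. For a fixed $T\subseteq\mathcal{O}_K$, the constraints $\bar\Omega_A\cap\mathcal{O}_K=T$ in $f_A(T)$ and $\Omega_B\cap\mathcal{O}_K=\mathcal{O}_K\setminus T$ in $f_B(\mathcal{O}_K\setminus T)$ are exactly what is needed for the two minimizers to glue; conversely any cut $\Omega$ in $\mathcal{N}$ arises this way for $T=\bar\Omega\cap\mathcal{O}_K$. Using the additivity above and the definitions (\ref{eq:f_A})--(\ref{eq:f_B}),
\[
\min_{T\subseteq\mathcal{O}_K}\bigl(f_A(T)+f_B(\mathcal{O}_K\setminus T)\bigr)=\min_{\Omega\text{ cut in }\mathcal{N}}\Bigl[\mathcal{C}(\Omega)-\!\!\!\sum_{\mathcal{O}_1(j)\in\Omega}\!\!\ell_1(j)-\!\!\!\sum_{\mathcal{O}_M(k)\in\bar\Omega}\!\!\ell_M(k)\Bigr]+2R.
\]

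Finally the equivalence falls out by inspection: the left-hand side is $\geq R$ iff for every cut $\Omega$ the bracketed quantity is $\geq -R$, which is exactly (\ref{eq:necessity}). I do not anticipate a genuine obstacle; the only point requiring a little care is verifying that the gluing of minimizers for $f_A(T)$ and $f_B(\mathcal{O}_K\setminus T)$ really sweeps out every cut $\Omega$ of $\mathcal{N}$ exactly once (indexed by $T=\bar\Omega\cap\mathcal{O}_K$), and that $\Omega\cap\mathcal{O}_1=\Omega_A\cap\mathcal{O}_1$ and $\bar\Omega\cap\mathcal{O}_M=\bar\Omega_B\cap\mathcal{O}_M$ so that the boundary terms line up correctly with (\ref{eq:necessity}).
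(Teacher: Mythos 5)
Your proposal is correct and follows essentially the same route as the paper: both arguments rest on the gluing/decomposition $\Omega=\Omega_A\cup\Omega_B$ of cuts agreeing on $\mathcal{O}_K$, the additivity $\mathcal{C}(\Omega)=\mathcal{C}(\Omega_A)+\mathcal{C}(\Omega_B)$, and the matching of the boundary sums over $\mathcal{O}_1$ and $\mathcal{O}_M$. The only cosmetic difference is that you package the two directions into a single identity equating the two minima, whereas the paper proves each direction separately via the respective minimizers.
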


\section*{Appendix}

\subsection*{Proof of Theorem \ref{thm:alg-matrix}:}

\begin{algorithm}
\caption{\label{alg:Submatrix}Submatrix $G_{\mathbf{d}_{i}}$}

Initialize: $T_{r}^{(0)}=T_{c}^{(0)}=G_{i},$ $t=0.$
\begin{enumerate}
\item Look for a submatrix $T_{r}^{(t+1)}$ of $G_{i}$ that can be obtained
by removing a row from $T_{r}^{(t)}$ and has the property that \[
\mbox{rank}(T_{r}^{(t+1)}(W,U))\geq\sum_{j\in U}\ell_{i}(j)+\sum_{k\in W}\ell_{i+1}(k)-R\]
 for all $U\subseteq\left[m_{i}\right]$ and $W\subseteq\left[m_{i+1}\right]$.
If such a $T_{r}^{(t+1)}$ exists, then $t\leftarrow t+1$ and repeat
1. Otherwise $T_{c}^{(0)}\leftarrow T_{r}^{(t)},t\leftarrow0,$ and
goto 2.
\item Look for a submatrix $T_{c}^{(t+1)}$ of $G_{i}$ that can be obtained
by removing a column from $T_{c}^{(t)}$ and has the property that
\[
\mbox{rank}(T_{c}^{(t+1)}(W,U))\geq\sum_{j\in U}\ell_{i}(j)+\sum_{k\in W}\ell_{i+1}(k)-R\]
 for all $U\subseteq\left[m_{i}\right]$ and $W\subseteq\left[m_{i+1}\right]$.
If such a $T_{c}^{(t+1)}$ exists, then $t\leftarrow t+1$ and repeat
2. Otherwise output $G_{\mathbf{d}_{i}}=T_{c}^{(t)}$.
\end{enumerate}

\end{algorithm}

Suppose that matrix $G_{i}$ and flow $\mathbf{d}_{i}$ satisfy the
conditions of Theorem \ref{thm:matrix-transversal}. We first prove
that Algorithm \ref{alg:Submatrix} will find a solution $G_{\mathbf{d}_{i}}$
for flow $\mathbf{d}_{i}$, and we later discuss the complexity of
the algorithm. We begin by showing that after exiting Step 1, matrix
$T_{c}^{(0)}$ has exactly $\sum_{j=1}^{m_{i+1}}\ell_{i+1}(j)=R$
rows. By definition, $\mbox{rank}(T_{c}^{(0)}(\left[m_{i+1}\right],\left[m_{i}\right]))\geq\sum_{j}\ell_{i}(j)+\sum_{k}\ell_{i+1}(k)-R=R.$
Suppose that $T_{c}^{(0)}$ has more than $R$ rows. By Theorem \ref{thm:matrix-transversal},
$T_{c}^{(0)}$ has a submatrix $T_{\mathbf{d}_{i}}$ which is a solution
for flow $\mathbf{d}_{i},$ and hence $T_{\mathbf{d}_{i}}$ has $R$
rows. Therefore all rows that belong to $T_{c}^{(0)}$ but not to
$T_{\mathbf{d}_{i}}$ can be removed and the resulting matrix $T$
satisfies $\mbox{rank}(T(W,U))\geq\sum_{j\in U}\ell_{i}(j)+\sum_{k\in W}\ell_{i+1}(k)-R$
for all $U\subseteq\left[m_{i}\right]$ and $W\subseteq\left[m_{i+1}\right]$.
This contradicts the assumption that $T_{c}^{(0)}$ has no more rows
that can be removed. We can similarly argue that the matrix $G_{\mathbf{d}_{i}}$
output at the end of Step 2 has $R$ columns. Since $G_{\mathbf{d}_{i}}$
is an $R\times R$ matrix for which $\mbox{rank}(G_{\mathbf{d}_{i}}(W,U))\geq\sum_{j\in U}\ell_{i}(j)+\sum_{k\in W}\ell_{i+1}(k)-R$
for all $U\subseteq\left[m_{i}\right]$ and $W\subseteq\left[m_{i+1}\right]$,
$G_{\mathbf{d}_{i}}$ is a solution for flow $\mathbf{d}_{i}.$

To find the complexity of Algorithm \ref{alg:Submatrix}, we argue
that at Step 1 or Step 2 a removable row or column can be found respectively
by testing at most $R+1$ rows or columns. For example, in Step 1,
any row which is not part of $T_{c}^{(0)}$ can be removed without
violating any of the stated rank conditions. The pigeonhole principle
implies that at most $R+1$ rows need to be checked to find a row
which is not part of $T_{c}^{(0)}$. Define $\alpha(\Omega)=\mbox{rank}(G_{i}(\Omega))-\sum_{\mathcal{O}_{i}(j)\in\Omega\cap\mathcal{O}_{i}}\ell_{i}(j)+\sum_{\mathcal{O}_{i+1}(k)\in\Omega\cap\mathcal{O}_{i+1}}\ell_{i+1}(k)$
for every cut $\Omega\subseteq\mathcal{O}_{i}\cup\mathcal{O}_{i+1}$.
Then the conditions of Theorem \ref{thm:matrix-transversal} are equivalent
to the requirement that $\alpha(\Omega)\geq0$ for every cut $\Omega\subseteq\mathcal{O}_{i}\cup\mathcal{O}_{i+1}$.
Lemma \ref{lem:9} in the appendix proves the submodularity of $\mbox{rank}(G_{i}(\Omega)).$
By Lemma \ref{lem:91}, which appears later in the appendix, the function
$-\sum_{\mathcal{O}_{i}(j)\in\Omega\cap\mathcal{O}_{i}}\ell_{i}(j)+\sum_{\mathcal{O}_{i+1}(k)\in\Omega\cap\mathcal{O}_{i+1}}\ell_{i+1}(k)$
is also a submodular function of $\Omega.$ Thus $\alpha(\Omega)$,
which is the sum of two submodular functions, is also submodular.
We next verify whether or not the minimum of $\alpha(\Omega)$ is
non--negative. It is known (see, e.g., \cite{key-11} and \cite{key-18})
that the minimum value of a submodular function can be found in strongly
polynomial time. Here we use the algorithm by Schrijver \cite{key-11}
that finds the minimum of a submodular function $f$ on the power
set of set $E_{f}$, in time $O(|E_{f}|^{6}t_{f})$, where $t_{f}$
is the time for evaluating function $f$ for some subset of $E_{f}$.
In our problem $f$ is $\alpha$ which is defined on $E_{\alpha}=\mathcal{O}_{i}\cup\mathcal{O}_{i+1}.$
If we suppose that $|\mathcal{O}_{i}|\leq m$, then $|E_{\alpha}|\leq2m$.
Every evaluation of $\alpha$ requires calculating the rank of submatrix
$G_{i}(\Omega)$ with size at most the size of $G_{i}.$ Suppose that
$G_{i}$ has dimension at most $h_{0}\times h_{0}$. The rank of $G_{i}(\Omega)$
can then be evaluated, for instance, by Gaussian elimination in $O\left(h_{0}^{3}\right)$
time \cite{key-3}. The complexity of minimizing $\alpha(\Omega)$
is therefore \[
O(|E_{\alpha}|^{6}t_{\alpha})=O\left(m^{6}h_{0}^{3}\right).\]
The number of iterations of Step 1 of Algorithm \ref{alg:Submatrix}
is at most the number of rows, $h_{0}.$ Likewise there are at most
$h_{0}$ iterations of Step 2. Since each iteration requires at most
$R+1$ minimization of the submodular function $\alpha,$ the total
complexity of the algorithm is at most \[
(2h_{0})\cdot(R+1)\cdot O\left(m^{6}h_{0}^{3}\right)=O\left(Rm^{6}h_{0}^{4}\right).\]
 As we run Algorithm \ref{alg:Submatrix} for $G_{i}$, $i\in\left[M-1\right],$
the complexity of this part will be $O\left(MRm^{6}h_{0}^{4}\right).$

\subsection*{Proof of Theorem \ref{thm:alg-network}: }

First we prove that the capacity $\mathcal{C}$ of the network $\mathcal{N}$
can be computed in polynomial time. For any cut $\Omega\subseteq\mathcal{V},$
$\mathcal{C}(\Omega)=\sum_{i=1}^{M-1}\mbox{rank}(G_{i}(\Omega)).$
By Lemma \ref{lem:9}, $\mbox{rank}(G_{i}(\Omega))$ is a submodular
function of $\Omega$. Thus $\mathcal{C}(\Omega)$ is a submodular
function of $\Omega$. Next suppose that $\Omega_{1}$ and $\Omega_{2}$
are two cuts that separate $\mathcal{S}$ from $\mathcal{D}.$ Then
$\Omega_{1}\cap\Omega_{2}$ and $\Omega_{1}\cup\Omega_{2}$ separate
$\mathcal{S}$ from $\mathcal{D}$. Therefore $\mathcal{C}(\Omega)$
is a submodular function over all cuts $\Omega\subseteq\mathcal{V}$
that separate $\mathcal{S}$ from $\mathcal{D}.$ To evaluate $\mathcal{C}=\min\mathcal{C}(\Omega)$
over all cuts that separate $\mathcal{S}$ from $\mathcal{D}$ we
use a submodular minimization algorithm \cite{key-11} with running
time of $O(|E_{\mathcal{C}}|^{6}t_{\mathcal{C}})$. Here $E_{\mathcal{C}}=\mathcal{V}\backslash\left\{ \mathcal{S},\mathcal{D}\right\} $
is the ground set of $\mathcal{C}(\Omega)$, and $|E_{\mathcal{C}}|\leq m(M-2)$,
where $m$ is the maximum number of nodes in each layer. $t_{\mathcal{C}}$
denotes the time for the evaluation of $\mathcal{C}(\Omega)=\sum_{i=1}^{M-1}\mbox{rank}(G_{i}(\Omega))$
for a given $\Omega,$ and it involves $M-1$ rank evaluations. Recall
that each matrix $G_{i}$ has size at most $h_{0}\times h_{0}$. Then
$G_{i}(\Omega)$ has size at most $h_{0}\times h_{0}$. Therefore
using Gaussian elimination for rank evaluation, we have $t_{\mathcal{C}}=O(Mh_{0}^{3}).$
Therefore, the total complexity of computing the capacity $\mathcal{C}$
is $O(m^{6}M^{7}h_{0}^{3}).$

Next we discuss the complexity of the evaluation of the vectors $\mathbf{d}_{i}.$
As we discussed in the proof of Theorem \ref{thm:extension}, in order
to determine $\ell_{K}(j)$ for a fixed $K$, we need to solve the
optimization problem (\ref{eq:35}) for the vector $\mathbf{x}$.
This is an integer programming problem over the intersection of two
polymatroids. Let $E_{x}=\left\{ x(1),\cdots,x(m_{K})\right\} .$
Then $|E_{x}|\leq m$. Let $t_{x}$ be the time needed for one evaluation
of functions $f_{A}(T)$ and $f_{B}(T)$ as defined in Section 3.
By applying the result of \cite[Theorem 47.1]{key-9}, it follows
that $\mathbf{x}$ can be found in time $O(|E_{x}|^{6}t_{x})$ if
we use the algorithm of \cite{key-11} for minimizing an integer programming
problem over a polymatroid. Recall that\[
f_{A}(T)=\min\left\{ \mathcal{C}(\Omega_{A})-\sum_{\mathcal{O}_{1}(j)\in\Omega_{A}}\ell_{1}(j)+R:\mathcal{O}_{K}\cap\bar{\Omega}_{A}=T\right\} .\]
 Define for cut $\Omega_{A}$ in $\mathcal{N}_{A}$ with $\mathcal{O}_{K}\cap\bar{\Omega}_{A}=T$,
$\phi_{T}(\Omega_{A})=\mathcal{C}(\Omega_{A})-\sum_{\mathcal{O}_{1}(j)\in\Omega_{A}}\ell_{1}(j)+R$.
By Lemmas \ref{lem:90} and \ref{lem:91} the functions $\mathcal{C}(\Omega)$
and $-\sum_{\mathcal{O}_{1}(j)\in\Omega}\ell_{1}(j)+R$ are submodular
over the set of all cuts in $\mathcal{N}_{A}.$ Next, if for two cuts
$\Omega_{1}$ and $\Omega_{2}$ we have $\mathcal{O}_{K}\cap\bar{\Omega}_{1}=T$
and $\mathcal{O}_{K}\cap\bar{\Omega}_{2}=T$ then De Morgan's laws
imply that $\mathcal{O}_{K}\cap(\overline{\Omega_{1}\cap\Omega_{2}})=T$
and $\mathcal{O}_{K}\cap(\overline{\Omega_{1}\cup\Omega_{2}})=T$.
Thus $\phi_{T}(\Omega_{A})$ is a submodular function over all cuts
$\Omega_{A}$ with $\mathcal{O}_{K}\cap\bar{\Omega}_{A}=T.$ Hence
the evaluation of $f_{A}(T)$ involves the minimization of a submodular
function. The complexity of such a minimization over a set $E_{\phi_{T}}$
is $O(|E_{\phi_{T}}|^{6}t_{\phi_{T}}).$ Since the nodes in $\mathcal{O}_{K}\cap\Omega_{A}=T$
are already fixed, the set $E_{\phi_{T}}$ is the set of all nodes
in $\mathcal{O}_{1}\cup\cdots\cup\mathcal{O}_{K-1}.$ Therefore $|E_{\phi_{T}}|\leq m(K-1)$.
$t_{\phi_{T}}$ is the time for an evaluation of $\mathcal{C}(\Omega_{A})-\sum_{\mathcal{O}_{1}(j)\in\Omega_{A}}\ell_{1}(j)+R$
for a given cut $\Omega_{A}.$ Since $\mathcal{C}(\Omega_{A})=\sum_{i=1}^{K-1}\mbox{rank}(G_{i}(\Omega_{A}))$,
there are $K-1$ rank evaluations. Each rank function can be evaluated
in time at most $O(h_{0}^{3})$ using Gaussian elimination. Thus $t_{\phi_{T}}=O(Kh_{0}^{3})$
and $f_{A}(T)$ can be evaluated in time at most\[
O(|E_{\phi_{T}}|^{6}t_{\phi_{T}})=O\left(Kh_{0}^{3}\cdot(m(K-1))^{6}\right)=O\left(K^{7}m^{6}h_{0}^{3}\right).\]
 Similarly, $f_{B}(T)$ can be evaluated in $O\left((M-K)^{7}m^{6}h_{0}^{3}\right).$
The total time for evaluating $f_{A}$ and $f_{B}$ is therefore $O\left(((M-K)^{7}+K^{7})m^{6}h_{0}^{3}\right).$
Therefore evaluating vector $\mathbf{x}$ needs time \[
O\left(m^{6}\cdot((M-K)^{7}+K^{7})m^{6}h_{0}^{3}\right)=O\left(((M-K)^{7}+K^{7})m^{12}h_{0}^{3}\right)\]
The function above is maximized when $K=0$ and the time complexity
is $O\left(M^{7}m^{12}h_{0}^{3}\right).$ Since the vector $\mathbf{x}$
needs to be computed for every layer $i\in\left\{ 2,\cdots,M-1\right\} $,
we find that the total complexity of the second stage is $O\left(M^{8}m^{12}h_{0}^{3}\right).$
Thus the total complexity of constructing the transmission scheme
is \[
O\left(M^{8}m^{12}h_{0}^{3}\right)+O\left(MRm^{6}h_{0}^{4}\right).\]

\subsection*{Proof of Lemma \ref{lem:tight-submatrix}:}

Let $T^{(0)}=G_{i}$ and for $t\geq1$ define $T^{(t)}$ to be a submatrix
of $G_{i}$ obtained by removing an arbitrary row from $T^{(t-1)}.$
Observe that for every $t\geq1,$ $\hat{U}\subseteq\left[m_{i}\right],$
and $\hat{W}\subseteq\left[m_{i+1}\right],$\begin{equation}
\mbox{rank}(T^{(t)}(\hat{W},\hat{U}))\geq\mbox{rank}(T^{(t-1)}(\hat{W},\hat{U}))-1.\label{eq:6B}\end{equation}
 Suppose that (\ref{eq:transversal-condition}) is satisfied with
strict inequality for all $U\subseteq\left[m_{i}\right]$ and $W\subseteq\left[m_{i+1}\right].$
Then by (\ref{eq:6B}), for all $\hat{U}\subseteq\left[m_{i}\right]$
and $\hat{W}\subseteq\left[m_{i+1}\right],$\[
\mbox{rank}(T^{(1)}(\hat{W},\hat{U}))\geq\sum_{j\in\hat{U}}\ell_{i}(j)+\sum_{k\in\hat{W}}\ell_{i+1}(k)-R_{\mathbf{d}_{i}}.\]
 If $T^{(1)}$ has a tight submatrix we are done. Otherwise, let $T^{(\lambda)}$
denote the empty submatrix of $G_{i}$. Observe that\[
\mbox{rank}(T^{(\lambda)}(\left[m_{i+1}\right],\left[m_{i}\right]))=0<\sum_{j=1}^{m_{i}}\ell_{i}(j)+\sum_{k=1}^{m_{i+1}}\ell_{i+1}(k)-R_{\mathbf{d}_{i}}=R_{\mathbf{d}_{i}},\]
 and so the rows of $G_{i}$ cannot be removed indefinitely without
violating at least one rank condition. Therefore by (\ref{eq:6B})
there must be some $t<\lambda$ such that $T=T^{(t)}$ satisfies (\ref{eq:6A})
for all $\hat{U}$ and $\hat{W}$ and $T$ has a tight submatrix.

\subsection*{Proof of Lemma \ref{lem:proper-submatrix}:}

Consider a tight submatrix $G_{i}(W,U)$ of $G_{i}$ that is not proper.
Suppose, without loss of generality, that $G_{i}(W,U)$ is of the
form $P=G_{i}(\left\{ 1\right\} ,\left[m_{i}\right]).$ Remove rows
from $G_{i}$ arbitrarily among the row blocks with indices in $\left\{ 2,3,\cdots,m_{i+1}\right\} $,
until no further rows can be removed from the resulting submatrix
$T$ of $G_{i}$ without violating $\mbox{rank}(T(\hat{W},\hat{U}))\geq\sum_{j\in\hat{U}}\ell_{i}(j)+\sum_{k\in\hat{W}}\ell_{i+1}(k)-R_{\mathbf{d}_{i}}$
for some $\hat{U}\subseteq\left[m_{i}\right]$ and $\hat{W}\subseteq\left[m_{i+1}\right]$.
Notice that this process terminates before we remove all rows from
blocks with indices in $\left\{ 2,3,\cdots,m_{i+1}\right\} $ for
if $T$ is a matrix with rows only from row block $1$, then $\mbox{rank}(T(\left\{ 2,\cdots,m_{i+1}\right\} ,\left[m_{i}\right]))=0<\sum_{j=1}^{m_{i}}\ell_{i}(j)+\sum_{k=2}^{m_{i+1}}\ell_{i+1}(k)-R_{\mathbf{d}_{i}}=\sum_{k=2}^{m_{i+1}}\ell_{i+1}(k).$
Next replace $G_{i}$ with $T$ for the rest of our argument. Since
no row can be removed from $G_{i}$ among the row blocks with indices
in $\left\{ 2,3,\cdots,m_{i+1}\right\} $ without violating a rank
condition, there exists a tight submatrix $Q=G_{i}(W',U')$ of $G_{i}$
which has a non--empty intersection with the row blocks of $G_{i}$
with indices in $\left\{ 2,3,\cdots,m_{i+1}\right\} .$ Now consider
some possibilities for $Q.$ If $Q$ is a proper submatrix, then we
are done. If $Q$ is not a proper submatrix consider two cases:
\begin{itemize}
\item $|W'|=1$ and $|U'|=m_{i}.$ Without loss of generality suppose that
$Q=G_{i}(\left\{ 2\right\} ,\left[m_{i}\right]).$ Since $P$ and
$Q$ are tight, (\ref{eq:transversal-condition}) implies $\mbox{rank}(P)=\ell_{i+1}(1)$
and $\mbox{rank}(Q)=\ell_{i+1}(2).$ Next consider the submatrix of
$G_{i}$ given by \[
E=\left[\begin{array}{c}
P\\
Q\end{array}\right]=G_{i}\left(\left\{ 1,2\right\} ,\left[m_{i}\right]\right).\]
By (\ref{eq:transversal-condition}), we have $\mbox{rank}(E)\geq\ell_{i+1}(1)+\ell_{i+1}(2).$
However, by considering the rows of $E,P,$ and $Q,$ we see that,
$\mbox{rank}(E)\leq\mbox{rank}(P)+\mbox{rank}(Q)$, and therefore
$\mbox{rank}(E)\leq\ell_{i+1}(1)+\ell_{i+1}(2)$. Thus, $\mbox{rank}(E)=\ell_{i+1}(1)+\ell_{i+1}(2)$.
$E$ is therefore both a tight and a proper submatrix of $G_{i}$. 
\item $|W'|=m_{i+1}$ and $|U'|=1.$ Without loss of generality suppose
that $Q=G_{i}\left(\left[m_{i+1}\right],\left\{ 1\right\} \right)$.
We will next consider a collection of subcases. Suppose first that
other than $P$ and $Q$ there exists another tight submatrix $K$.
If $K$ is proper then there is nothing further to prove. If $K$
is not proper then without loss of generality we can assume that $K$
is either of the form $K=G_{i}(\left\{ 2\right\} ,\left[m_{i}\right])$
or $K=G_{i}\left(\left[m_{i+1}\right],\left\{ 2\right\} \right).$
If $K=G_{i}(\left\{ 2\right\} ,\left[m_{i}\right])$, then the matrix
$\left[\begin{array}{c}
P\\
K\end{array}\right]$ is proper by our argument in the previous case. Likewise, if $K=G_{i}\left(\left[m_{i+1}\right],\left\{ 2\right\} \right)$
then the matrix $\left[\begin{array}{cc}
Q & K\end{array}\right]$ is proper. Suppose next that $P$ and $Q$ are the only tight submatrices
of $G_{i}.$ We already have assumed that no other row from $G_{i}(\left\{ 2,3,\cdots,m_{i+1}\right\} ,\left[m_{i}\right])=T(\left\{ 2,3,\cdots,m_{i+1}\right\} ,\left[m_{i}\right])$
can be removed without violating some rank condition (\ref{eq:transversal-condition}).
Since $P$ and $Q$ are the only tight submatrices of $G_{i},$ removing
one row from $G_{i}(\left\{ 2,3,\cdots,m_{i+1}\right\} ,\left[m_{i}\right])$
will cause the violation of a rank condition only for the tight submatrix
$Q=G_{i}\left(\left[m_{i+1}\right],\left\{ 1\right\} \right).$ If
a row in $G_{i}\left(\left\{ 2,3,\cdots,m_{i+1}\right\} ,\left\{ 1\right\} \right)$
was a linear combination of some other rows in $Q$ it could be removed
without violating any rank condition for submatrix $Q.$ Since $Q$
is a tight submatrix and we constructed $T$ so that no further rows
could be removed from it without violating a rank condition, this
is impossible. Hence every row in $G_{i}(\left\{ 2,3,\cdots,m_{i+1}\right\} ,\left\{ 1\right\} )$
is independent from all other rows in $Q.$ Thus \begin{equation}
\mbox{rank}(Q)=\mbox{rank}(G_{i}(1,1))+r{}_{i+1}(2)+\cdots+r{}_{i+1}(m_{i+1}),\label{eq:5}\end{equation}
where $r{}_{i+1}(j)$ is the number of rows of the $j$th row block
of $G_{i}$. By inequality (\ref{eq:transversal-condition}) for $U=\left[m_{i}\right]$
and $W=\left\{ j\right\} ,$ we have $\mbox{rank}(G_{i}(\left\{ j\right\} ,\left[m_{i}\right]))\geq\ell_{i+1}(j)$
for every $j\in\left\{ 2,\cdots,m_{i+1}\right\} .$ Furthermore, the
rank of a matrix is at most the number of rows of the matrix, and
hence $r{}_{i+1}(j)\geq\mbox{rank}(G_{i}(\left\{ j\right\} ,\left[m_{i}\right]))$.
Consequently $r{}_{i+1}(j)\geq\ell{}_{i+1}(j).$ This together with
(\ref{eq:5}) and the fact that $\mbox{rank}(Q)=\ell_{i}(1)$ imply
that\begin{equation}
\mbox{rank}(G_{i}(1,1))\leq\ell_{i}(1)-\ell_{i+1}(2)-\cdots-\ell_{i+1}(m_{i+1})=\ell_{i}(1)+\ell_{i+1}(1)-R_{\mathbf{d}_{i}}.\label{eq:6}\end{equation}
However, by evaluating inequality (\ref{eq:transversal-condition})
for $U=\left\{ 1\right\} $ and $W=\left\{ 1\right\} $ we obtain\begin{equation}
\mbox{rank}(G_{i}(1,1))\geq\ell_{i}(1)+\ell_{i+1}(1)-R_{\mathbf{d}_{i}}.\label{eq:66}\end{equation}
 (\ref{eq:6}) and (\ref{eq:66}) imply that\[
\mbox{rank}(G_{i}(1,1))=\ell_{i}(1)+\ell_{i+1}(1)-R_{\mathbf{d}_{i}}.\]
 Hence $G_{i}(1,1)$ is tight, and this contradicts the assumption
that $P$ and $Q$ are the only tight submatrices of $G_{i}$. 
\end{itemize}

\subsection*{Proof of Lemma \ref{lem:3}: }

Let $p$ be the index of the first column block of $G_{A}$ and let
$n_{i}+1,\cdots,m_{i}$ respectively be the indices of the other column
blocks. Furthermore let $1,\cdots,n_{i+1}$ be the indices of the
row blocks of $G_{A}.$ Select any two subsets $W\subseteq\left[n_{i+1}\right]$
and $U\subseteq\left\{ p,n_{i}+1,\cdots,m_{i}\right\} .$ We consider
two cases: 
\begin{enumerate}
\item If $p\in U:$ consider the submatrix $G_{A}(W,U)$. This submatrix
is the same as $G_{i}(W,U')$ where $U'=(U\backslash\left\{ p\right\} )\cup\left[n_{i}\right].$
Since $G_{i}(W,U')$ satisfies the condition of inequality (\ref{eq:transversal-condition})
for vector $\mathbf{d}_{i},$ we have:\begin{equation}
\mbox{rank}(G_{A}(W,U))\geq\sum_{j\in W}\ell_{i+1}(j)+\sum_{k\in U'}\ell_{i}(k)-R_{\mathbf{d}_{i}}.\label{eq:7}\end{equation}
 If we expand the right hand side of (\ref{eq:7}), we have:\begin{align}
\mbox{rank}(G_{A}(W,U))\geq & \sum_{j\in W}\ell_{i+1}(j)+\sum_{k\in U\backslash\left\{ p\right\} }\ell_{i}(k)+\sum_{t=1}^{n_{i}}\ell_{i}(t)-R_{\mathbf{d}_{i}}\nonumber \\
= & \sum_{j\in W}\ell_{i+1}(j)+\sum_{k\in U\backslash\left\{ p\right\} }\ell_{i}(k)-\sum_{t=n_{i}+1}^{m_{i}}\ell_{i}(t)\nonumber \\
= & \sum_{j\in W}\ell_{i+1}(j)+\sum_{k\in U\backslash\left\{ p\right\} }\ell_{i}(k)-(R_{A}-\mbox{rank}(P))\label{eq:17D}\end{align}
where (\ref{eq:17D}) follows by (\ref{eq:rank(P)}). The last expression
is the condition of inequality (\ref{eq:transversal-condition}) for
$G_{A}(W,U)$ when $G_{A}$ supports flow $\mathbf{d}_{A}.$
\item If $p\notin U:$ define $V=U\cup\left\{ p\right\} $ so that we can
apply (\ref{eq:17D}): \begin{equation}
\mbox{rank}(G_{A}(W,V))\geq\sum_{j\in W}\ell_{i+1}(j)+\sum_{k\in U}\ell_{i}(k)+\mbox{rank}(P)-R_{A}.\label{eq:8}\end{equation}
 Observe that \begin{align}
\mbox{rank}(G_{A}(W,U))\geq & \mbox{ rank}(G_{A}(W,V))-\mbox{rank}(G_{A}(W,\left\{ p\right\} ))\nonumber \\
\geq & \mbox{ rank}(G_{A}(W,V))-\mbox{rank}(P).\label{eq:18C}\end{align}
 By (\ref{eq:8}) and (\ref{eq:18C}),\[
\mbox{rank}(G_{A}(W,U))\geq\sum_{j\in W}\ell_{i+1}(j)+\sum_{k\in U}\ell_{i}(k)-R_{A},\]
which is the rank condition (\ref{eq:transversal-condition}) for
$G_{A}(W,U)$ when $G_{A}$ supports flow $\mathbf{d}_{A}.$
\end{enumerate}
The block matrix $G_{A}$ has block dimension $n_{i+1}\times(m_{i}-n_{i}+1).$
Therefore, $G_{A}$ has strictly fewer blocks than $G_{i}$ unless
$n_{i+1}=m_{i+1}$ and $m_{i}-n_{i}+1=m_{i}$ or $n_{i}=1.$ This
is impossible since $P$ is a proper submatrix of $G_{i}.$ It follows
from our induction hypothesis matrix $G_{A}$ supports flow $\mathbf{d}_{A}.$

\subsection*{Proof of Lemma \ref{lem:F_B flow}: }

Let the indices for the row and column blocks of $G_{B}$ respectively
be $p,n_{i+1}+1,\cdots,m_{i+1}$ and $1,\cdots,n_{i}.$ Define $G'_{B}=\left[\begin{array}{c}
P\\
B\end{array}\right]$. Observe that matrix $G'_{B}$ supports flow $\mathbf{d}_{B}$ since
the only change needed to the proof of Lemma \ref{lem:3} is to take
the transposition of all matrices. Next, since $G{}_{\mathbf{d}_{A}}$
is a solution for the flow $\mathbf{d}_{A},$ then by definition of
flow the matrix $P_{t}$ has full column rank and has $\mbox{rank}(P)$
columns. Therefore \begin{equation}
\mbox{rank}(P_{t})=\mbox{rank}(P).\label{eq:28C}\end{equation}
 Since $P_{c}$ is a submatrix of $P$ and $P_{t}$ is a submatrix
of $P_{c}$, we have \[
\mbox{rank}(P)\geq\mbox{rank}(P_{c})\geq\mbox{rank}(P_{t}),\]
and therefore\begin{equation}
\mbox{rank}(P_{c})=\mbox{rank}(P).\label{eq:28A}\end{equation}
This implies that the rows that are in $P$ but not in $P_{c}$ are
linear combination of the rows of $P_{c}.$ It follows that for every
$U\subseteq\left[n_{i}\right]$ the rows that are in $P(\left[n_{i+1}\right],U)$
but not in $P_{c}(\left[n_{i+1}\right],U)$ are linear combination
of the rows of $P_{c}(\left[n_{i+1}\right],U)$. Hence \[
\mbox{rank}(P_{c}(\left[n_{i+1}\right],U))=\mbox{rank}(P(\left[n_{i+1}\right],U)).\]
This implies that for every $W\subseteq\left\{ p,n_{i+1}+1,\cdots,m_{i+1}\right\} ,$
if $p\in W$ then $G_{B}(W,U)=\left[\begin{array}{c}
P_{c}(\left[n_{i+1}\right],U)\\
G_{i}(W\backslash p,U)\end{array}\right]$ and \[
\mbox{rank}(G_{B}(W,U))=\mbox{rank}\left(\left[\begin{array}{c}
P_{c}(\left[n_{i+1}\right],U)\\
G_{i}(W\backslash p,U)\end{array}\right]\right)=\mbox{rank}\left(\left[\begin{array}{c}
P(\left[n_{i+1}\right],U)\\
G_{i}(W\backslash p,U)\end{array}\right]\right)=\mbox{rank}(G'_{B}(W,U)).\]
 If $p\notin W,$ $G_{B}(W,U)$ does not depend on $P_{c}$ and $\mbox{rank}(G_{B}(W,U))=\mbox{rank}(G'_{B}(W,U)).$
Therefore in general for every $W$ and $U,$ replacing $P_{c}$ with
$P$ in $G_{B}$ will not change any rank function and we can still
use the result of Lemma \ref{lem:3}.

\subsection*{Proof of Lemma \ref{lem:F_i flow}: }

We have to verify two properties of $G_{\mathbf{d}_{i}}$. The first
is that every row block $j\in\left[m_{i+1}\right]$ of $G_{\mathbf{d}_{i}}$
has $\ell_{i+1}(j)$ rows and every column block $k\in\left[m_{i}\right]$
has $\ell_{i}(k)$ columns. The second property is that $\mbox{rank}(G_{\mathbf{d}_{i}})=R_{\mathbf{d}_{i}}.$
By our construction, the number of columns in column blocks $j\in\left\{ n_{i}+1,\cdots,m_{i}\right\} $
and the number of rows in row blocks $k\in\left[n_{i+1}\right]$ are
respectively determined by $G_{\mathbf{d}_{A}}.$ Since\[
\mathbf{d}_{A}=\left(\mbox{rank}(P),\ell_{i}(n_{i}+1),\ell_{i}(n_{i}+2),\cdots,\ell_{i}(m_{i});\ell_{i+1}(1),\cdots,\ell_{i+1}(n_{i+1})\right),\]
there are the right number of rows and columns in these cases. Furthermore,
the number of columns in column blocks $j\in\left[n_{i}\right]$ and
the number of rows in row blocks $i\in\left\{ n_{i+1}+1,\cdots,m_{i+1}\right\} $
are respectively determined by $G_{\mathbf{d}_{B}}.$ Since \[
\mathbf{d}_{B}=\left(\ell_{i}(1),\cdots,\ell_{i}(n_{i});\mbox{rank}(P),\ell_{i+1}(n_{i+1}+1),\cdots,\ell_{i+1}(m_{i+1})\right)\]
 there are the right number of columns and rows in these cases as
well. Hence every row block $j\in\left[m_{i+1}\right]$ of $G_{\mathbf{d}_{i}}$
has $\ell_{i+1}(j)$ rows and every column block $k\in\left[m_{i}\right]$
has $\ell_{i}(k)$ columns.

For the second property, observe that since $G_{\mathbf{d}_{B}}$
is a solution of flow $\mathbf{d}_{B}$ for $G_{B},$ then by the
definition of the solution of a flow the matrix $P_{ct}$ has full
row rank and has $\mbox{rank}(P)$ rows. Therefore \[
\mbox{rank}(P_{ct})=\mbox{rank}(P).\]
 Since $P_{ct}$ is a submatrix of $P_{cr}$ consisting of a subset
of its rows, $\mbox{rank}(P_{cr})\geq\mbox{rank}(P_{ct})$. Because
$P_{cr}$ is a submatrix of $P,$ $\mbox{rank}(P_{cr})\leq\mbox{rank}(P).$
Thus \begin{equation}
\mbox{rank}(P_{cr})=\mbox{rank}(P),\label{eq:28B}\end{equation}
 and so the rows that are in $P_{cr}$ are linear combinations of
the rows in $P_{ct}.$ Next since the rows of $B_{t}$ are also rows
of the full--rank matrix $G_{\mathbf{d}_{B}},$ it follows that all
rows of $B_{t}$ are linearly independent and are independent from
all other rows of $G_{\mathbf{d}_{B}}$ and consequently from all
other rows in $G'_{\mathbf{d}_{B}}.$ Therefore, all rows in $\left[\begin{array}{cc}
B_{t} & L_{t}\end{array}\right]$ are linearly independent and are independent from all other rows
in $G_{\mathbf{d}_{i}}.$ Therefore the two relationships follow:
\begin{equation}
\mbox{rank}(G_{\mathbf{d}_{i}})=\mbox{rank}\left(\left[\begin{array}{cc}
B_{t} & L_{t}\end{array}\right]\right)+\mbox{rank}\left(\left[\begin{array}{cc}
P_{cr} & A_{t}\end{array}\right]\right),\label{eq:9}\end{equation}
\begin{equation}
\mbox{rank}\left(\left[\begin{array}{cc}
B_{t} & L_{t}\end{array}\right]\right)=\sum_{j=n_{i+1}+1}^{m_{i+1}}\ell_{i+1}(j).\label{eq:10}\end{equation}
 Notice that by (\ref{eq:28A}) and (\ref{eq:28B}) $\mbox{rank}(P_{cr})=\mbox{rank}(P_{c})$.
Since $P_{cr}$ is a submatrix of $P_{c}$ consisting of a subset
of its columns, the columns of $P_{c}$ are linear combinations of
the columns of $P_{cr}.$ Therefore\begin{equation}
\mbox{rank}\left(\left[\begin{array}{cc}
P_{cr} & A_{t}\end{array}\right]\right)=\mbox{rank}\left(\left[\begin{array}{cc}
P_{c} & A_{t}\end{array}\right]\right).\label{eq:10A}\end{equation}
By (\ref{eq:28C}) and (\ref{eq:28A}), $\mbox{rank}(P_{c})=\mbox{rank}(P_{t}).$
Recall that $P_{t}$ is a submatrix of $P_{c}$ consisting of a subset
of its columns. Therefore the columns of $P_{c}$ are linear combinations
of the columns of $P_{t}$. Hence,\begin{equation}
\mbox{rank}\left(\left[\begin{array}{cc}
P_{c} & A_{t}\end{array}\right]\right)=\mbox{rank}\left(\left[\begin{array}{cc}
P_{t} & A_{t}\end{array}\right]\right).\label{eq:10B}\end{equation}
 Since $G_{\mathbf{d}_{A}}=\left[\begin{array}{cc}
P_{t} & A_{t}\end{array}\right]$ is a solution for flow $\mathbf{d}_{A},$ it follows that $\mbox{rank}\left(G_{\mathbf{d}_{A}}\right)=\sum_{j=1}^{n_{i+1}}\ell_{i+1}(j).$
Thus, (\ref{eq:10A}) and (\ref{eq:10B}) imply \begin{equation}
\mbox{rank}\left(\left[\begin{array}{cc}
P_{cr} & A_{t}\end{array}\right]\right)=\sum_{j=1}^{n_{i+1}}\ell_{i+1}(j).\label{eq:11}\end{equation}
 From (\ref{eq:9}), (\ref{eq:10}), and (\ref{eq:10A}) we conclude
that\[
\mbox{rank}\left(G_{\mathbf{d}_{i}}\right)=\sum_{j=1}^{m_{i+1}}\ell_{i+1}(j)=R_{\mathbf{d}_{i}}.\]

\subsection*{Proof of Lemma \ref{lem:7}:}

\subsubsection*{Part I: }

Here we only prove the submodularity of function $f_{A}(T)$ as the
proof for $f_{B}(T)$ is similar. Consider layers $\mathcal{O}_{i}$
and $\mathcal{O}_{i+1}$ and transfer function $G_{i}.$ Recall that
$G_{i}(\Omega)$ is the transfer function from the nodes in $\Omega$
to the nodes in $\bar{\Omega}=\left(\mathcal{O}_{i}\cup\mathcal{O}_{i+1}\right)\backslash\Omega.$
We first prove that:
\begin{lem}
\label{lem:9} $\mbox{rank}(G_{i}(\Omega))$ is a submodular function
over cuts $\Omega\subseteq\mathcal{O}_{i}\cup\mathcal{O}_{i+1}$;
i.e., for every two such cuts $\Omega_{1}$ and $\Omega_{2}:$ \[
\mbox{rank}\left(G_{i}\left(\Omega_{1}\right)\right)+\mbox{rank}\left(G_{i}\left(\Omega_{2}\right)\right)\geq\mbox{rank}\left(G_{i}\left(\Omega_{1}\cap\Omega_{2}\right)\right)+\mbox{rank}\left(G_{i}\left(\Omega_{1}\cup\Omega_{2}\right)\right).\]

\end{lem}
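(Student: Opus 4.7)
The plan is to reduce the submodularity of $\mathrm{rank}(G_i(\Omega))$ to a purely subspace-theoretic inequality and then settle that inequality by one application of the dimension identity $\dim A+\dim B=\dim(A+B)+\dim(A\cap B)$. Identify $G_i$ with a matrix whose rows are indexed by the coordinates of received vectors at $\mathcal{O}_{i+1}$ and whose columns by the coordinates of transmitted vectors at $\mathcal{O}_i$. For a cut $\Omega\subseteq\mathcal{O}_i\cup\mathcal{O}_{i+1}$ set $U=\Omega\cap\mathcal{O}_i$, $W=\bar{\Omega}\cap\mathcal{O}_{i+1}$, write $V_U\subseteq\mathbb{F}_q^N$ for the column span of those columns of $G_i$ coming from nodes in $U$ (where $N$ is the total number of rows), and let $\pi_W$ be coordinate projection onto the rows coming from nodes in $W$. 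Then $\mathrm{rank}(G_i(\Omega))=\dim\pi_W(V_U)$.

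For two cuts $\Omega_1,\Omega_2$ with $U_j,W_j$ defined accordingly, De Morgan's laws yield $W_{\Omega_1\cap\Omega_2}=W_1\cup W_2$ and $W_{\Omega_1\cup\Omega_2}=W_1\cap W_2$, together with $U_{\Omega_1\cap\Omega_2}=U_1\cap U_2$ and $U_{\Omega_1\cup\Omega_2}=U_1\cup U_2$. Since $V_{U_1\cup U_2}=V_{U_1}+V_{U_2}$ and $V_{U_1\cap U_2}\subseteq V_{U_1}\cap V_{U_2}$, monotonicity of $\dim\pi_{W}$ under subspace inclusion reduces the lemma to the subspace inequality
\[
\dim\pi_{W_1}(V_1)+\dim\pi_{W_2}(V_2)\ \ge\ \dim\pi_{W_1\cap W_2}(V_1+V_2)+\dim\pi_{W_1\cup W_2}(V_1\cap V_2) \qquad (\star)
\]
for arbitrary subspaces $V_1,V_2\subseteq\mathbb{F}_q^N$ and arbitrary index sets $W_1,W_2$.

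To prove $(\star)$, write $\dim\pi_W(X)=\dim X-\dim(X\cap Z_W)$ with $Z_W:=\ker\pi_W$, and observe $Z_{W_1\cap W_2}=Z_{W_1}+Z_{W_2}$ and $Z_{W_1\cup W_2}=Z_{W_1}\cap Z_{W_2}$. Expanding both sides of $(\star)$ via this identity and applying $\dim V_1+\dim V_2=\dim(V_1+V_2)+\dim(V_1\cap V_2)$ to the left-hand side cancels the $\dim(V_1+V_2)$ and $\dim(V_1\cap V_2)$ terms and reduces $(\star)$ to
\[
\dim\bigl((V_1+V_2)\cap(Z_{W_1}+Z_{W_2})\bigr)+\dim(V_1\cap V_2\cap Z_{W_1}\cap Z_{W_2})\ \ge\ \dim(V_1\cap Z_{W_1})+\dim(V_2\cap Z_{W_2}).
\]
But $(V_1\cap Z_{W_1})+(V_2\cap Z_{W_2})$ is a subspace of $(V_1+V_2)\cap(Z_{W_1}+Z_{W_2})$, and by the dimension formula applied to those two summands its dimension equals $\dim(V_1\cap Z_{W_1})+\dim(V_2\cap Z_{W_2})-\dim(V_1\cap V_2\cap Z_{W_1}\cap Z_{W_2})$. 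Rearranging gives the displayed inequality and hence $(\star)$.

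The main obstacle is conceptual rather than computational: when $\Omega$ grows, the column side on $\mathcal{O}_i$ grows but the row side on $\mathcal{O}_{i+1}$ shrinks, so one cannot combine the textbook submodularity of matrix rank in the rows alone (or in the columns alone) to get the statement. The subspace reformulation isolates the single matrix-specific ingredient, namely $V_{U_1\cap U_2}\subseteq V_{U_1}\cap V_{U_2}$, while the rest becomes generic dimension arithmetic for intersections and sums of subspaces.
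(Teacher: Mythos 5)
Your proof is correct, and it takes a genuinely different route from the paper. The paper argues directly on block matrices: it reorders the row and column blocks so that $G_i(\Omega_1)$ and $G_i(\Omega_2)$ overlap in a common block $C$, proves two exchange inequalities of the form $\mathrm{rank}(G_i(\Omega_1))+\mathrm{rank}(C)\geq\mathrm{rank}([A_1\ C])+\mathrm{rank}\bigl(\begin{smallmatrix}A_2\\ C\end{smallmatrix}\bigr)$ by counting maximal sets of rows that remain independent over $C$, and then closes by applying the classical one-sided (rows-only and columns-only) submodularity of matrix rank to $G_i(\Omega_1\cup\Omega_2)$ and $G_i(\Omega_1\cap\Omega_2)$. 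You instead recast $\mathrm{rank}(G_i(\Omega))$ as $\dim\pi_W(V_U)$, use the kernel identity $\dim\pi_W(X)=\dim X-\dim(X\cap Z_W)$ together with $Z_{W_1\cap W_2}=Z_{W_1}+Z_{W_2}$, $Z_{W_1\cup W_2}=Z_{W_1}\cap Z_{W_2}$, and reduce everything to one application of the modular dimension law via the inclusion $(V_1\cap Z_{W_1})+(V_2\cap Z_{W_2})\subseteq(V_1+V_2)\cap(Z_{W_1}+Z_{W_2})$; the only matrix-specific input is $V_{U_1\cap U_2}\subseteq V_{U_1}\cap V_{U_2}$, handled by monotonicity of $\dim\pi_W$. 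All the set-theoretic bookkeeping ($U$'s intersect/union while $W$'s union/intersect) checks out, and the cancellation step is exact. What your route buys is a coordinate-free argument with no figure-dependent block reordering, and in fact a more general statement: submodularity of $(U,W)\mapsto\dim\pi_W(V_U)$ for arbitrary families of subspaces, a linking-system/bimatroid-type fact. What the paper's route buys is that it stays entirely within the block-matrix language used throughout its transversal-theoretic development, so the same manipulations reappear in the proofs of Lemmas \ref{lem:tight-submatrix}--\ref{lem:F_i flow}.
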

We point out that the preceding result was first proved in \cite{key-1}
in order to study the time--expanded representation of a network which
is not layered. \cite{key-1} established Lemma \ref{lem:9} through
an information theoretic argument involving the submodularity of the
entropy function. We offer a new and combinatorial proof of Lemma
\ref{lem:9}.
\begin{proof}
Consider matrix $G_{i}$ in Figure \ref{fig:1}-(a). Suppose that
we have reordered the row blocks of $G_{i}$ and the column blocks
of $G_{i}$ such that the blocks corresponding to the transfer matrices
of $G_{i}(\Omega_{1})$ and $G_{i}(\Omega_{2})$ appear as in Figure
\ref{fig:1}-(a). %
\begin{figure}
\includegraphics[bb=5bp 0bp 530bp 250bp,clip,scale=0.6]{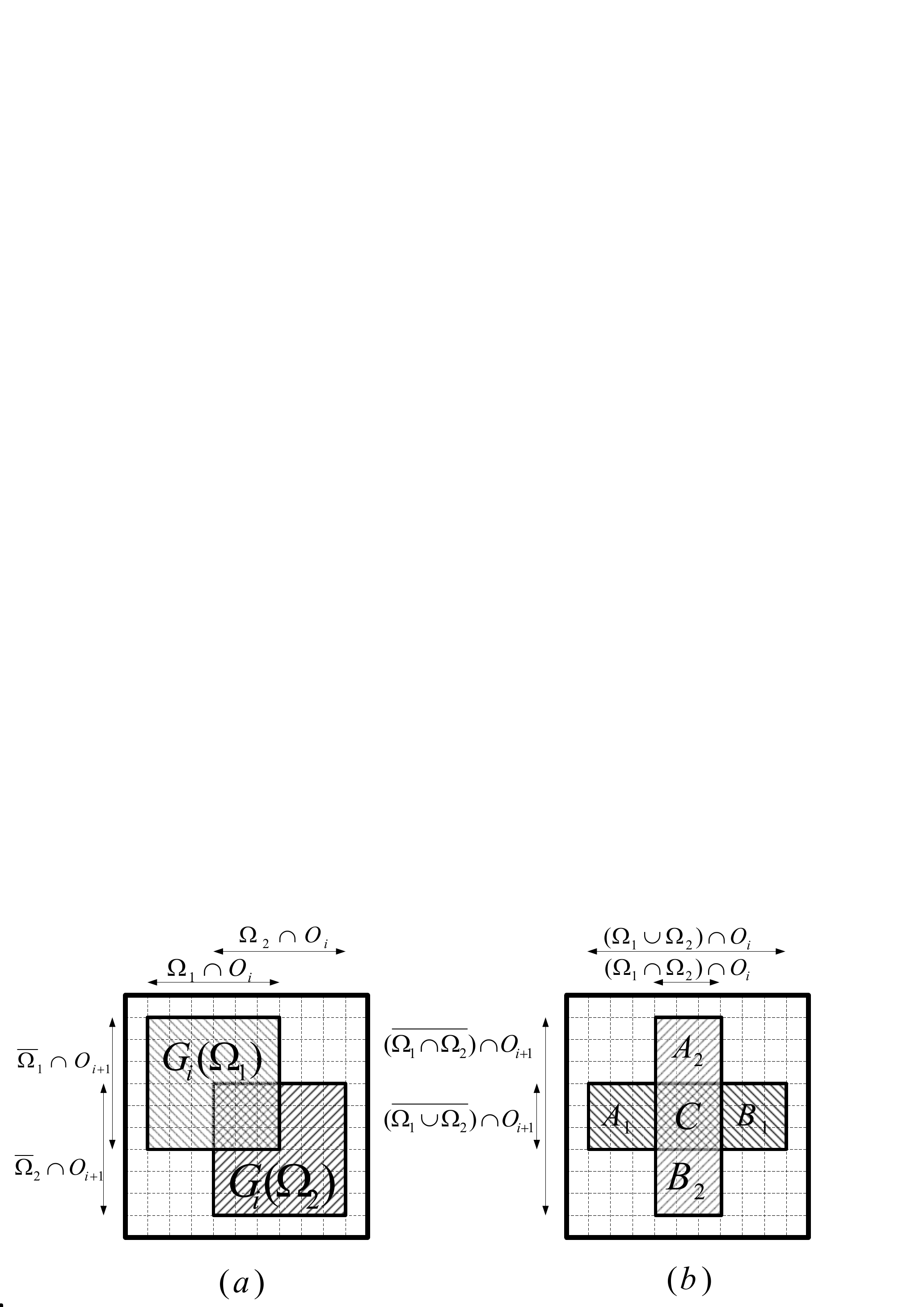}\centering\caption{\label{fig:1}(a) The matrices $G_{i}(\Omega_{1})$ and $G_{i}(\Omega_{2})$,
(b) the matrices $G_{i}(\Omega_{1}\cap\Omega_{2})$ and $G_{i}(\Omega_{1}\cup\Omega_{2})$.}

\end{figure}
We have depicted and labeled the different parts of the transfer matrices
$G_{i}(\Omega_{1}\cap\Omega_{2})$ and $G_{i}(\Omega_{1}\cup\Omega_{2})$
in Figure \ref{fig:1}-(b). Therefore we have:\begin{align*}
G_{i}(\Omega_{1}\cap\Omega_{2})=\left[\begin{array}{c}
A_{2}\\
C\\
B_{2}\end{array}\right],\quad & G_{i}(\Omega_{1}\cup\Omega_{2})=\left[\begin{array}{ccc}
A_{1} & C & B_{1}\end{array}\right].\end{align*}
 We first prove that:\begin{equation}
\mbox{rank}\left(G_{i}(\Omega_{1})\right)+\mbox{rank}\left(C\right)\geq\mbox{rank}\left(\left[\begin{array}{cc}
A_{1} & C\end{array}\right]\right)+\mbox{rank}\left(\left[\begin{array}{c}
A_{2}\\
C\end{array}\right]\right).\label{eq:22}\end{equation}
Let $p$ be the maximum number of rows $a_{1},\cdots,a_{p}$ in $A_{2}$
which are independent in $\left[\begin{array}{c}
a_{1}\\
\vdots\\
a_{p}\\
C\end{array}\right]$ and $q$ be the maximum number of rows $b_{1},\cdots,b_{q}$ in $G_{i}(\Omega_{1})$
but not in $\left[\begin{array}{cc}
A_{1} & C\end{array}\right]$ which are independent in $\left[\begin{array}{c}
b_{1}\\
\vdots\\
b_{q}\\
A_{1}\quad C\end{array}\right].$ We have:\begin{align}
\mbox{rank}\left(\left[\begin{array}{c}
A_{2}\\
C\end{array}\right]\right)= & \mbox{ rank}\left(C\right)+p,\label{eq:22A}\\
\mbox{rank}\left(G_{i}(\Omega_{1})\right)= & \mbox{ rank}\left(\left[\begin{array}{cc}
A_{1} & C\end{array}\right]\right)+q.\label{eq:22B}\end{align}
Let $a'{}_{1},\cdots,a'_{p}$ denote the rows in $G_{i}(\Omega_{1})$
which respectively end in the rows $a{}_{1},\cdots,a{}_{p}$ of $A_{2}.$
Then these rows are clearly independent in $\left[\begin{array}{c}
a'_{1}\\
\vdots\\
a'_{p}\\
A_{1}\quad C\end{array}\right],$ and therefore $q\geq p$. By (\ref{eq:22A}) and (\ref{eq:22B}),\[
\mbox{rank}\left(G_{i}(\Omega_{1})\right)-\mbox{rank}\left(\left[\begin{array}{cc}
A_{1} & C\end{array}\right]\right)\geq\mbox{rank}\left(\left[\begin{array}{c}
A_{2}\\
C\end{array}\right]\right)-\mbox{rank}\left(C\right),\]
 which implies (\ref{eq:22}). A similar argument for $G_{i}(\Omega_{2})$
implies that:\begin{equation}
\mbox{rank}\left(G_{i}(\Omega_{2})\right)+\mbox{rank}(C)\geq\mbox{rank}\left(\left[\begin{array}{cc}
C & B_{1}\end{array}\right]\right)+\mbox{rank}\left(\left[\begin{array}{c}
C\\
B_{2}\end{array}\right]\right).\label{eq:23}\end{equation}
By adding together inequalities (\ref{eq:22}) and (\ref{eq:23})
we find that\begin{eqnarray}
\mbox{rank}\left(G_{i}(\Omega_{1})\right)+\mbox{rank}\left(G_{i}(\Omega_{2})\right) & \geq & \left(\mbox{rank}\left(\left[\begin{array}{cc}
A_{1} & C\end{array}\right]\right)+\mbox{rank}\left(\left[\begin{array}{cc}
C & B_{1}\end{array}\right]\right)-\mbox{rank}(C)\right)\nonumber \\
 & + & \left(\mbox{rank}\left(\left[\begin{array}{c}
A_{2}\\
C\end{array}\right]\right)+\mbox{rank}\left(\left[\begin{array}{c}
C\\
B_{2}\end{array}\right]\right)-\mbox{rank}(C)\right).\label{eq:24}\end{eqnarray}
 If we use the submodularity of the rank function of a matrix \cite{key-5}
we deduce that if $W_{1}$ and $W_{2}$ are the indices of rows (columns)
of some matrix and $\mbox{rank}(W)$ is the number of independent
rows (columns) among those with indices in $W$, then: \begin{equation}
\mbox{rank}(W_{1})+\mbox{rank}(W_{2})\geq\mbox{rank}(W_{1}\cap W_{2})+\mbox{rank}(W_{1}\cup W_{2}).\label{eq:25}\end{equation}
 Applying (\ref{eq:25}) to the columns of matrix $G_{i}(\Omega_{1}\cup\Omega_{2})$
and to the rows of matrix $G_{i}(\Omega_{1}\cap\Omega_{2})$ we find
that: \begin{align}
\mbox{rank}\left(\left[\begin{array}{cc}
A_{1} & C\end{array}\right]\right)+\mbox{rank}\left(\left[\begin{array}{cc}
C & B_{1}\end{array}\right]\right) & \geq\mbox{rank}(C)+\mbox{rank}\left(G_{i}\left(\Omega_{1}\cup\Omega_{2}\right)\right)\label{eq:26}\\
\mbox{rank}\left(\left[\begin{array}{c}
A_{2}\\
C\end{array}\right]\right)+\mbox{rank}\left(\left[\begin{array}{c}
C\\
B_{2}\end{array}\right]\right) & \ge\mbox{rank}(C)+\mbox{rank}\left(G_{i}\left(\Omega_{1}\cap\Omega_{2}\right)\right).\label{eq:27}\end{align}
 Lemma \ref{lem:9} follows from (\ref{eq:24}), (\ref{eq:26}), and
(\ref{eq:27}). 
\end{proof}
Next we extend Lemma \ref{lem:9} to a multilayer network $\mathcal{N}_{A}:$
\begin{lem}
\label{lem:90} In the network $\mathcal{N}_{A}$, $\mathcal{C}(\Omega)$
is a submodular function over cuts $\Omega\subseteq\mathcal{O}_{1}\cup\cdots\cup\mathcal{O}_{K}$.\end{lem}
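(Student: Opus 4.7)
The plan is to reduce the multilayer submodularity claim to the single-layer case established in Lemma \ref{lem:9} by exploiting the block-diagonal structure of the cut transfer function $G(\Omega)$. Concretely, in the subnetwork $\mathcal{N}_{A}$ the capacity decomposes layer-by-layer as
\[
\mathcal{C}(\Omega)=\sum_{i=1}^{K-1}\mbox{rank}(G_{i}(\Omega)),
\]
and each summand $\mbox{rank}(G_{i}(\Omega))$ depends on $\Omega$ only through its restriction $\Omega^{(i)}\doteq\Omega\cap(\mathcal{O}_{i}\cup\mathcal{O}_{i+1})$ to two consecutive layers. This is immediate from the definition of $G_{i}(\Omega)$ as the submatrix of $G_{i}$ with columns indexed by the transmitters of $\Omega$ in $\mathcal{O}_{i}$ and rows indexed by the receivers of $\bar{\Omega}$ in $\mathcal{O}_{i+1}$.

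Next I would observe that the restriction map $\Omega\mapsto\Omega^{(i)}$ is a lattice homomorphism: for any two cuts $\Omega_{1}$ and $\Omega_{2}$,
\[
(\Omega_{1}\cap\Omega_{2})^{(i)}=\Omega_{1}^{(i)}\cap\Omega_{2}^{(i)}\qquad\text{and}\qquad(\Omega_{1}\cup\Omega_{2})^{(i)}=\Omega_{1}^{(i)}\cup\Omega_{2}^{(i)},
\]
since intersection and union distribute over the fixed set $\mathcal{O}_{i}\cup\mathcal{O}_{i+1}$. Combined with the dependence only on $\Omega^{(i)}$, Lemma \ref{lem:9} applied to $\Omega_{1}^{(i)}$ and $\Omega_{2}^{(i)}$ then yields
\[
\mbox{rank}(G_{i}(\Omega_{1}))+\mbox{rank}(G_{i}(\Omega_{2}))\geq\mbox{rank}(G_{i}(\Omega_{1}\cap\Omega_{2}))+\mbox{rank}(G_{i}(\Omega_{1}\cup\Omega_{2}))
\]
for every $i\in\{1,\cdots,K-1\}$.

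Summing these inequalities over $i$ gives the desired submodularity of $\mathcal{C}(\Omega)$ on the power set of $\mathcal{O}_{1}\cup\cdots\cup\mathcal{O}_{K}$, using only that a non-negative sum of submodular functions is submodular. There is no real obstacle here beyond bookkeeping: the only subtlety worth highlighting in the write-up is the observation that $G_{i}(\Omega)$ is indifferent to nodes of $\Omega$ lying outside $\mathcal{O}_{i}\cup\mathcal{O}_{i+1}$, which is precisely what allows the one-layer result of Lemma \ref{lem:9} to be invoked layer-by-layer without any modification.
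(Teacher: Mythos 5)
Your proposal is correct and is essentially the same argument as the paper's: decompose each cut into its restrictions $\Omega\cap(\mathcal{O}_{i}\cup\mathcal{O}_{i+1})$, note that restriction commutes with union and intersection, apply Lemma \ref{lem:9} to each layer pair, and sum the resulting submodular inequalities.
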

\begin{proof}
We decompose $\Omega$ into the subsets $\Omega=\Omega_{1}\cup\Omega_{2}\cup\cdots\cup\Omega_{K-1},$
where $\Omega_{i}=\Omega\cap(\mathcal{O}_{i}\cup\mathcal{O}_{i+1})$
defines a cut of the subnetwork of $\mathcal{N}_{A}$ with set of
nodes $\mathcal{O}_{i}\cup\mathcal{O}_{i+1}.$ We have \[
\mathcal{C}(\Omega)=\sum_{j=1}^{K-1}\mbox{rank}\left(G_{j}(\Omega)\right)=\sum_{j=1}^{K-1}\mbox{rank}\left(G_{j}(\Omega_{j})\right).\]
If two cuts $\Omega$ and $\Omega'$ are respectively decomposed into
$\left[\Omega_{i}\right]$ and $\left[\Omega'_{i}\right]$, then $\Omega\cap\Omega'$
and $\Omega\cup\Omega'$ will be respectively decomposed into $\left[\Omega_{i}\cap\Omega'_{i}\right]$
and $\left[\Omega_{i}\cup\Omega'_{i}\right]$. By Lemma \ref{lem:9},
$\mbox{rank}\left(G_{j}(\Omega_{j})\right)$ is submodular. Therefore
\begin{align*}
\mathcal{C}(\Omega)+\mathcal{C}(\Omega')= & \sum_{j=1}^{K-1}\left(\mbox{rank}\left(G_{j}(\Omega_{j})\right)+\mbox{rank}\left(G_{j}(\Omega'_{j})\right)\right)\\
\geq & \sum_{j=1}^{K-1}\left(\mbox{rank}\left(G_{j}(\Omega_{j}\cap\Omega'_{j})\right)+\mbox{rank}\left(G_{j}(\Omega_{j}\cup\Omega'_{j})\right)\right)=\mathcal{C}(\Omega_{j}\cap\Omega'_{j})+\mathcal{C}(\Omega_{j}\cup\Omega'_{j}).\end{align*}
Since the sum of submodular functions is submodular, the final result
follows. 
\end{proof}
We next prove the following useful lemma:
\begin{lem}
\label{lem:91}For any function $f$ defined on the set $E$ and any
given set $V\subseteq E$, the function $g$ on the power set of $E$
defined as $g(U)=\sum_{i\in U\cap V}f(k)$ is submodular.\end{lem}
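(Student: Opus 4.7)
My plan is to observe that $g$ is in fact \emph{modular} (i.e.\ both submodular and supermodular), which is strictly stronger than what is required. The proof reduces to two elementary set--theoretic identities and an inclusion--exclusion style manipulation of the sum $\sum_{i\in A} f(i)$.

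First, I would write $U_1' = U_1\cap V$ and $U_2' = U_2\cap V$, so that $g(U_1) = \sum_{i\in U_1'} f(i)$ and $g(U_2) = \sum_{i\in U_2'} f(i)$. The key set--theoretic identities are that intersection distributes over both union and intersection with a fixed set:
\begin{equation*}
(U_1\cup U_2)\cap V = U_1'\cup U_2',\qquad (U_1\cap U_2)\cap V = U_1'\cap U_2'.
\end{equation*}
Hence $g(U_1\cup U_2) = \sum_{i\in U_1'\cup U_2'} f(i)$ and $g(U_1\cap U_2) = \sum_{i\in U_1'\cap U_2'} f(i)$.

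Next I would apply the trivial inclusion--exclusion identity for a sum of an arbitrary real--valued function over a finite set, namely
\begin{equation*}
\sum_{i\in U_1'} f(i) + \sum_{i\in U_2'} f(i) \;=\; \sum_{i\in U_1'\cup U_2'} f(i) + \sum_{i\in U_1'\cap U_2'} f(i),
\end{equation*}
which holds because each $i\in U_1'\cup U_2'$ is counted on the right exactly as many times as it appears among $U_1'$ and $U_2'$ on the left. Substituting the identities above yields $g(U_1)+g(U_2) = g(U_1\cup U_2) + g(U_1\cap U_2)$, which gives the submodular inequality (with equality) for arbitrary $U_1,U_2\subseteq E$.

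There is really no obstacle here; the only thing to check is that the sign of $f$ is irrelevant, since the inclusion--exclusion identity above is valid for any real--valued $f$ (no monotonicity or nonnegativity is needed). This is important because in the application of Lemma~\ref{lem:91} to the function $-\sum_{\mathcal{O}_i(j)\in\Omega\cap\mathcal{O}_i}\ell_i(j)+\sum_{\mathcal{O}_{i+1}(k)\in\Omega\cap\mathcal{O}_{i+1}}\ell_{i+1}(k)$, the relevant $f$ takes both positive and negative values (or rather, the function is written as a sum of two such modular pieces, each obtained by an appropriate choice of $V$ and $f$).
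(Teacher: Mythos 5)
Your proof is correct and is essentially the same as the paper's: both establish that $g$ is in fact modular, using the distributivity of intersection over union/intersection together with the elementary identity $\sum_{k\in A}f(k)+\sum_{k\in B}f(k)=\sum_{k\in A\cap B}f(k)+\sum_{k\in A\cup B}f(k)$. No issues.
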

\begin{proof}
By definition,\begin{align*}
g(U_{1})+g(U_{2})= & \sum_{k\in U_{1}\cap V}f(k)+\sum_{k\in U_{2}\cap V}f(k)\\
= & \sum_{k\in(U_{1}\cap V)\cap(U_{2}\cap V)}f(k)+\sum_{k\in(U_{1}\cap V)\cup(U_{2}\cap V)}f(k)\\
= & \sum_{k\in(U_{1}\cap U_{2})\cap V}f(k)+\sum_{k\in(U_{1}\cup U_{2})\cap V}f(k)\\
= & g(U_{1}\cap U_{2})+g(U_{1}\cup U_{2}).\end{align*}
 
\end{proof}
We next prove that $f_{A}$ is submodular. 
\begin{proof}
Suppose $T_{1},T_{2}\subseteq\mathcal{O}_{K},$ and let $A_{1}$ and
$A_{2}$ be the two cuts in $\mathcal{N}{}_{A}$ such that\begin{align*}
f_{A}(T_{1})=\mathcal{C}(A_{1})-\sum_{\mathcal{O}_{1}(j)\in A_{1}}\ell_{1}(j)+R, & \quad\mathcal{O}_{K}\cap\bar{A}_{1}=T_{1}\\
f_{A}(T_{2})=\mathcal{C}(A_{2})-\sum_{\mathcal{O}_{1}(j)\in A_{2}}\ell_{1}(j)+R, & \quad\mathcal{O}_{K}\cap\bar{A}_{2}=T_{2}\end{align*}
Consider $A_{1}\cap A_{2}$ and $A_{1}\cup A_{2}.$ By De Morgan's
laws, $\mathcal{O}_{K}\cap\left(\overline{A_{1}\cap A_{2}}\right)=\mathcal{O}_{K}\cap\left(\bar{A}_{1}\cup\bar{A}_{2}\right)=T_{1}\cup T_{2}$
and $\mathcal{O}_{K}\cap\left(\overline{A_{1}\cup A_{2}}\right)=\mathcal{O}_{K}\cap\left(\bar{A}_{1}\cap\bar{A}_{2}\right)=T_{1}\cap T_{2}.$
By (\ref{eq:f_A}) we have\begin{align}
f_{A}(T_{1}\cup T_{2})\leq\mathcal{C}(A_{1}\cap A_{2})-\sum_{\mathcal{O}_{1}(j)\in A_{1}\cap A_{2}}\ell_{1}(j)+R,\label{eq:demorgan1}\\
f_{A}(T_{1}\cap T_{2})\leq\mathcal{C}(A_{1}\cup A_{2})-\sum_{\mathcal{O}_{1}(j)\in A_{1}\cup A_{2}}\ell_{1}(j)+R.\label{eq:demorgan2}\end{align}
By the submodularity of $\mathcal{C}(\Omega)$ we have \begin{equation}
\mathcal{C}(A_{1}\cap A_{2})+\mathcal{C}(A_{1}\cup A_{2})\leq\mathcal{C}(A_{1})+\mathcal{C}(A_{2}).\label{eq:cut-sub}\end{equation}
 Furthermore \begin{equation}
\sum_{\mathcal{O}_{1}(j)\in A_{1}\cap A_{2}}\ell_{1}(j)+\sum_{\mathcal{O}_{1}(j)\in A_{1}\cup A_{2}}\ell_{1}(j)=\sum_{\mathcal{O}_{1}(j)\in A_{1}}\ell_{1}(j)+\sum_{\mathcal{O}_{1}(j)\in A_{2}}\ell_{1}(j).\label{eq:set-theory}\end{equation}
By (\ref{eq:demorgan1})--(\ref{eq:set-theory}), we have\[
f_{A}(T_{1}\cap T_{2})+f_{A}(T_{1}\cup T_{2})\leq f_{A}(T_{1})+f_{A}(T_{2}).\]
 
\end{proof}

\subsubsection*{Part II:}

We prove the result for $f_{A}$ and it is straightforward to modify
it for $f_{B}.$ It suffices to prove that for every $T\subseteq\mathcal{O}_{K}$
and every $\mathcal{O}_{K}(i)\notin T,$\[
f_{A}(T)\leq f_{A}\left(T\cup\left\{ \mathcal{O}_{K}(i)\right\} \right).\]
 Suppose that for $T$ and $\mathcal{O}_{K}(i)\notin T,$ cut $A$
in $\mathcal{N}_{A}$ achieves $f_{A}\left(T\cup\left\{ \mathcal{O}_{K}(i)\right\} \right).$
We have\begin{align}
f_{A}\left(T\cup\left\{ \mathcal{O}_{K}(i)\right\} \right)=\mathcal{C}(A)-\sum_{\mathcal{O}_{1}(j)\in A}\ell_{1}(j)+R, & \quad\mathcal{O}_{K}\cap\bar{A}=T\cup\left\{ \mathcal{O}_{K}(i)\right\} .\label{eq:inc1}\end{align}
 Next consider the cut $A'=A\cup\left\{ \mathcal{O}_{K}(i)\right\} .$
Notice that $\bar{A}'\cap\mathcal{O}_{K}=T.$ Therefore \begin{equation}
\mathcal{C}(A')-\sum_{\mathcal{O}_{1}(j)\in A'}\ell_{1}(j)+R\geq f_{A}(T).\label{eq:inc2}\end{equation}
 Observe that for every $j\leq K-2$, $G_{j}(A)=G_{j}(A').$ Furthermore
$G_{K-1}(A)$ has the same row blocks as $G_{K-1}(A')$ and an additional
row block corresponding to the transfer function from the nodes in
$A\cap\mathcal{O}_{K-1}$ to $\left\{ \mathcal{O}_{K}(i)\right\} .$
Therefore $\mathcal{C}(A)\geq\mathcal{C}(A').$ Finally, $\sum_{\mathcal{O}_{1}(j)\in A}\ell_{1}(j)=\sum_{\mathcal{O}_{1}(j)\in A'}\ell_{1}(j)$,
and hence\begin{equation}
\mathcal{C}(A)-\sum_{\mathcal{O}_{1}(j)\in A}\ell_{1}(j)+R\geq\mathcal{C}(A')-\sum_{\mathcal{O}_{1}(j)\in A'}\ell_{1}(j)+R.\label{eq:inc3}\end{equation}
(\ref{eq:inc1}), (\ref{eq:inc2}), and (\ref{eq:inc3}) imply that
$f_{A}(T)\leq f_{A}\left(T\cup\left\{ \mathcal{O}_{K}(i)\right\} \right).$

\subsubsection*{Part III:}

Since $\sum_{\mathcal{O}_{1}(j)\in\Omega_{A}}\ell_{1}(j)\leq R$ and
$\sum_{\mathcal{O}_{M}(k)\in\bar{\Omega}_{B}}\ell_{M}(k)\leq R$,
$f_{A}$ and $f_{B}$ are non--negative functions. Observe that by
choosing cut $A=\mathcal{O}_{1}\cup\cdots\cup\mathcal{O}_{K}$ for
network $\mathcal{N}_{A}$ and cut $B=\emptyset$ for network $\mathcal{N}_{B}$,
we find that $\bar{A}\cap\mathcal{O}_{K}=\emptyset,$ $B\cap\mathcal{O}_{K}=\emptyset,$
and\[
f_{A}(\emptyset)=\mathcal{C}(A)-\sum_{\mathcal{O}_{1}(j)\in A}\ell_{1}(j)+R=f_{B}(\emptyset)=\mathcal{C}(B)-\sum_{\mathcal{O}_{M}(j)\in\bar{B}}\ell_{M}(j)+R=0.\]

\subsection*{Proof of Lemma \ref{lem:final-lemma}:}

Suppose that $T_{0}$ achieves $\min_{T\subseteq\mathcal{O}_{K}}\left(f_{A}(T)+f_{B}(\mathcal{O}_{K}\backslash T)\right)$
and cuts $\Omega_{A}$ and $\Omega_{B}$ respectively achieve $f_{A}(T_{0})$
and $f_{B}(\mathcal{O}_{K}\backslash T_{0})$ in $\mathcal{N}_{A}$
and $\mathcal{N}_{B}.$ Then 

\begin{align}
f_{A}(T_{0})=\mathcal{C}(\Omega_{A})-\sum_{\mathcal{O}_{1}(j)\in\Omega_{A}}\ell_{1}(j)+R, & \quad\mathcal{O}_{K}\cap\bar{\Omega}_{A}=T_{0}\label{eq:A}\\
f_{B}(\mathcal{O}_{K}\backslash T_{0})=\mathcal{C}(\Omega_{B})-\sum_{\mathcal{O}_{M}(j)\in\bar{\Omega}_{B}}\ell_{M}(j)+R, & \quad\mathcal{O}_{K}\cap\Omega_{B}=\mathcal{O}_{K}\backslash T_{0}.\label{eq:B}\end{align}
 Let $\Omega=\Omega_{A}\cup\Omega_{B}$ be a cut in network $\mathcal{N}$.
Since $\mathcal{O}_{K}\cap\Omega_{A}=\mathcal{O}_{K}\cap\Omega_{B}=\mathcal{O}_{K}\backslash T_{0},$
it follows that $G_{i}(\Omega)=G_{i}(\Omega_{A})$ for $i\in\left\{ 1,\cdots,K-1\right\} $
and $G_{i}(\Omega)=G_{i}(\Omega_{B})$ for $i\in\left\{ K,\cdots,M-1\right\} .$
Thus we have \[
\mathcal{C}(\Omega)=\mathcal{C}(\Omega_{A})+\mathcal{C}(\Omega_{B}).\]
 From (\ref{eq:A}) and (\ref{eq:B}) we have\begin{equation}
f_{A}(T_{0})+f_{B}(\mathcal{O}_{K}\backslash T_{0})=\mathcal{C}(\Omega)-\sum_{\mathcal{O}_{1}(j)\in\Omega_{A}}\ell_{1}(j)-\sum_{\mathcal{O}_{M}(j)\in\bar{\Omega}_{B}}\ell_{M}(j)+2R.\label{eq:aboce}\end{equation}
 Since $\sum_{\mathcal{O}_{1}(j)\in\Omega_{A}}\ell_{1}(j)=\sum_{\mathcal{O}_{1}(j)\in\Omega}\ell_{1}(j)$
and $\sum_{\mathcal{O}_{1}(j)\in\bar{\Omega}_{B}}\ell_{M}(j)=\sum_{\mathcal{O}_{M}(j)\in\bar{\Omega}}\ell_{M}(j),$
(\ref{eq:aboce}) and (\ref{eq:necessity}) together imply $f_{A}(T_{0})+f_{B}(\mathcal{O}_{K}\backslash T_{0})\geq R$,
as desired. 

To prove the converse, consider a cut $\Omega$ in network $\mathcal{N}$
and partition it into two cuts $\Omega_{A}$ and $\Omega_{B}$ in
networks $\mathcal{N}_{A}$ and $\mathcal{N}_{B}$ respectively. Let
$T_{0}=\bar{\Omega}\cap\mathcal{O}_{K}.$ Then by definition\begin{align*}
f_{A}(T_{0})\leq\mathcal{C}(\Omega_{A})-\sum_{\mathcal{O}_{1}(j)\in\Omega_{A}}\ell_{1}(j)+R, & \quad\mathcal{O}_{K}\cap\bar{\Omega}_{A}=T_{0}\\
f_{B}(\mathcal{O}_{K}\backslash T_{0})\leq\mathcal{C}(\Omega_{B})-\sum_{\mathcal{O}_{M}(j)\in\bar{\Omega}_{B}}\ell_{M}(j)+R, & \quad\mathcal{O}_{K}\cap\Omega_{B}=\mathcal{O}_{K}\backslash T_{0}.\end{align*}
 Therefore\begin{align*}
R\leq & \min_{T\subseteq\mathcal{O}_{K}}\left(f_{A}(T)+f_{B}(\mathcal{O}_{K}\backslash T)\right)\leq f_{A}(T_{0})+f_{B}(\mathcal{O}_{K}\backslash T_{0})\\
\leq & \mathcal{C}(\Omega_{A})+\mathcal{C}(\Omega_{B})-\sum_{\mathcal{O}_{1}(j)\in\Omega_{A}}\ell_{1}(j)-\sum_{\mathcal{O}_{M}(j)\in\bar{\Omega}_{B}}\ell_{M}(j)+2R.\end{align*}
 If we substitute $\mathcal{C}(\Omega_{A})+\mathcal{C}(\Omega_{B})=\mathcal{C}(\Omega)$
and $\sum_{\mathcal{O}_{1}(j)\in\Omega_{A}}\ell_{1}(j)=\sum_{\mathcal{O}_{1}(j)\in\Omega}\ell_{1}(j)$
and $\sum_{\mathcal{O}_{M}(j)\in\bar{\Omega}_{B}}\ell_{M}(j)=\sum_{\mathcal{O}_{M}(j)\in\bar{\Omega}}\ell_{M}(j)$
into the preceding expression, we obtain \[
\mathcal{C}(\Omega)\geq\sum_{\mathcal{O}_{1}(j)\in\Omega}\ell_{1}(j)+\sum_{\mathcal{O}_{1}(j)\in\bar{\Omega}}\ell_{M}(j)-R,\]
 which is the final result.

\begin{thebibliography}{25}
\bibitem{key-8}R. Ahlswede, N. Cai, S-Y. R. Li, and R. W. Yeung,
{}``Network information flow,'' \emph{IEEE Trans. Inform. Theory,}
Vol. 46, pp. 1204-1216, July 2000. 

\bibitem{key-3}A. Amaudruz, C. Fragouli, {}``Combinatorial algorithms
for wireless information flow'', \emph{Proceedings of ACM-SIAM Symposium
on Discrete Algorithms (SODA09),} January 2009.

\bibitem{key-13} M. R. Aref, \emph{Information Flow in Relay Networks},
Ph.D. dissertation, Stanford Univ., Stanford, CA, 1980.

\bibitem{key-3} K. A. Atkinson, \emph{An Introduction to Numerical
Analysis}, 2nd edition, John Wiley \& Sons, New York, 1989.

\bibitem{key-2}S. Avestimehr, S. Diggavi and D. Tse, {}``A deterministic
approach to wireless relay networks'', \emph{Proceedings of Allerton
Conference on Communication, Control, and Computing,} Illinois, September
2007.

\bibitem{key-1}S. Avestimehr, S. Diggavi and D. Tse, {}``Wireless
network information flow'', \emph{Proceedings of Allerton Conference
on Communication, Control, and Computing,} Illinois, September 2007.

\bibitem{key-7}T. M. Cover and J. A. Thomas, \emph{Elements of Information
Theory}, Second Edition, John Wiley \& Sons, Inc., Hoboken, New Jersey,
2006.

\bibitem{key-21}R. Diestel, \emph{Graph Theory}, Third Edition, Springer,
Berlin, 2005.

\bibitem{key-19}J. Edmonds, {}``Submodular functions, matroids and
certain polyhedra'' In \emph{Combinatorial Structures and Their Applications;
Proceedings of the Calgary International Conference on Combinatorial
Structures and Their Applications} 1969 (R. Guy, H. Hanani, N. Sauer,
J. Schonheim, eds.), Gordon and Breach, New York, pp. 69\textendash{}87,
1970.

\bibitem{key-17}J. Edmonds, E. Giles, {}``A min-max relation for
submodular functions on graphs'', \emph{Annals of Discrete Mathematics,
}1\emph{,} pp. 185--204, 1977.

\bibitem{key-15}R. Etkin, D. Tse, and H. Wang, {}``Gaussian interference
channel capacity to within one bit'', Submitted.

\bibitem{key-22} L. R. Ford, and D. R. Fulkerson, {}``Maximal flow
through a network,'' \emph{Canadian Journal of Mathematics,} vol.
8, pp. 399--404, 1956.

\bibitem{key-10}M. Grötschel, L. Lovász, A. Schrijver, {}``The ellipsoid
method and its consequences in combinatorial optimization'', \emph{Combinatorica}
1, pp. 169-197\emph{,} 1981\emph{.}

\bibitem{key-18}S. Iwata, L. Fleischer, and S. Fujishige, {}``A
combinatorial, strongly polynomial--time algorithm for minimizing
submodular functions'', In \emph{Proceedings of the 32th Annual ACM
Symposium on Theory of Computing (STOC),} 2000.

\bibitem{key-24}B. Korte and L. Lovász, {}``Greedoids--a structural
framework for the greedy algorithm,'' \emph{in: W. Pulleybank (Ed.),
Progress in Combinatorial Optimization,} Academic Press, New York,
pp. 221-243, 1984.

\bibitem{key-23} B. Korte and L. Lovász, {}``Structural properties
of greedoids,'' \emph{Combinatorica,} Volume 3, Numbers 3-4, pp.
359-374, September, 1983. 

\bibitem{key-12} E. L. Lawler, C. U. Martel, {}``Computing maximal
{}``polymatroidal'' network flows,'' \emph{Mathematics of Operations
Research} 7, pp. 334-347, 1982.

\bibitem{key-6}L. Mirsky, \emph{Transversal Theory}, Mathematics
in Science and Engineering, Vol. 75, Academic Press, London, 1971.

\bibitem{key-16} S. Mohajer, S. Diggavi, C. Fragouli, and D. Tse,
{}``Transmission techniques for relay--interference networks'',
\emph{Proceedings of Allerton Conference on Communication, Control,
and Computing,} Illinois, September 2008.

\bibitem{key-20}H. Perfect, {}``A generalization of Rado's theorem
on independent transversals,'' \emph{Proc. Cambridge Phil. Soc.}
66, 513--515, 1969.

\bibitem{key-4}L. Qi, {}``Odd submodular functions, Dilworth functions
and discrete convex functions'', \emph{Mathematics of Operations
Research,} Vol. 13, No. 3, August 1988.

\bibitem{key-14}N. Ratnakar, G. Kramer, {}``The multicast capacity
of deterministic relay networks with no interference'', \emph{IEEE
Transactions on Information Theory} 52(6), pp. 2425-2432, 2006.

\bibitem{key-11} A. Schrijver, {}``A combinatorial algorithm minimizing
submodular functions in strongly polynomial time'', \emph{Journal
of Combinatorial Theory, Series B} 80, pp. 346-355\emph{,} 2000.

\bibitem{key-9}A. Schrijver, \emph{Combinatorial Optimization}, Springer,
Berlin, 2003.

\bibitem{key-5}D. J. A. Welsh, \emph{Matroid Theory}, Academic Press,
London, 1976.

\end{thebibliography}
\end{document}